\theoremstyle{definition}
\newtheorem{definition}{Definition}
\theoremstyle{plain}
\newtheorem{theorem}{Theorem}
\newtheorem{proposition}{Proposition}
\newtheorem{lemma}{Lemma}
\newtheorem{remark}{Remark}
\newtheorem{corollary}{Corollary}
\title{Rank equivalent and rank degenerate skew cyclic codes}
\author{Umberto Mart{\'i}nez-Pe\~{n}as \thanks{umberto@math.aau.dk}}
\affil{Department of Mathematical Sciences, Aalborg University, Denmark}
\begin{document}

\maketitle

\begin{abstract}
Two skew cyclic codes can be equivalent for the Hamming metric only if they have the same length, and only the zero code is degenerate. The situation is completely different for the rank metric, where lengths of codes correspond to the number of outgoing links from the source when applying the code on a network. We study rank equivalences between skew cyclic codes of different lengths and, with the aim of finding the skew cyclic code of smallest length that is rank equivalent to a given one, we define different types of length for a given skew cyclic code, relate them and compute them in most cases. We give different characterizations of rank degenerate skew cyclic codes using conventional polynomials and linearized polynomials. Some known results on the rank weight hierarchy of cyclic codes for some lengths are obtained as particular cases and extended to all lengths and to all skew cyclic codes. Finally, we prove that the smallest length of a linear code that is rank equivalent to a given skew cyclic code can be attained by a pseudo-skew cyclic code. Throughout the paper, we find new relations between linear skew cyclic codes and their Galois closures.

\textbf{Keywords:} Cyclic codes, Galois closure, linearized polynomial rings, network coding, rank degenerate, rank distance, rank equivalence, skew cyclic codes. 

\textbf{MSC:} 15B33, 94B15, 94B65. 
\end{abstract}

\section{Introduction} \label{introduction}

Codes in the rank metric have numerous applications, such as network coding \cite{rgrw, similarities, on-metrics} or cryptography \cite{ideals}. Among these codes, cyclic codes and skew cyclic codes have been considered in \cite{skewcyclic1, skewcyclic2, oggiercyclic, gabidulin, qcyclic, ideals, rootskew}, since they have simple algebraic descriptions and fast encoding and decoding algorithms. 

In the network coding model of \cite{rgrw, similarities, on-metrics}, the length of a rank-metric code corresponds to the number of outgoing links from the source or the number of packets needed to be sent by the source. Whereas it is obvious how to increase the length of a code and preserve at the same time its rank-metric properties, just by appending zeroes, it is not obvious whether a rank-metric code can be shortened (which would mean that it is degenerate) nor how. On the other hand, skew cyclic codes of smaller length have faster encoding and decoding algorithms.

In contrast with the Hamming-metric case, skew cyclic codes may be rank equivalent and have different lengths. Moreover, many non-zero skew cyclic codes are rank degenerate. 

The aim of this paper is to study rank equivalences between skew cyclic codes, focusing on equivalences that commute with the shifting operators, and study in which way skew cyclic codes can be rank degenerate.

Both problems have direct consequences on the generalized rank weights \cite{rgrw} of skew cyclic codes, which measure the information leakage by wiretapping links in the network, following the model of \cite{rgrw, similarities, on-metrics}. In particular, from our study we will obtain as particular cases the main results in \cite{oggiercyclic}, which give exact characterizations of cyclic codes with minimal generalized rank weights by means of their root sets when their length and field size are coprime. Our results have such consequences for all lengths, do not require computing roots (or may require computing roots of linearized polynomials, which can be done efficiently) and can be applied to all skew cyclic codes.

After some preliminaries in Section 2, the results in this paper are as follows: in Section 3, we define different types of length for skew cyclic codes, regarding the rank metric, and establish some inequalities between them. In Section 4, we use the polynomial description of the Galois closure of skew cyclic codes to compute most of the lengths defined in the previous section. In Section 5, we treat cyclic codes and relate their polynomial description to that of their Galois closures (by means of generator and check polynomials, idempotent generators and root sets), giving at the end several characterizations of rank degenerate cyclic codes and obtaining the results in \cite{oggiercyclic} as particular cases. In Section 6, we proceed as in the previous section, but for general skew cyclic codes, using their linearized-polynomial description. Finally in Section 7, we see that, although the linear code of minimum length that is rank equivalent to a given skew cyclic code need not be skew cyclic, it may be chosen as pseudo-skew cyclic in many cases.

\section{Definitions and preliminaries}

Fix a prime power $ q $ and positive integers $ m $ and $ n $, and let $ \mathbb{F}_{q^s} $ denote the finite field with $ q^s $ elements for a positive integer $ s $. A code $ C \subseteq \mathbb{F}_{q^m}^n $ will be called linear if it is $ \mathbb{F}_{q^m} $-linear. In general, linearity will mean $ \mathbb{F}_{q^m} $-linearity. The number $ n $ is called the length of the code $ C $.

We will denote the coordinate indices in $ \mathbb{F}_{q^m}^n $ from $ 0 $ to $ n-1 $, and consider them as integers modulo $ n $. Given a vector $ \mathbf{c} \in \mathbb{F}_{q^m}^n $, we define its rank weight \cite{gabidulin} as the dimension of the $ \mathbb{F}_q $-linear vector space generated by its components. We denote it by $ {\rm wt_R}(\mathbf{c}) $.

Define the shifting operator $ s_n : \mathbb{F}_{q^m}^n \longrightarrow \mathbb{F}_{q^m}^n $ as
$$ s_n(c_0,c_1, \ldots, c_{n-1}) = (c_{n-1}, c_0, \ldots, c_{n-2}), $$
for every $ \mathbf{c} = (c_0, c_1, \ldots, c_{n-1}) \in \mathbb{F}_{q^m}^n $. For any integer $ r \geq 0 $, define also the $ r $-th Frobenius and $ q^r $-shifting operators as $ \theta_r, \sigma_{r,n} : \mathbb{F}_{q^m}^n \longrightarrow \mathbb{F}_{q^m}^n $, respectively, where $ \theta_r $ acts by raising every component of a vector to the power $ q^r $ and $ \sigma_{r,n} = \theta_r \circ s_n $.

\begin{definition}
A code $ C \subseteq \mathbb{F}_{q^m}^n $ is cyclic if $ s_n(C) \subseteq C $, is $ q^r $-cyclic (or skew cyclic of order $ r $) if $ \sigma_{r,n}(C) \subseteq C $, and is Galois closed (over $ \mathbb{F}_q $) if $ \theta_1(C) \subseteq C $. 
\end{definition}

Observe that cyclic codes are $ q^0 $-cyclic (or $ q^m $-cyclic), that is, they are also skew cyclic. Skew cyclic codes were introduced in \cite{gabidulin} for $ r=1 $ and $ n=m $, and then independently in \cite{qcyclic} for $ r=1 $ and in \cite{skewcyclic1} for general parameters.

Denote $ [i] = q^i $, for any integer $ i \geq 0 $. Following \cite{stichtenoth}, for a given linear code $ C \subseteq \mathbb{F}_{q^m}^n $, we define its Galois closure as $ C^* = \sum_{i=0}^{m-1}C^{[i]} $, which is the smallest linear Galois closed space containing $ C $, and we also define $ C^0 = \bigcap_{i=0}^{m-1} C^{[i]} $, which is the biggest linear Galois closed space contained in $ C $. Recall from \cite[Lemma 2]{stichtenoth} that $ (C^\perp)^* = (C^0)^\perp $ and $ (C^\perp)^0 = (C^*)^\perp $.

On the other hand, if $ V \subseteq \mathbb{F}_{q^m}^n $ and $ V^\prime \subseteq \mathbb{F}_{q^m}^{n^\prime} $ are linear Galois closed spaces, we say that a map $ \phi : V \longrightarrow V^\prime $ is a rank equivalence if it is a vector space isomorphism and $ {\rm wt_R}(\phi(\mathbf{c})) = {\rm wt_R}(\mathbf{c}) $, for all $ \mathbf{c} \in V $. We say that two codes $ C $ and $ C^\prime $ are rank equivalent if there exists a rank equivalence between linear Galois closed spaces $ V $ and $ V^\prime $ that contain $ C $ and $ C^\prime $, respectively. This definition of rank equivalent linear codes was introduced in \cite[Definition 5]{similarities}.

By \cite[Theorem 5]{similarities}, rank equivalent codes not only perform exactly in the same way regarding rank error and erasure correction, but also information leakage on networks (see \cite[Remark 5]{similarities}). Moreover, rank equivalences can be easily described, as the following lemma states, which is a particular case of \cite[Theorem 5]{similarities}:

\begin{lemma} \label{rank equivalences}
A vector space isomorphism $ \phi : V \longrightarrow V^\prime $ between linear Galois closed spaces is a rank equivalence if, and only if, there exist $ \beta \in \mathbb{F}_{q^m}^* $ and an $ n \times n^\prime $ matrix $ A $ over $ \mathbb{F}_q $ that maps bijectively $ V $ to $ V^\prime $ and such that 
$$ \phi(\mathbf{c}) = \beta \mathbf{c} A, $$
for all $ \mathbf{c} \in V $. 
\end{lemma}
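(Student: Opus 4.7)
My plan is to prove the two implications of the equivalence separately.

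For the ``if'' direction, the goal is a direct verification. Given $\phi(\mathbf{c})=\beta\mathbf{c}A$ with $\beta\in\mathbb{F}_{q^m}^*$ and $A$ over $\mathbb{F}_q$, scaling by the unit $\beta$ preserves the $\mathbb{F}_q$-span of the components, and each entry of $\mathbf{c}A$ is an $\mathbb{F}_q$-linear combination of the entries of $\mathbf{c}$. These two observations together yield $\mathrm{wt_R}(\phi(\mathbf{c}))\le\mathrm{wt_R}(\mathbf{c})$. To get the reverse inequality I would exhibit $\phi^{-1}$ in the same matrix form: because $V$ and $V'$ are Galois closed they are spanned over $\mathbb{F}_{q^m}$ by their $\mathbb{F}_q$-forms $V\cap\mathbb{F}_q^n$ and $V'\cap\mathbb{F}_q^{n'}$, and right-multiplication by $A$ restricts to an $\mathbb{F}_q$-linear bijection between these two $\mathbb{F}_q$-forms. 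The inverse of that bijection is represented by some $n'\times n$ matrix $A^\dagger$ over $\mathbb{F}_q$, so $\phi^{-1}(\mathbf{c}')=\beta^{-1}\mathbf{c}'A^\dagger$ has the same shape, and applying the previous inequality to $\phi^{-1}$ closes the argument.

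For the ``only if'' direction, I would exploit the preservation of rank weight on rank-$1$ vectors. A nonzero $\mathbf{v}\in V$ has $\mathrm{wt_R}(\mathbf{v})=1$ exactly when $\mathbf{v}=\lambda\mathbf{a}$ with $\lambda\in\mathbb{F}_{q^m}^*$ and $\mathbf{a}\in\mathbb{F}_q^n$; since $V$ is an $\mathbb{F}_{q^m}$-subspace, dividing by $\lambda$ shows that in fact $\mathbf{a}\in V\cap\mathbb{F}_q^n$. Hence $\phi$ carries each $\mathbb{F}_{q^m}$-line $\mathbb{F}_{q^m}\mathbf{a}$ to some $\mathbb{F}_{q^m}$-line $\mathbb{F}_{q^m}\mathbf{a}'$ with $\mathbf{a}'\in V'\cap\mathbb{F}_q^{n'}$, and I may write $\phi(\mathbf{a})=\beta_{\mathbf{a}}\mathbf{a}'$ for a unit $\beta_{\mathbf{a}}\in\mathbb{F}_{q^m}^*$.

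The main obstacle is to show that the scalar $\beta_{\mathbf{a}}$ can be chosen independent of $\mathbf{a}$. For $\mathbf{a},\mathbf{b}\in V\cap\mathbb{F}_q^n$ that are $\mathbb{F}_q$-linearly independent and any $\alpha\in\mathbb{F}_{q^m}$, the vector $\mathbf{a}+\alpha\mathbf{b}$ has rank weight $1$ iff $\alpha\in\mathbb{F}_q$, and the same must hold for $\phi(\mathbf{a}+\alpha\mathbf{b})=\beta_{\mathbf{a}}\mathbf{a}'+\alpha\beta_{\mathbf{b}}\mathbf{b}'$; comparing the two rank-$1$ loci along the pencil pins down $\beta_{\mathbf{a}}=\beta_{\mathbf{b}}$. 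Taking $\beta$ to be this common value and $A$ to be any $n\times n'$ matrix over $\mathbb{F}_q$ extending the $\mathbb{F}_q$-linear map $\mathbf{a}\mapsto\mathbf{a}'$ from $V\cap\mathbb{F}_q^n$ to $V'\cap\mathbb{F}_q^{n'}$, Galois closure of $V$ ensures that $\phi(\mathbf{c})=\beta\mathbf{c}A$ on a spanning set, and hence on all of $V$ by $\mathbb{F}_{q^m}$-linearity. As the excerpt notes, this lemma is in any case a specialisation of \cite[Theorem 5]{similarities}, and the plan above is essentially the route followed there.
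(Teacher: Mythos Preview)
The paper does not give its own proof of this lemma: it is stated as ``a particular case of \cite[Theorem 5]{similarities}'' and left at that. Your proposal, by contrast, supplies a complete self-contained argument, and the overall route (rank-$1$ vectors correspond to $\mathbb{F}_{q^m}$-lines through $V\cap\mathbb{F}_q^n$; a rank-preserving isomorphism therefore permutes these lines; a pencil argument forces the scalars to be coherent) is correct and is indeed the standard one.

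One point to tighten: the pencil argument does not literally give $\beta_{\mathbf a}=\beta_{\mathbf b}$, only $\beta_{\mathbf b}/\beta_{\mathbf a}\in\mathbb{F}_q^*$. Writing $\gamma=\beta_{\mathbf b}/\beta_{\mathbf a}$, the image $\phi(\mathbf a+\alpha\mathbf b)=\beta_{\mathbf a}(\mathbf a'+\alpha\gamma\,\mathbf b')$ has rank weight $1$ exactly when $\alpha\gamma\in\mathbb{F}_q$, and matching this with the condition $\alpha\in\mathbb{F}_q$ forces $\gamma\in\mathbb{F}_q^*$, not $\gamma=1$. This is harmless, because each $\mathbf a'$ is only well-defined up to an $\mathbb{F}_q^*$-scalar; fixing $\beta:=\beta_{\mathbf a_0}$ for one basis vector and absorbing the $\mathbb{F}_q^*$-discrepancies into the choice of $\mathbf a'$ yields a genuine $\mathbb{F}_q$-linear map $\psi:V\cap\mathbb{F}_q^n\to V'\cap\mathbb{F}_q^{n'}$ with $\phi(\mathbf a)=\beta\,\psi(\mathbf a)$, which you then extend to the matrix $A$. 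With that small clarification your argument is complete.
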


As in \cite[Definition 6]{similarities}, we say that a linear code $ C \subseteq \mathbb{F}_{q^m}^n $ is rank degenerate if it is rank equivalent to a linear code with smaller length (see also \cite{slides} for an alternative equivalent definition of rank degenerate codes). In network coding this means that the code $ C $ may be implemented (with the same performance) on a network with less outgoing links or where the source needs to send less packets \cite{similarities, on-metrics}.


\section{Lengths and Galois closures}

Following the model in \cite{rgrw, similarities, on-metrics}, given a linear code $ C \subseteq \mathbb{F}_{q^m}^n $, the length $ n $ represents the number of outgoing links of a network where $ C $ is implemented, or the number of packets needed to be sent from the source, whereas $ m $ represents the packet length. If $ C $ is rank equivalent to a code with length $ n^\prime \neq n $, then it may be implemented as a linear code in a network with $ n^\prime $ outgoing links and with exactly the same performance \cite{similarities}. However we may want to implement $ C $ as a skew cyclic code, and hence we need it to be rank equivalent to a skew cyclic code of length $ n^\prime $.

On the other hand, encoding and decoding of skew cyclic codes is faster if the length is smaller, and we may always increase their lengths preserving their rank-metric properties just by appending zeroes on the right of each codeword. 

This motivates the following definitions:

\begin{definition} \label{definitions}
Given a linear code $ C \subseteq \mathbb{F}_{q^m}^n $, an element $ a \in \mathbb{F}_q^* = \mathbb{F}_q \setminus \{ 0 \} $ and an integer $ r \geq 0 $, we define the following numbers:
\begin{enumerate}
\item
The rank length, $ l_R(C) $, as the minimum $ n^\prime $ such that $ C $ is rank equivalent to a linear code of length $ n^\prime $.
\item
The $ r $-th skew length, $ l_{Sk,r}(C) $, as the minimum $ n^\prime $ such that $ C $ is rank equivalent to a linear skew cyclic code of order $ r $ and length $ n^\prime $, if such a code exists. We define $ l_{Sk,r}(C) = \infty $ otherwise.
\item
The $ (a,r) $-shift length, $ l_{Sh,a,r}(C) $, as the minimum $ n^\prime $ such that $ C $ is rank equivalent to a linear code of length $ n^\prime $ by a rank equivalence $ \phi $ such that $ a (\sigma_{r,n^\prime} \circ \phi) = \phi \circ \sigma_{r,n} $, if such a code exists. We define $ l_{Sh,a,r}(C) = \infty $ otherwise.
\item
The period length, $ l_P(C) $, as the minimum integer $ 1 \leq p \leq n $ that generates the ideal modulo $ n $ defined as $ \{ p^\prime \mid c_{i+p^\prime} = c_i, \forall i, \forall (c_0, c_1, \ldots, c_{n-1}) \in C \} $, which necessarily divides $ n $.
\end{enumerate}
We also say that an integer $ 1 \leq p \leq n $ is an $ a $-period of $ C $ if $ c_{i+p} = a c_i $, for all $ i = 0,1,2 \ldots, n-1 $ and all $ (c_0, c_1, \ldots, c_{n-1}) \in C  $.
\end{definition}

\begin{remark}
In the definition of $ l_{Sh,a,r}(C) $, the rank equivalence $ \phi $ that commutes with the $ q^r $-shifting operators is supposed to be defined between linear Galois closed spaces that are cyclic, in order to make sense (see Lemma \ref{galois cyclic} below).
\end{remark}

\begin{remark} \label{remark shift length}
Assume that $ V $ and $ V^\prime $ are linear cyclic Galois closed spaces. If a rank equivalence $ \phi : V \longrightarrow V^\prime $ satisfies that $ a (\sigma_{r,n^\prime} \circ \phi) = \phi \circ \sigma_{r,n} $, for some $ r \geq 0 $ and $ a \in \mathbb{F}_q^* $, then 
$$ \phi(\sigma_{s,n}(\mathbf{c})) = \beta^{1-[s-r]} ( \beta \sigma_{r,n}(\mathbf{c}) A)^{[s-r]} $$
$$ = \beta^{1-[s-r]} (a \sigma_{r,n^\prime}(\phi(\mathbf{c})))^{[s-r]} = (\beta^{1- [s-r]} a) \sigma_{s,n^\prime}(\phi(\mathbf{c})), $$
for all $ \mathbf{c} \in V $, where $ A $ and $ \beta $ are as in Lemma \ref{rank equivalences}. Hence, $ \phi $ sends $ q^s $-cyclic codes to $ q^s $-cyclic codes, for any $ s \geq 0 $. 
\end{remark}

We have the following on the skew cyclic structure of linear Galois closed spaces:

\begin{lemma} \label{galois cyclic}
If $ V \subseteq \mathbb{F}_{q^m}^n $ is linear and Galois closed, then it is skew cyclic of some order if, and only if, it is skew cyclic of all orders. Given a linear code $ C \subseteq \mathbb{F}_{q^m}^n $, if it is skew cyclic of some order, then $ C^* $ and $ C^0 $ are skew cyclic (of all orders).
\end{lemma}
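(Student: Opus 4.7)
The proof rests on two elementary facts I would establish first: (i) on any linear Galois closed subspace $V \subseteq \mathbb{F}_{q^m}^n$, the Frobenius operator $\theta_r$ restricts to a bijection (its inverse being $\theta_{m-r}$, since $\theta_m = \mathrm{id}$); and (ii) the operators $s_n$, $\theta_r$, $\theta_i$ pairwise commute, which one checks by a one-line coordinate computation. Together these say that $\sigma_{r,n}$ and $\theta_i$ commute as endomorphisms of $\mathbb{F}_{q^m}^n$, and that on a Galois closed $V$ the map $\theta_r$ is invertible.

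For the first assertion of the lemma, one direction is immediate. For the converse, suppose $V$ is Galois closed and $\sigma_{r,n}(V) \subseteq V$. Writing $\sigma_{r,n} = \theta_r \circ s_n$ and applying $\theta_{m-r}$ to both sides (this is harmless because $\theta_{m-r}(V) = V$ by Galois closedness) yields $s_n(V) \subseteq V$, so $V$ is cyclic in the classical sense. Composing $s_n(V) \subseteq V$ with $\theta_{r'}(V) \subseteq V$ then gives $\sigma_{r',n}(V) = \theta_{r'}(s_n(V)) \subseteq V$ for every $r' \geq 0$, proving that $V$ is $q^{r'}$-cyclic of every order.

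For the second assertion, I would first note that $C^* = \sum_{i=0}^{m-1} C^{[i]}$ and $C^0 = \bigcap_{i=0}^{m-1} C^{[i]}$ are Galois closed by construction, so by the first part it is enough to exhibit one order of skew cyclicity for each—say the order $r$ we are given. The key intermediate step is that every Frobenius shift $C^{[i]}$ is itself $q^r$-cyclic whenever $C$ is: using the commutativity $\sigma_{r,n} \circ \theta_i = \theta_i \circ \sigma_{r,n}$, one gets
$$\sigma_{r,n}(C^{[i]}) = \theta_i(\sigma_{r,n}(C)) \subseteq \theta_i(C) = C^{[i]}.$$
Then $\sigma_{r,n}(C^*) = \sum_i \sigma_{r,n}(C^{[i]}) \subseteq \sum_i C^{[i]} = C^*$, and since $\sigma_{r,n}$ is a bijection on $\mathbb{F}_{q^m}^n$, also $\sigma_{r,n}(C^0) \subseteq \bigcap_i \sigma_{r,n}(C^{[i]}) \subseteq \bigcap_i C^{[i]} = C^0$. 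Invoking the first part upgrades both to skew cyclicity of every order.

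The only real subtlety, and the step I would watch most carefully, is the inversion argument that turns $\sigma_{r,n}(V) \subseteq V$ into $s_n(V) \subseteq V$; this is where Galois closedness is indispensable, since without it $\theta_r$ need not even map $V$ into itself. Everything else is formal bookkeeping around the commutation relations, and $\theta_m = \mathrm{id}$.
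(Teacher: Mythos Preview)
Your proof is correct and follows essentially the same approach as the paper: both arguments hinge on the fact that Galois closedness forces $\theta_r(V)=V$ for every $r$, which lets one strip off or insert the Frobenius factor in $\sigma_{r,n}=\theta_r\circ s_n$ at will, and both handle $C^*$ and $C^0$ via the commutation $\sigma_{r,n}(C^{[i]})=\sigma_{r,n}(C)^{[i]}$. One minor remark: the bijectivity of $\sigma_{r,n}$ you invoke for the $C^0$ case is not needed, since $f\bigl(\bigcap_i A_i\bigr)\subseteq\bigcap_i f(A_i)$ holds for any map $f$.
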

\begin{proof}
Since $ \theta_1(V) \subseteq V $, it holds that $ \theta_r(V) = V $, for all $ r \geq 0 $. Hence, if we fix two integers $ r,s \geq 0 $, we have that $ \sigma_{r,n}(V) \subseteq V $ if, and only if, $ \sigma_{s,n}(V) \subseteq V $, and the first statement follows.

For the second statement, assume that $ C $ is $ q^r $-cyclic. It holds that
$$ \sigma_{r,n}(C^*) = \sum_{i=0}^{m-1} \sigma_{r,n}(C^{[i]}) = \sum_{i=0}^{m-1} \sigma_{r,n}(C)^{[i]} \subseteq \sum_{i=0}^{m-1} C^{[i]} = C^*, $$
and similarly for $ C^0 $, and we are done.
\end{proof}

By the discussion after \cite[Lemma 10]{similarities}, it follows that $ l_R(C) $ is equal to the $ k $-th generalized rank weight of $ C $ \cite[Definition 2]{rgrw}, for $ k = \dim(C) $, which is the dimension of $ C^* $ by \cite[Corollary 17]{slides}. That is,
\begin{equation} \label{rank length last grw}
 l_R(C) = d_{R,k}(C) = \dim(C^*).
\end{equation}
We may establish now the following relations between the different types of lengths:

\begin{proposition} \label{inequalities}
For any integers $ r,s \geq 0 $, a linear $ q^s $-cyclic code $ C \subseteq \mathbb{F}_{q^m}^n $ and an element $ a \in \mathbb{F}_q^* $, it holds that
\begin{enumerate}
\item
$ l_R(C) \leq l_{Sk,s}(C) \leq l_{Sh,a,r}(C) $.
\item
$ l_{Sh,1,r}(C) \leq l_P(C) $.
\item
$ l_R(C) = l_R(C^*) $, $ l_{Sh,a,r}(C) = l_{Sh,a,r}(C^*) $ and $ l_P(C) = l_P(C^*) $.
\item
$ l_{Sk,r}(C) \geq l_{Sk,r}(C^*) = l_R(C) $.
\end{enumerate}
\end{proposition}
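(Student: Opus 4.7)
I would prove the four items in order, applying Lemma~\ref{rank equivalences}, Lemma~\ref{galois cyclic}, Remark~\ref{remark shift length} and equation~(\ref{rank length last grw}). For item~1, the first inequality is immediate because any rank equivalence witnessing $l_{Sk,s}(C)$ also witnesses $l_R(C)$; for the second, Remark~\ref{remark shift length} shows that any rank equivalence $\phi$ realising $l_{Sh,a,r}(C)$ satisfies $\phi \circ \sigma_{s,n} = (\beta^{1-[s-r]} a)\, \sigma_{s,n'} \circ \phi$, so it sends the $q^s$-cyclic code $C$ to a $q^s$-cyclic code of the same length $n'=l_{Sh,a,r}(C)$.

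For item~2, I would set $p = l_P(C)$, which divides $n$, and first note that the $p$-periodicity condition $c_{i+p}=c_i$ is preserved under $\theta_1$ and under $\mathbb{F}_{q^m}$-linear combinations, hence holds on every element of $C^* = \sum_{i=0}^{m-1} C^{[i]}$. The truncation $\phi(\mathbf{c}) = \mathbf{c}A$, with $A$ the $n \times p$ $\mathbb{F}_q$-matrix selecting the first $p$ columns, is then injective on $C^*$ and preserves rank weight, since the entries of $\mathbf{c}$ and $\phi(\mathbf{c})$ span the same $\mathbb{F}_q$-subspace. By Lemma~\ref{rank equivalences} with $\beta=1$ it is therefore a rank equivalence between linear Galois closed spaces, and a short computation using $p \mid n$ gives $\phi \circ \sigma_{r,n} = \sigma_{r,p} \circ \phi$, so $l_{Sh,1,r}(C) \leq p$.

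For item~3, $l_R(C) = l_R(C^*)$ is immediate from~(\ref{rank length last grw}) and $(C^*)^* = C^*$. For $l_{Sh,a,r}$, a linear Galois closed space contains $C$ iff it contains $C^*$ (by the minimality of $C^*$), so the data over which the two infima are taken coincide. For $l_P$, the argument of item~2 shows that every period of $C$ is a period of $C^*$, while the converse inclusion of period ideals is trivial from $C \subseteq C^*$.

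For item~4, the core step is $l_{Sk,r}(C) \geq l_{Sk,r}(C^*)$. Given a rank equivalence $\phi: V \to V'$ with $\phi(C) = C'$ a $q^r$-cyclic code of length $n'$, I would prove $\phi(C^*) = \phi(C)^*$: since $A \in \mathbb{F}_q^{n \times n'}$, the identity $\phi(\mathbf{c})^{[1]} = \beta^{[1]-1} \phi(\mathbf{c}^{[1]})$ shows that $\phi(C^*)$ is Galois closed, hence contains $\phi(C)^*$, and the reverse inclusion follows by applying the same reasoning to $\phi^{-1}$. Lemma~\ref{galois cyclic} then makes $\phi(C^*) = (C')^*$ a $q^r$-cyclic code of length $n'$, proving $l_{Sk,r}(C^*) \leq n'$. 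The equality $l_{Sk,r}(C^*) = l_R(C)$ splits into $\geq$ (from item~1 applied to $C^*$ combined with item~3) and $\leq$, using the standard fact that a Galois closed space of dimension $k$ is rank equivalent to $\mathbb{F}_{q^m}^k$ via a basis of its $\mathbb{F}_q$-rational part, the latter being trivially $q^r$-cyclic. The principal obstacle is precisely the identity $\phi(C^*) = \phi(C)^*$: because $\beta$ need not lie in $\mathbb{F}_q$, $\phi$ does not literally commute with Frobenius, and one must sidestep this by verifying Galois closure of $\phi(C^*)$ intrinsically rather than by transporting it from $C^*$.
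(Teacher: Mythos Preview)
Your proposal is correct and follows essentially the same route as the paper's proof: item~1 via Remark~\ref{remark shift length}, item~2 via the truncation/puncturing map onto the first $l_P(C)$ coordinates, item~3 via the observation that a Galois closed $V$ contains $C$ iff it contains $C^*$, and item~4 via the identity $\phi(C^*)=\phi(C)^*$ together with Lemma~\ref{galois cyclic} and the fact that $\mathbb{F}_{q^m}^{\dim C^*}$ is trivially $q^r$-cyclic. The only organisational difference is that you establish $\phi(C^*)=\phi(C)^*$ explicitly in item~4 (and argue item~3 directly from the equality of admissible ambient spaces), whereas the paper invokes ``$\phi$ preserves Galois closures'' already in item~3 and then reuses it for item~4; your version is more detailed but not substantively different.
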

\begin{proof}
In item 1, the first inequality is trivial and the second one follows from Remark \ref{remark shift length}. 

To prove item 2, we see that puncturing in the first $ l_P(C) $ coordinates gives a rank equivalence from $ C^* $ to a linear Galois closed subspace of $ \mathbb{F}_{q^m}^{l_P(C)} $ that commutes with $ \sigma_{r,n} $, and the inequality follows.

We now prove item 3. First, $ l_R(C) = \dim(C^*) = \dim(C^{**}) = l_R(C^*) $ by (\ref{rank length last grw}). Now, if $ \phi $ is a rank equivalence between $ C $ and a skew cyclic code $ C^\prime $ such that $ a (\sigma_{r,n^\prime} \circ \phi) = \phi \circ \sigma_{r,n} $, then $ \phi $ preserves Galois closures, and hence $ C^* $ is rank equivalent to $ C^{\prime *} $ by $ \phi $. It follows that $ l_{Sh,a,r}(C) \geq l_{Sh,a,r}(C^*) $, being the reversed inequality obvious. On the other hand, it follows from the definitions that $ l_P(C) = l_P(C^*) $. 

Finally, item 4 is proven in the same way as the fact that $ l_{Sh,a,r}(C) \geq l_{Sh,a,r}(C^*) $. The fact that $ l_{Sk,r}(C^*) = l_R(C^*) $ follows from the definitions.
\end{proof}

\begin{corollary} \label{singleton}
For any linear skew cyclic code $ C \subseteq \mathbb{F}_{q^m}^n $ and any $ i=R $, $ (Sk,r) $, $ (Sh,a,r) $, $ P $, we have the following Singleton-type bounds:
$$ d_R(C) \leq l_i(C) - k + 1, $$
where $ d_R $ denotes the minimum rank distance.
\end{corollary}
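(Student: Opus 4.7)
The approach is to deduce every case from the classical Singleton bound applied to the ``shortest'' rank-equivalent code, and then transport it along the inequalities already established in Proposition \ref{inequalities}. The ingredient I will need is that for any linear code $ C^\prime \subseteq \mathbb{F}_{q^m}^{n^\prime} $ of dimension $ k $, the bound $ d_R(C^\prime) \leq d_H(C^\prime) \leq n^\prime - k + 1 $ holds, since the rank weight of a vector is never larger than its Hamming weight and the second inequality is the classical Hamming Singleton bound.

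I would first settle the case $ i = R $. By definition of $ l_R(C) $, there exists a linear code $ C^\prime \subseteq \mathbb{F}_{q^m}^{l_R(C)} $ that is rank equivalent to $ C $, via some rank equivalence $ \phi : V \longrightarrow V^\prime $ between linear Galois closed spaces containing $ C $ and $ C^\prime $, respectively, with $ \phi(C) = C^\prime $. Because $ \phi $ is a vector space isomorphism preserving rank weight, one has $ \dim(C^\prime) = k $ and $ d_R(C^\prime) = d_R(C) $; applying the classical Singleton bound to $ C^\prime $ then yields
$$ d_R(C) = d_R(C^\prime) \leq l_R(C) - k + 1. $$

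For the remaining indices I would simply invoke Proposition \ref{inequalities}. Since $ C $ is skew cyclic, it is $ q^s $-cyclic for some $ s $, and item 1 of that proposition gives $ l_R(C) \leq l_{Sk,s}(C) \leq l_{Sh,a,r}(C) $, while combining items 1 and 2 gives $ l_R(C) \leq l_{Sh,1,r}(C) \leq l_P(C) $. In every case $ l_R(C) \leq l_i(C) $ (the bound being vacuous when $ l_i(C) = \infty $), so the inequality already proved for $ i = R $ propagates to the other three indices.

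There is no real obstacle here once Proposition \ref{inequalities} is available; the whole argument is a short reduction. The only point that needs a moment's care is to observe that a rank equivalence preserves not only dimension but the entire rank-weight function, so that the minimum rank distance is carried across to the shorter code $ C^\prime $ that witnesses $ l_R(C) $.
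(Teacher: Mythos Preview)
Your proof is correct and follows essentially the same approach as the paper's own proof: establish the case $i=R$ via the classical Singleton bound applied to a rank-equivalent code of length $l_R(C)$, then propagate to the remaining indices using the inequalities of Proposition \ref{inequalities}. The paper's argument is just a terser version of what you wrote.
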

\begin{proof}
The case $ i = R $ follows from the classical Singleton bound \cite{gabidulin} and the fact that there exists a linear code of length $ l_R(C) $ that is rank equivalent to $ C $. The rest of the bounds follow from this case and the previous proposition.
\end{proof}

\begin{remark}
If we denote by $ d_{R,r}(C) $ the $ r $-th generalized rank weight of a linear skew cyclic code $ C \subseteq \mathbb{F}_{q^m}^n $ (see \cite[Definition 2]{rgrw}), for $ 1 \leq r \leq k $, then using that $ l_R(C) = d_{R,k}(C) $, the monotonicity of generalized rank weights \cite[Lemma 4]{rgrw} and Proposition \ref{inequalities}, we obtain the following generalized Singleton-type bounds:
$$ d_{R,r}(C) \leq l_i(C) - k + r. $$
\end{remark}

\section{Using the conventional-polynomial representation of Galois closures}

It is well-known \cite[Chapter 4]{pless} that a linear cyclic code $ C \subseteq \mathbb{F}_{q^m}^n $ can be represented as an ideal $ C(x) $ in the quotient ring $ \mathbb{F}_{q^m}[x] /(x^n  - 1) $, and it has unique polynomials $ g(x), h(x) \in \mathbb{F}_{q^m}[x] $, called generator and check polynomials, respectively, such that $ g(x) $ is monic and of minimal degree among those with residue class in $ C(x) $, and $ g(x) h(x) = x^n - 1 $. Moreover, $ g(x) $ generates $ C(x) $. 

There are two more descriptions of linear cyclic codes. If $ g(x) $ and $ h(x) $ are coprime (which holds if $ q $ and $ n $ are coprime), then there exists a unique idempotent polynomial $ e(x) \in C(x) $ (that is, $ e(x)^2 = e(x) $ in $ \mathbb{F}_{q^m}[x]/(x^n-1) $) that generates $ C(x) $ \cite[Theorem 4.3.2]{pless}. 

On the other hand, for a given polynomial $ f(x) \in \mathbb{F}_{q^m}[x] $, let $ Z(f(x)) $ denote the set of its roots in its splitting field. If $ q $ and $ n $ are coprime, then we may associate $ C $ with the root set $ Z(g(x)) $. This gives a bijective correspondence between linear cyclic codes in $ \mathbb{F}_{q^m}^n $ and root sets of divisors of $ x^n-1  $ \cite[Section 4.4]{pless}.

In this section we will focus on this conventional-polynomial representation of linear cyclic codes (in contrast with the linearized-polynomial representation in the following sections), which may be used for the Galois closure of any linear skew cyclic code by Lemma \ref{galois cyclic}.

Observe that the $ r $-th Frobenius map $ \theta_r $ induces a ring automorphism $ \theta_r : \mathbb{F}_{q^m}[x] \longrightarrow \mathbb{F}_{q^m}[x] $ given by 
\begin{equation} \label{Frobenius ring auto}
\theta_r(f_0 + f_1x + \cdots + f_dx^d) = f_0^{[r]} + f_1^{[r]} x + \cdots + f_d^{[r]}x^d,
\end{equation}
for all $ f_0 + f_1x + \cdots + f_dx^d \in \mathbb{F}_{q^m}[x] $. Since $ \theta_r(x^n - 1) = x^n -1 $, it induces a ring automorphism of the quotient ring $ \mathbb{F}_{q^m}[x]/(x^n-1) $. 

Recall from \cite[Exercise 243]{pless} that, again if $ g(x) $ and $ h(x) $ are coprime, then there exists a unique linear cyclic code $ C^c $ such that $ C \oplus C^c = \mathbb{F}_{q^m}^n $, called the cyclic complementary code of $ C $. Its generator and check polynomials are $ h(x) $ and $ g(x) $, respectively, its idempotent generator is $ 1-e(x) $ and its root set is $ Z(h(x)) = Z(x^n-1) \setminus Z(g(x)) $.

We have the following expected characterizations:

\begin{lemma} \label{polynomials}
Given a linear cyclic code $ C \subseteq \mathbb{F}_{q^m}^n $ as in the beginning of this section, the following are equivalent:
\begin{enumerate}
\item
$ C $ is Galois closed.
\item
$ g(x) \in \mathbb{F}_q[x] $.
\item
$ h(x) \in \mathbb{F}_q[x] $.
\item
(If $ g(x) $ and $ h(x) $ are coprime) $ e(x) \in \mathbb{F}_q[x] $.
\item
(If $ q $ and $ n $ are coprime) $ Z(g(x))^q = Z(g(x)) $.
\item
(If $ g(x) $ and $ h(x) $ are coprime) $ C^c $ is Galois closed.
\end{enumerate}
\end{lemma}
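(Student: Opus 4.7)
The heart of the argument is a single transfer observation: if a codeword $\mathbf{c}\in C$ corresponds to a polynomial $c(x)\in C(x)\subseteq \mathbb{F}_{q^m}[x]/(x^n-1)$, then $\theta_r(\mathbf{c})$ corresponds precisely to $\theta_r(c(x))$ in the sense of (\ref{Frobenius ring auto}). So my plan is first to record, as a one-line preliminary, that $\theta_1$ (extended as in (\ref{Frobenius ring auto})) is a ring automorphism of $\mathbb{F}_{q^m}[x]/(x^n-1)$ fixing the subring $\mathbb{F}_q[x]/(x^n-1)$ elementwise, so that $C$ is Galois closed if and only if the ideal $C(x)$ is stable under $\theta_1$. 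Once that identification is in place, all six equivalences reduce to ``apply $\theta_1$ and use uniqueness.''

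I would then prove the central implication $1\Leftrightarrow 2$. The ideal $\theta_1(C(x))$ is generated by $\theta_1(g(x))$, which is again monic and of the same degree as $g(x)$. If $C$ is Galois closed, then $\theta_1(g(x))\in C(x)$ is monic of minimal degree, so by the uniqueness of $g(x)$ we must have $\theta_1(g(x))=g(x)$, i.e., every coefficient of $g$ is fixed by the $q$-power Frobenius on $\mathbb{F}_{q^m}$, which is exactly $g(x)\in\mathbb{F}_q[x]$. Conversely, if $g(x)\in\mathbb{F}_q[x]$, then $\theta_1$ fixes $g(x)$, hence fixes the ideal it generates, so $\theta_1(C(x))\subseteq C(x)$.

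For $2\Leftrightarrow 3$, I would note that $g(x)h(x)=x^n-1\in\mathbb{F}_q[x]$ with both factors monic; since $\mathbb{F}_q[x]$ is a UFD in which $x^n-1$ factors uniquely, $g\in\mathbb{F}_q[x]$ iff $h\in\mathbb{F}_q[x]$ (equivalently, apply $\theta_1$ to the product and use uniqueness of $g,h$ as monic factors). For $2\Leftrightarrow 4$ under the coprime hypothesis, $\theta_1(e(x))$ is still idempotent in $\mathbb{F}_{q^m}[x]/(x^n-1)$ and generates $\theta_1(C(x))$; if $C$ is Galois closed then $\theta_1(C(x))=C(x)$ and the uniqueness of the idempotent generator gives $\theta_1(e(x))=e(x)$, so $e(x)\in\mathbb{F}_q[x]$, while the converse is immediate since $C(x)=(e(x))$ is then $\theta_1$-stable. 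For $2\Leftrightarrow 5$ under the coprimality of $q$ and $n$, the factorization of $g(x)$ into linear factors over its splitting field shows, by standard Galois theory for finite fields, that $g(x)\in\mathbb{F}_q[x]$ iff the Frobenius $\alpha\mapsto\alpha^q$ permutes the roots, i.e., $Z(g(x))^q=Z(g(x))$. Finally, $2\Leftrightarrow 6$ follows at once from the description recalled just before the lemma: $C^c$ has generator polynomial $h(x)$ and check polynomial $g(x)$, so applying $1\Leftrightarrow 2$ (or $1\Leftrightarrow 3$) to $C^c$ gives the claim.

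There is no real obstacle here; the only subtle point to get right is that each of the objects $g(x)$, $h(x)$, $e(x)$, and $Z(g(x))$ is characterized \emph{uniquely} by properties that are manifestly preserved by the ring automorphism $\theta_1$, so that $\theta_1$-invariance of $C(x)$ forces $\theta_1$-invariance of that object, which in turn is equivalent to having coefficients (or roots) in the prescribed subfield.
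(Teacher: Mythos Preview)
Your proof is correct and follows essentially the same approach as the paper: both arguments rest on the observation that $\theta_1$ sends the generator polynomial, check polynomial, idempotent generator, and root set of $C$ to the corresponding objects of $\theta_1(C)$, after which the equivalences follow from the uniqueness of each of these objects. Your write-up is slightly more explicit about the converse directions and about the preliminary identification of $\theta_1$-stability of $C$ with $\theta_1$-stability of the ideal $C(x)$, but the underlying idea is identical to the paper's.
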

\begin{proof}
It is enough to note the following: 

\begin{enumerate}
\item
$ \theta_1(C) $ has $ \theta_1(g(x)) $ as generator polynomial, since it generates $ \theta_1(C)(x) $ and $ \theta_1 $ preserves monic polynomials and degrees. Hence the equivalence between items 1 and 2 follows.
\item 
$ \theta_1(C) $ has $ \theta_1(h(x)) $ as check polynomial, by the fact that $ x^n-1 = \theta_1(x^n-1) = \theta_1(g(x)) \theta_1(h(x)) $ and the previous item in this proof. Hence the equivalence between items 1 and 3 follows.
\item
If $ g(x) $ and $ h(x) $ are coprime, then $ \theta_1(C) $ has $ \theta_1(e(x)) $ as idempotent generator, since $ \theta_1(e(x)) $ is again idempotent, generates $ \theta_1(C)(x) $ and the idempotent generator is unique \cite[Theorem 4.3.2]{pless}. Hence the equivalence between items 1 and 4 follows.
\item
If $ q $ and $ n $ are coprime, then $ \theta_1(C) $ corresponds to the root set $ Z(g(x))^q $, since $ Z(\theta_1(g(x))) = Z(g(x))^q $. Hence the equivalence between items 1 and 5 follows.
\end{enumerate}
Finally, the equivalence between items 1 and 6 follows from the fact that $ h(x) $ and $ g(x) $ are the generator and check polynomials of $ C^c $, respectively.
\end{proof}

We now characterize rank equivalences that commute with the $ q^r $-shifting operators in terms of generator matrices. For a matrix $ X $ over $ \mathbb{F}_{q^m} $ with $ n $ columns, we define $ \sigma_{r,n}(X) $ as the matrix such that its $ i $-th row is the $ q^r $-shifted $ i $-th row of $ X $. 

Recall from \cite[Lemma 1]{stichtenoth} that linear Galois closed spaces are those with a basis of vectors in $ \mathbb{F}_q^n $, that is, a generator matrix with coefficients in $ \mathbb{F}_q $. 

\begin{proposition} \label{char commutes}
For linear cyclic Galois closed spaces $ V \subseteq \mathbb{F}_{q^m}^n $ and $ V^\prime \subseteq \mathbb{F}_{q^m}^{n^\prime} $, and a rank equivalence $ \phi : V \longrightarrow V^\prime $, where we define $ \beta $ and $ A $ as in Lemma \ref{rank equivalences}, the following are equivalent for a given $ a \in \mathbb{F}_q^* $ and $ r \geq 0 $:
\begin{enumerate}
\item
$ a (\sigma_{r,n^\prime} \circ \phi ) = \phi \circ \sigma_{r,n} $.
\item
If $ G $ is a generator matrix of $ V $, then $ \sigma_{r,n}(G) A = a \beta^{[r]-1} \theta_r(G) s_{n^\prime}(A) $.
\end{enumerate} 
In particular, choosing $ G $ with coefficients in $ \mathbb{F}_q $, the second item reads $ s_n(G) A = a \beta^{[r]-1} G s_{n^\prime}(A) $. Therefore, if any of the previous items hold, then $ \beta^{[r]-1} = b \in \mathbb{F}_q^* $.
\end{proposition}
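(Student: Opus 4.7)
The plan is to use the explicit form $\phi(\mathbf{c}) = \beta\mathbf{c}A$ from Lemma \ref{rank equivalences}, expand both sides of item 1 into matrix expressions, and then specialize to the rows of a generator matrix $G$ of $V$. The crucial facts are that $A$ has entries in $\mathbb{F}_q$ (hence $\theta_r(A)=A$) and that the shift $s_{n'}$ acts on row vectors by right multiplication by a permutation matrix with entries in $\mathbb{F}_q$.

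For the direction $(1)\Rightarrow(2)$, I would compute, for an arbitrary $\mathbf{c}\in V$,
\begin{align*}
\phi(\sigma_{r,n}(\mathbf{c})) &= \beta\,\sigma_{r,n}(\mathbf{c})\,A, \\
a\,\sigma_{r,n'}(\phi(\mathbf{c})) &= a\,\theta_r(s_{n'}(\beta\mathbf{c}A)) = a\,\beta^{[r]}\,\theta_r(\mathbf{c})\,s_{n'}(A),
\end{align*}
equate the two, cancel $\beta$, and stack the resulting row identities over the rows of $G$ to obtain item 2 verbatim. The converse $(2)\Rightarrow(1)$ is then easy once one notices that the maps $\mathbf{c}\mapsto \sigma_{r,n}(\mathbf{c})A$ and $\mathbf{c}\mapsto \theta_r(\mathbf{c})s_{n'}(A)$ are both $\mathbb{F}_{q^m}$-semilinear with the common twist $\theta_r$; a semilinear identity that holds on an $\mathbb{F}_{q^m}$-spanning set (the rows of $G$) holds on all of $V$, and multiplying back by $\beta$ recovers item 1.

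For the specialization, a generator matrix $G$ with entries in $\mathbb{F}_q$ exists because $V$ is Galois closed, by \cite[Lemma 1]{stichtenoth}; then $\theta_r(G)=G$ and $\sigma_{r,n}(G)=s_n(G)$, so item 2 collapses to $s_n(G)\,A = a\beta^{[r]-1}\,G\,s_{n'}(A)$. To conclude that $\beta^{[r]-1}\in\mathbb{F}_q^*$, both matrices $s_n(G)A$ and $G\,s_{n'}(A)$ have entries in $\mathbb{F}_q$; the second one is nonzero whenever $V\neq 0$, since its rows are the images of the rows of $G$ under the composition of the isomorphism $\mathbf{c}\mapsto \mathbf{c}A = \beta^{-1}\phi(\mathbf{c})$ from $V$ to $V'$ with an invertible permutation. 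Picking any entry where $G\,s_{n'}(A)$ is nonzero and reading off the coefficient forces $a\beta^{[r]-1}\in\mathbb{F}_q^*$, and dividing by $a\in\mathbb{F}_q^*$ yields the claim.

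The main obstacle here is genuinely just bookkeeping: keeping track of which matrices have $\mathbb{F}_q$-entries (and are therefore fixed by $\theta_r$ and commute with it when multiplied on the right) versus which have $\mathbb{F}_{q^m}$-entries, so that the expansion of $\sigma_{r,n'}(\beta\mathbf{c}A)$ cleanly factors as $\beta^{[r]}\,\theta_r(\mathbf{c})\,s_{n'}(A)$ and the scalar $\beta^{[r]-1}$ appears transparently in the final matrix identity.
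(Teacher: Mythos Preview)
Your proposal is correct and follows essentially the same approach as the paper: the paper's proof simply states that for each $\mathbf{c}\in V$ condition~1 is equivalent (``by a straightforward computation'') to $\sigma_{r,n}(\mathbf{c})A = a\beta^{[r]-1}\theta_r(\mathbf{c})s_{n'}(A)$, and then invokes linearity of $\phi$ and semi-linearity of the $q^r$-shifting operators to pass to the generator-matrix identity of item~2. You have just made that straightforward computation and the semilinear extension argument explicit, and additionally supplied the justification for $\beta^{[r]-1}\in\mathbb{F}_q^*$ (which the paper leaves implicit).
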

\begin{proof}
For any vector $ \mathbf{c} \in V $, it is a straightforward computation to verify that condition 1 is equivalent to
$$ \sigma_{r,n}(\mathbf{c}) A = a \beta^{[r]-1} \theta_r(\mathbf{c}) s_{n^\prime}(A). $$
Therefore, the equivalence between items 1 and 2 follows from the linearity of $ \phi $ and the semi-linearity of the $ q^r $-shifting operators. 
\end{proof}

\begin{remark}
If $ V = \mathbb{F}_{q^m}^n $, then item 2 means that $ s_{n^\prime}(\mathbf{a}_i) = a^{-1} \beta^{1 - [r]} \mathbf{a}_{i+1} $, where indices $ i $ are taken modulo $ n $, and $ \mathbf{a}_i $ denotes the $ i $-th row in $ A $.

On the other hand, if $ V^\prime = \mathbb{F}_{q^m}^{n^\prime} $, then item 2 means that $ s_n(\mathbf{a}^\prime_i)= a \beta^{[r] - 1} \mathbf{a}^\prime_{i+1} $, where $ \mathbf{a}^\prime_i $ is the $ i $-th row of a matrix $ A^\prime $ with $ AA^\prime = I $ (observe that $ n^\prime \leq n $ in this case).
\end{remark}

On the other hand, the check polynomials of $ V $ and $ V^\prime $ can be easily used to see whether there exists such a rank equivalence between them, which is the first main result of this section:

\begin{theorem}
Let $ V \subseteq \mathbb{F}_{q^m}^n $ and $ V^\prime \subseteq \mathbb{F}_{q^m}^{n^\prime} $ be linear cyclic Galois closed spaces with the same dimension $ k $ and check polynomials $ h(x) $ and $ h^\prime(x) $, respectively. Given $ a \in \mathbb{F}_q^* $, an integer $ r \geq 0 $ and $ \beta \in \mathbb{F}_{q^m}^* $ such that $ \beta^{[r]} = b \beta $, for some $ b \in \mathbb{F}_q^* $, the following are equivalent:
\begin{enumerate}
\item
There exists a rank equivalence $ \phi : V \longrightarrow V^\prime $ such that $ a (\sigma_{r,n^\prime} \circ \phi) = \phi \circ \sigma_{r,n} $ and $ \beta $ is as in Lemma \ref{rank equivalences}.
\item
$ (ab)^k h^\prime(x) = h(abx) $.
\item
(If $ q $ and $ n $ are coprime) $ ab Z(h^\prime(x)) = Z(h(x)) $.
\end{enumerate}
\end{theorem}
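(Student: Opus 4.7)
The plan is to route everything through Proposition \ref{char commutes} and the cyclic $ \mathbb{F}_{q^m}[x] $-module structure of $ V $ and $ V^\prime $, where $ s_n $ (resp.~$ s_{n^\prime} $) acts as multiplication by $ x $ and the annihilator is the ideal generated by the check polynomial. For the direction $ 1 \Rightarrow 2 $, I would choose a generator matrix $ G $ of $ V $ with entries in $ \mathbb{F}_q $ (available by Galois-closedness and \cite[Lemma 1]{stichtenoth}). Proposition \ref{char commutes} then gives $ s_n(G) A = ab \, G \, s_{n^\prime}(A) $, which unwinds to the intertwining $ T \circ s_n = ab \, (s_{n^\prime} \circ T) $ for the $ \mathbb{F}_{q^m} $-linear bijection $ T : V \longrightarrow V^\prime $ defined by $ T(\mathbf{v}) = \mathbf{v} A $. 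By induction, $ T \circ p(s_n) = p(ab \cdot s_{n^\prime}) \circ T $ for every $ p \in \mathbb{F}_{q^m}[x] $. Taking $ p = h $ and using that $ h(s_n) = 0 $ on $ V $ (since $ h(x)g(x) \equiv 0 \pmod{x^n - 1} $ annihilates the ideal $ V = (g(x)) $), surjectivity of $ T $ forces $ h(ab \cdot s_{n^\prime}) = 0 $ on $ V^\prime $, hence $ h(abx) $ lies in the annihilator $ (h^\prime(x)) $ of $ V^\prime $. Comparing degrees ($ k $) and leading coefficients ($ (ab)^k $ versus $ 1 $) yields $ (ab)^k h^\prime(x) = h(abx) $.

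For $ 2 \Rightarrow 1 $, the identity $ (ab)^k h^\prime(x) = h(abx) $ makes the substitution $ x \mapsto abx $ descend to an $ \mathbb{F}_{q^m} $-algebra isomorphism $ \Psi : \mathbb{F}_{q^m}[x]/(h(x)) \longrightarrow \mathbb{F}_{q^m}[x]/(h^\prime(x)) $, which further restricts to an isomorphism between the $ \mathbb{F}_q $-subalgebras since $ ab \in \mathbb{F}_q $ and $ h, h^\prime \in \mathbb{F}_q[x] $ by Lemma \ref{polynomials}. The canonical isomorphisms $ V \cong \mathbb{F}_{q^m}[x]/(h(x)) $ and $ V^\prime \cong \mathbb{F}_{q^m}[x]/(h^\prime(x)) $, induced by multiplication with $ g(x), g^\prime(x) \in \mathbb{F}_q[x] $, transport $ \Psi $ to an $ \mathbb{F}_{q^m} $-linear bijection $ T : V \longrightarrow V^\prime $ that sends $ V \cap \mathbb{F}_q^n $ onto $ V^\prime \cap \mathbb{F}_q^{n^\prime} $; consequently $ T $ is right-multiplication by a matrix $ A $ with entries in $ \mathbb{F}_q $, and $ T \circ s_n = ab \, (s_{n^\prime} \circ T) $ by construction. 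Setting $ \phi(\mathbf{c}) := \beta \mathbf{c} A $, Lemma \ref{rank equivalences} makes $ \phi $ a rank equivalence, and $ \phi \circ \sigma_{r,n} = a \, (\sigma_{r,n^\prime} \circ \phi) $ drops out of a short computation combining this intertwining with $ \beta^{[r]} = b \beta $ and $ A^{[r]} = A $. The main subtlety is ensuring that $ T $ is realized by a matrix over $ \mathbb{F}_q $ rather than merely being $ \mathbb{F}_{q^m} $-linear; this rests on all of $ g, h, g^\prime, h^\prime $ and the scalar $ ab $ being defined over $ \mathbb{F}_q $, granted by the Galois-closed hypothesis via Lemma \ref{polynomials}, so that $ \Psi $ and hence $ T $ descend to $ \mathbb{F}_q $-structures.

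For $ 2 \Leftrightarrow 3 $ under the coprimality hypothesis, $ x^n - 1 $ and therefore its divisors $ h $ and $ h^\prime $ are squarefree, so both polynomials are determined up to a scalar by their root sets. Since $ Z(h(abx)) = (ab)^{-1} Z(h(x)) $ and $ Z((ab)^k h^\prime(x)) = Z(h^\prime(x)) $, condition 2 becomes $ Z(h^\prime(x)) = (ab)^{-1} Z(h(x)) $, i.e.~$ ab \, Z(h^\prime(x)) = Z(h(x)) $.
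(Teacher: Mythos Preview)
Your proof is correct. Both arguments rest on the same underlying mechanism—the intertwining $T\circ s_n = ab\,(s_{n'}\circ T)$ and the interpretation of $h,h'$ as annihilators—but they are organized differently. The paper works concretely: it takes the generator vector $\mathbf{g}$, tracks its image $\mathbf{f}=\phi(\mathbf{g})$, pushes the relation $x^k g(x)=-\sum_i h_i x^i g(x)$ across $\phi$ to obtain $h(abx)\widetilde f(x)=0$, and then observes that $\widetilde f(x)$ generates $V'(x)$; for the converse it writes $\phi$ down explicitly on the basis $\sigma_{r,n}^i(\mathbf{g})$ via formula~(\ref{constructing equi}). You instead invoke Proposition~\ref{char commutes} to get the intertwining in one stroke, pass to the annihilator ideal, and for the converse transport the algebra isomorphism $\mathbb{F}_{q^m}[x]/(h(x))\to\mathbb{F}_{q^m}[x]/(h'(x))$ through the identifications $V\cong\mathbb{F}_{q^m}[x]/(h(x))$, $V'\cong\mathbb{F}_{q^m}[x]/(h'(x))$. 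Your route is more structural and makes the role of Proposition~\ref{char commutes} explicit (the paper proves that proposition but does not cite it here); the paper's route is the same computation unrolled by hand.

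One small slip in your $2\Leftrightarrow 3$ paragraph: you say $h'$ is a divisor of $x^n-1$, but $h'$ divides $x^{n'}-1$, and the hypothesis is only $\gcd(q,n)=1$. This is easily repaired. For $2\Rightarrow 3$, once $h$ is squarefree so is $h(abx)$ and hence $h'$. For $3\Rightarrow 2$, the equality $ab\,Z(h')=Z(h)$ forces $|Z(h')|=|Z(h)|=k=\deg h'$, so $h'$ is automatically squarefree and the root set determines it up to the scalar $(ab)^k$.
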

\begin{proof}
It is obvious that items 2 and 3 are equivalent. We next prove the equivalence between items 1 and 2:

We first prove that item 1 implies item 2. Let $ h(x) = h_0 + h_1x + \cdots + h_kx^k $. Assume that there exists a rank equivalence $ \phi : V \longrightarrow V^\prime $ satisfying item 1. Let $ g(x) = g_0 + g_1 x + \cdots + g_{n-k} x^{n-k} $ be the generator polynomial of $ V $, and let $ \mathbf{g} = (g_0,g_1, \ldots, g_{n-k}, 0, \ldots, 0) \in \mathbb{F}_q^n $. Define $ f(x) = f_0 + f_1 x + \cdots + f_{n-1} x^{n-1} \in \mathbb{F}_{q^m}[x] $ such that $ \mathbf{f} = (f_0,f_1, \ldots, f_{n-1}) = \phi (\mathbf{g}) $. By Lemma \ref{rank equivalences}, we have that $ f(x) = \beta \widetilde{f}(x) $, for some $ \widetilde{f}(x) \in \mathbb{F}_q[x] $. 

It holds that $ a^i \sigma_{r,n^\prime}^i(\mathbf{f}) = \phi(\sigma_{r,n}^i(\mathbf{g})) $, for $ i = 0,1,2, \ldots, k $. In polynomial representation, we have that $ \sigma_{r,n}^i(\mathbf{g}) $ corresponds to $ x^i g(x) $, and $ x^k g(x) =  \sum_{i=0}^{k-1} -h_i x^i g(x) $. On the other hand, $ \sigma_{r, n^\prime}^i(\mathbf{f}) $ corresponds to $ x^i \theta_r^i(f(x)) = x^i \beta^{[ir]}\widetilde{f}(x) = x^i b^i \beta \widetilde{f}(x) $. Hence it follows that $ x^k a^k b^k \widetilde{f}(x) =  \sum_{i=0}^{k-1} -h_i a^i b^i x^i \widetilde{f}(x) $. In other words, $ h(abx) \widetilde{f}(x) = 0 $.

On the other hand, the vectors $ \sigma_{r,n^\prime}^i(\mathbf{f}) = a^{-i} \phi(\sigma_{r,n}^i(\mathbf{g})) $, $ i = 0,1, \ldots, k-1 $, constitute a basis of $ V^\prime $, which implies that $ \widetilde{f}(x), x\widetilde{f}(x), \ldots, x^{k-1}\widetilde{f}(x) $ constitute a basis of $ V^\prime(x) $. Hence, $ \widetilde{f}(x) $ generates the ideal $ V^\prime (x) $. Since $ h(abx)\widetilde{f}(x) = 0 $, we conclude by degrees that $ h(abx) = (ab)^k h^\prime(x) $, and we are done.

Now we prove that item 2 implies item 1. Let $ \mathbf{g}^\prime = (g_0^\prime, g_1^\prime, \ldots, g_{n^\prime-k}^\prime, 0, \ldots, 0) \in \mathbb{F}_{q^m}^{n^\prime} $ by such that $ g^\prime(x) = g_0^\prime + g_1^\prime x + \cdots + g_{n^\prime-k}^\prime x^{n^\prime - k} $ is the generator polynomial of $ V^\prime $. We just need to define $ \phi $ by the formula 
\begin{equation} \label{constructing equi}
\phi(\sigma_{r,n}^i(\mathbf{g})) = a^i \sigma_{r,n^\prime}^i(\beta \mathbf{g}^\prime) = (a^ib^i)\beta \sigma_{r,n^\prime}^i(\mathbf{g}^\prime),
\end{equation}
for $ i = 0,1,2, \ldots, k-1 $, which defines a rank equivalence between $ V $ and $ V^\prime $ by Lemma \ref{rank equivalences}, and see that this formula also holds for $ i = k $. If that happens, then the equality $ a (\sigma_{r,n^\prime} \circ \phi) = \phi \circ \sigma_{r,n} $ holds on a basis of $ V $ and then it holds on all $ V $.

To see that Equation (\ref{constructing equi}) also holds for $ i=k $, we may argue as in the converse implication by using again that $ (ab)^k h^\prime(x) = h(abx) $.
\end{proof}

Observe from the previous proof that each of such equivalences is constructed by Equation (\ref{constructing equi}) using a polynomial $ f(x) \in V^\prime(x) $ with coefficients in $ \mathbb{F}_q $ and such that $ f(x), xf(x), \ldots, x^{k-1} f(x) $ generate $ V^\prime(x) $ as a vector space.

Taking $ a=b=1 $, we obtain the following particular case:

\begin{corollary}
Let $ V, V^\prime $ and $ h(x),h^\prime(x) $ be as in the previous theorem. The following are equivalent:
\begin{enumerate}
\item
There exists a matrix $ A \in \mathbb{F}_q^{n \times n^\prime} $, mapping $ V $ to $ V^\prime $, such that $ \phi : V \longrightarrow V^\prime $ given by $ \phi(\mathbf{c}) = \mathbf{c} A $, $ \mathbf{c} \in V $, satisfies $ \sigma_{r,n^\prime} \circ \phi = \phi \circ \sigma_{r,n} $.
\item
$ h^\prime(x) = h(x) $.
\item
(If $ q $ and $ n $ are coprime) $ Z(h^\prime(x)) = Z(h(x)) $.
\end{enumerate}
\end{corollary}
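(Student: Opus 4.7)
The plan is to observe that this corollary is precisely the specialization of the preceding theorem to the parameter choices $a = 1$, $\beta = 1$, which forces $b = 1$. I would first verify that these choices are admissible: since $1^{[r]} = 1 = 1 \cdot 1$, the standing hypothesis $\beta^{[r]} = b\beta$ of the theorem is satisfied with $b = 1 \in \mathbb{F}_q^*$, so one has the right to invoke the theorem at these values.

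Next, I would trace the correspondence between the three items of the corollary and the three items of the theorem under this specialization. With $\beta = 1$, the general form $\phi(\mathbf{c}) = \beta \mathbf{c} A$ from Lemma \ref{rank equivalences} collapses to $\phi(\mathbf{c}) = \mathbf{c} A$, which is exactly the shape required in item 1 of the corollary, and the commutation relation $a(\sigma_{r,n^\prime} \circ \phi) = \phi \circ \sigma_{r,n}$ becomes $\sigma_{r,n^\prime} \circ \phi = \phi \circ \sigma_{r,n}$ once $a = 1$. On the polynomial side, item 2 of the theorem, $(ab)^k h^\prime(x) = h(abx)$, reduces to $h^\prime(x) = h(x)$, and item 3, $ab\, Z(h^\prime(x)) = Z(h(x))$, reduces to $Z(h^\prime(x)) = Z(h(x))$. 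Thus items 1--3 of the corollary are literally items 1--3 of the theorem at $a = b = \beta = 1$, and the equivalence is inherited immediately.

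The only small bookkeeping point worth noting explicitly is the direction (2) $\Rightarrow$ (1): the theorem produces a rank equivalence $\phi$ attached to the particular $\beta$ one selects, and because we pin $\beta = 1$ in advance, the $\phi$ that the theorem yields is automatically of the restricted form $\phi(\mathbf{c}) = \mathbf{c} A$ demanded by item 1 of the corollary, with no extra scalar to absorb. I do not anticipate a genuine obstacle here: the entire proof is a one-line invocation of the theorem, together with a short verification that the coprimality of $q$ and $n$ forces $x^n - 1$ to have distinct roots, so that the monic divisors $h(x)$ and $h^\prime(x)$ are determined by their root sets and (2) $\Leftrightarrow$ (3) is immediate.
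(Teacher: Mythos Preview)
Your proposal is correct and matches the paper's approach exactly: the paper introduces the corollary with the single line ``Taking $a=b=1$, we obtain the following particular case,'' and your argument is precisely this specialization, made explicit by pinning $\beta=1$ (which is the natural witness for $b=1$) and tracing the three items through. There is nothing to add.
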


On the other hand, given a polynomial $ f(x) \in \mathbb{F}_{q^m}[x] $, we define its order as the minimum positive integer $ e $ such that $ f(x) $ divides $ x^e-1 $ (in $ \mathbb{F}_{q^m}[x] $), and denote it by $ {\rm ord}(f(x)) $. In general, for $ a \in \mathbb{F}_q $, we define the $ a $-order of $ f(x) $ as the minimum positive integer $ e $ such that $ f(x) $ divides $ x^e - a^e $ (in $ \mathbb{F}_{q^m}[x] $), if one such $ e $ exists, and denote it by $ {\rm ord}_a(f(x)) $. If no such $ e $ exists, we define $ {\rm ord}_a(f(x)) = \infty $.

We may now prove the second main result of this section:

\begin{theorem} \label{char different lengths}
For an integer $ s \geq 0 $ and a linear $ q^s $-cyclic code $ C \subseteq \mathbb{F}_{q^m}^n $, where $ h^0(x) $ is the check polynomial of $ C^* $, it holds that
\begin{enumerate}
\item
$ l_R(C) = \deg(h^0(x)) $.
\item
$ l_{Sh,1,0}(C) = l_P(C) = {\rm ord}(h^0(x)) \leq n $.
\item
More generally, if $ a \in \mathbb{F}_q^* $, then $ e = l_{Sh,a,0}(C) = {\rm ord}_a(h^0(x)) $ and $ e $ is an $ a^e $-period of $ C $.
\item
More generally, if $ a \in \mathbb{F}_q^* $ and $ r \geq 0 $, then 
$$ l_{Sh,a,r}(C) = \min \{ {\rm ord}_{ab}(h^0(x)) \mid b \in \mathbb{F}_q^*, \beta \in \mathbb{F}_{q^m}^*, \beta^{[r]} = b \beta \}. $$
\end{enumerate}
In particular, $ l_R(C) = l_{Sk,s}(C) = l_{Sh,1,r}(C) = l_P(C) $ if, and only if, $ \deg(h^0(x)) = {\rm ord}(h^0(x)) $, which holds if, and only if, $ h^0(x) = x^e-1 $, for some positive integer $ e $.
\end{theorem}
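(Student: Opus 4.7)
The plan is to push everything through the Galois closure $C^*$ and its check polynomial $h^0(x) \in \mathbb{F}_q[x]$. By Proposition \ref{inequalities} items 3 and 4 (combined with the fact that $C^*$ is cyclic Galois closed by Lemma \ref{galois cyclic}), the quantities $l_R(C)$, $l_{Sh,a,r}(C)$, $l_P(C)$ all equal their counterparts for $C^*$, and $l_{Sk,s}(C^*) = l_R(C^*)$. So I may assume $C = C^*$ throughout. Item 1 then falls out immediately from (\ref{rank length last grw}): $l_R(C) = \dim(C^*) = \deg(h^0(x))$.

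The main engine for items 3 and 4 will be the previous theorem, which characterizes the existence of a rank equivalence commuting with $q^r$-shifting (up to the scalar $a$) to a cyclic Galois closed $V^\prime$ of length $n^\prime$ and check polynomial $h^\prime$ by the identity $(ab)^k h^\prime(x) = h^0(abx)$, valid whenever some admissible pair $(b,\beta)$ with $\beta^{[r]} = b \beta$ exists. For each such admissible $b$ the smallest $n^\prime$ realizing a $V^\prime$ with check polynomial $h^\prime(x) = (ab)^{-k} h^0(abx)$ is ${\rm ord}(h^\prime(x))$, and the substitution $y = ab \cdot x$ converts the divisibility $h^\prime(x) \mid x^{n^\prime} - 1$ into $h^0(y) \mid y^{n^\prime} - (ab)^{n^\prime}$. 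By definition of ${\rm ord}_{ab}$, the minimum such $n^\prime$ is ${\rm ord}_{ab}(h^0(x))$. Minimising over valid $b$ gives item 4, and the specialisation $r = 0$ (which forces $b = 1$) gives item 3.

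For item 2 and the $a^e$-period claim inside item 3, I would translate periodicity to the polynomial side: an integer $p$ is an $a^p$-period of a cyclic code with check polynomial $h$ iff $(x^p - a^p) c(x) = 0$ in $\mathbb{F}_{q^m}[x]/(x^n-1)$ for every $c(x) \in C(x)$, equivalently iff $h(x) \mid x^p - a^p$. Hence the minimum $a^p$-period of $C = C^*$ equals ${\rm ord}_a(h^0(x))$. Taking $a=1$ gives $l_P(C) = {\rm ord}(h^0(x))$, which combined with the item 3 computation at $a=1, r=0$ yields the triple equality of item 2. The $a^e$-period assertion is then the translation applied at $p = e = {\rm ord}_a(h^0(x))$.

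Finally, for the in-particular statement, Proposition \ref{inequalities} items 1 and 2 yield the chain $l_R(C) \leq l_{Sk,s}(C) \leq l_{Sh,1,r}(C) \leq l_P(C)$, so the four numbers coincide iff $\deg(h^0(x)) = {\rm ord}(h^0(x))$. Setting $e = {\rm ord}(h^0(x))$, the conditions $h^0(x) \mid x^e - 1$, $\deg h^0 = e$ and $h^0$ monic force $h^0(x) = x^e - 1$; the converse is trivial. The main bookkeeping difficulty will be tracking the two scalars $a$ and $b$ simultaneously inside the substitution argument and verifying that $h^\prime(x) = (ab)^{-k} h^0(abx)$ really has coefficients in $\mathbb{F}_q$, so that the constructed $V^\prime$ is automatically Galois closed and genuinely realizes the length ${\rm ord}_{ab}(h^0(x))$; fortunately $ab \in \mathbb{F}_q$ together with $h^0 \in \mathbb{F}_q[x]$ makes this immediate.
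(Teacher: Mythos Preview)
Your proposal is correct and follows essentially the same route as the paper: reduce to $C^*$ via Proposition~\ref{inequalities}, read off item~1 from (\ref{rank length last grw}), derive item~4 from the preceding theorem via the substitution $y=abx$, and then specialise to obtain items~3 and~2; the ``in particular'' clause is handled identically through the chain of inequalities in Proposition~\ref{inequalities}. Your explicit remark that $h'(x)=(ab)^{-k}h^0(abx)\in\mathbb{F}_q[x]$ (so that $V'$ is Galois closed) is a detail the paper leaves implicit, but otherwise the arguments coincide.
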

\begin{proof}
First of all, we have seen that $ l_R(C) = \dim(C^*) $, and this dimension is $ \deg(h^0(x)) $. Hence item 1 follows. 

Now, the equality $ l_{Sh,1,0}(C) = {\rm ord}(h^0(x)) $ in item 2 follows from item 3 by choosing $ a=1 $, and it is straightforward to see that $ {\rm ord}(h^0(x)) $ is equal to $ l_P(C) $.

Item 3 follows now from item 4, since $ \beta^{[0]} = \beta $, for all $ \beta \in \mathbb{F}_{q^m} $. Moreover, since $ x^e f(x) = a^e f(x) $, for all $ f(x) \in C(x) $, we see that $ e $ is an $ a^e $-period of $ C $.

Next we prove item 4. Assume that there exists a rank equivalence $ \phi : C^* \longrightarrow V^\prime $, where $ V^\prime $ is a linear cyclic Galois closed space of length $ e $ and $ a (\sigma_{r,n^\prime} \circ \phi) = \phi \circ \sigma_{r,n} $. By the previous theorem, the check polynomial of $ V^\prime $ is $ (ab)^{-k}h^0(abx) $, $ k = \dim(C^*) $, with notation as in the previous theorem. Hence, we see that $ h^0(x) $ divides $ x^e - (ab)^e $, since $ h^0(abx) $ divides $ x^e - 1 $. 

Conversely, if $ h^0(x) $ divides $ x^e - (ab)^e $, we may define the linear cyclic Galois closed space $ V^\prime \subseteq \mathbb{F}_{q^m}^e $ with check polynomial $ h^\prime(x) = (ab)^{-k}h^0(abx) $, which divides $ x^e -1 $. Then there exists a rank equivalence $ \phi : C^* \longrightarrow V^\prime $ as before by the previous theorem.

Therefore, choosing the elements $ b $ and $ \beta $ that minimize the number $ e $, we see that $ l_{Sh,a,r}(C) = {\rm ord}_{ab}(h^0(x)) $, and item 4 follows. 

Finally, by Proposition \ref{inequalities}, we conclude that $ l_R(C) = l_{Sk,s}(C) = l_{Sh,1,r}(C) = l_P(C) $ if, and only if, $ \deg(h^0(x)) = {\rm ord}(h^0(x)) $. It is straightforward to see that this is equivalent to $ h^0(x) = x^e-1 $, $ e = {\rm ord}(h^0(x)) $.
\end{proof}

\begin{remark} \label{attaining}
We see from the previous theorem that there are three instances of $ h^0(x) $ that give an easy description of $ C $ and where $ l_R(C) $ is attained by a linear skew cyclic code:
\begin{enumerate}
\item
$ h^0(x) = x^e - 1 $, for some positive integer $ e \leq n $. This case has two subcases:
\begin{enumerate}
\item
$ e = n $, which corresponds to the case where $ C $ is not rank degenerate.
\item
$ e < n $, in which case $ C $ is rank degenerate in a special way: $ e $ divides $ n $ and $ C $ is constructed by repeating $ n/e $ times a linear skew cyclic code $ C^\prime \subseteq \mathbb{F}_{q^m}^e $ that is not rank degenerate.
\end{enumerate}
\item
There exists an $ a \in \mathbb{F}_q^* $ and a positive integer $ e < n $ with $ h^0(x) = x^e - a^e $. In this case, $ e $ divides $ n $, $ a^n = 1 $ and again $ C $ is constructed by repeating $ n/e $ times a linear skew cyclic code $ C^\prime \subseteq \mathbb{F}_{q^m}^e $ that is not rank degenerate.

Observe that $ a^n=1 $ since $ x^e - a^e $ divides $ x^n-1 $. Then we see that $ a^e((a^{-1}x)^e - 1) $ divides $ (a^{-1}x)^n - 1 = x^n-1 $, hence $ x^e-1 $ divides $ x^n-1 $, which implies that $ e $ divides $ n $.
\end{enumerate}
\end{remark}

In the next sections we will give results on $ C $ in terms of its structure, and relate it to the structure of $ C^* $ and $ C^0 $.

\section{Cyclic codes, conventional polynomials and root sets} \label{conventional}

Given a polynomial $ f(x) \in \mathbb{F}_{q^m}[x] $, we define the following polynomials, where divisibility is considered in $ \mathbb{F}_{q^m}[x] $:
\begin{equation} \label{star}
f^*(x) = {\rm gcd}(f(x), \theta_1(f(x)), \ldots, \theta_{m-1}(f(x))),
\end{equation}
\begin{equation} \label{zero}
f^0(x) = {\rm lcm}(f(x), \theta_1(f(x)), \ldots, \theta_{m-1}(f(x))),
\end{equation}
\begin{equation} \label{dual}
f^\perp(x) = x^{\deg(f(x))}f(x^{-1}) / f(0),
\end{equation}
assuming $ f(0) \neq 0 $ in the last equation. We have the following:

\begin{lemma}
For any polynomial $ f(x) \in \mathbb{F}_{q^m}[x] $, it holds that $ f^*(x), f^0(x) \in \mathbb{F}_q[x] $.
\end{lemma}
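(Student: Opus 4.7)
The plan is to invoke the classical characterization of $\mathbb{F}_q$ inside $\mathbb{F}_{q^m}$ as the fixed field of the Frobenius $\theta_1$, and show that the Frobenius fixes both $f^*(x)$ and $f^0(x)$. Recall that, by (\ref{Frobenius ring auto}), $\theta_1$ acts on $\mathbb{F}_{q^m}[x]$ by raising each coefficient to the $q$-th power, and a polynomial $p(x) \in \mathbb{F}_{q^m}[x]$ lies in $\mathbb{F}_q[x]$ if and only if $\theta_1(p(x)) = p(x)$, since this is equivalent to every coefficient being fixed by Frobenius.

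First I would record two properties of $\theta_1$ viewed as a ring automorphism of $\mathbb{F}_{q^m}[x]$: it sends monic polynomials to monic polynomials of the same degree, and it respects divisibility, so it commutes with taking the (monic) $\gcd$ and $\mathrm{lcm}$ of any finite family. Applying $\theta_1$ to (\ref{star}) then gives
\[
\theta_1(f^*(x)) = \gcd(\theta_1(f(x)), \theta_2(f(x)), \ldots, \theta_m(f(x))),
\]
and analogously for $f^0(x)$ using (\ref{zero}) and $\mathrm{lcm}$.

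The crucial observation is that $\theta_m = \theta_0 = \mathrm{id}$ as a map on $\mathbb{F}_{q^m}[x]$, because every element of $\mathbb{F}_{q^m}$ satisfies $a^{q^m} = a$. Therefore the family $\{\theta_1(f),\theta_2(f),\ldots,\theta_m(f)\}$ is just a reordering of $\{\theta_0(f),\theta_1(f),\ldots,\theta_{m-1}(f)\}$, so the $\gcd$ and $\mathrm{lcm}$ do not change, yielding $\theta_1(f^*(x)) = f^*(x)$ and $\theta_1(f^0(x)) = f^0(x)$. By the first paragraph this forces $f^*(x), f^0(x) \in \mathbb{F}_q[x]$.

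I do not anticipate any real obstacle here: the only mild subtlety is the monic normalization of the $\gcd$ and $\mathrm{lcm}$, which is needed to make ``$\theta_1$ commutes with $\gcd$ and $\mathrm{lcm}$'' literally an equality rather than an equality up to a unit in $\mathbb{F}_{q^m}^*$. Once that is noted, the argument is essentially a one-line consequence of $\theta_1^m = \mathrm{id}$ on the coefficient field.
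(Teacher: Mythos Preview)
Your argument is correct and is essentially the same as the paper's: both show that $\theta_1$ permutes the family $\{f,\theta_1(f),\ldots,\theta_{m-1}(f)\}$ and, being a ring automorphism, therefore fixes its (monic) $\gcd$ and $\mathrm{lcm}$, whence these lie in $\mathbb{F}_q[x]$. You spell out the steps (in particular $\theta_m=\theta_0$ and the monic normalization) more explicitly than the paper, but the approach is identical.
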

\begin{proof}
Since $ \theta_1 $ leaves the set $ \{ f(x), \theta_1(f(x)), \ldots, \theta_{m-1}(f(x)) \} $ invariant and is a ring automorphism, it holds that $ \theta_1(f^*(x)) = f^*(x) $ and $ \theta_1(f^0(x)) = f^0(x) $, which mean that both lie in $ \mathbb{F}_q[x] $.
\end{proof}

Now fix a linear cyclic code $ C \subseteq \mathbb{F}_{q^m}^n $, whose generator and check polynomials are $ g(x) $ and $ h(x) $, respectively. It is well-known that $ h^\perp(x) $ and $ g^\perp(x) $ are the generator and check polynomials of $ C^\perp $, respectively \cite[Theorem 4.2.7]{pless}. The following proposition explains the previous notation:

\begin{proposition} \label{gen and check of closure}
The generator and check polynomials of $ C^* $ are $ g^*(x) $ and $ h^0(x) $, respectively, and the generator and check polynomials of $ C^0 $ are $ g^0(x) $ and $ h^*(x) $, respectively. In particular, 
$$ (g^*)^\perp(x) = (g^\perp)^*(x), \quad (g^0)^\perp(x) = (g^\perp)^0(x), $$
$$ (h^*)^\perp(x) = (h^\perp)^*(x), \quad \textrm{and} \quad (h^0)^\perp(x) = (h^\perp)^0(x). $$
\end{proposition}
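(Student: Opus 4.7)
The plan is to pass everything through the ideal/polynomial dictionary for cyclic codes in $R = \mathbb{F}_{q^m}[x]/(x^n-1)$, exploit that $\theta_i$ acts as a ring automorphism of $R$, and then read off the four duality identities as corollaries of the first part together with Stichtenoth's $(C^\perp)^* = (C^0)^\perp$ and $(C^\perp)^0 = (C^*)^\perp$.

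First I would observe that, because $\theta_i$ is a ring automorphism of $\mathbb{F}_{q^m}[x]$ fixing $x^n-1$ (see (\ref{Frobenius ring auto})), it induces an automorphism of $R$ sending ideals to ideals. In particular the image $C^{[i]}$ of $C$ has ideal representation $C^{[i]}(x) = (\theta_i(g(x)))$, and the polynomial $\theta_i(g(x))$ is again a (monic) divisor of $x^n-1$ with $\theta_i(g(x))\theta_i(h(x)) = x^n-1$. Since sums and intersections of ideals generated by divisors of $x^n-1$ in $R$ correspond exactly to $\gcd$ and $\mathrm{lcm}$ in $\mathbb{F}_{q^m}[x]$, the definitions of $C^*$ and $C^0$ yield immediately
\[
C^*(x) = \sum_{i=0}^{m-1}(\theta_i(g(x))) = (g^*(x)), \qquad C^0(x) = \bigcap_{i=0}^{m-1}(\theta_i(g(x))) = (g^0(x)),
\]
so $g^*(x)$ and $g^0(x)$ are the generator polynomials of $C^*$ and $C^0$ respectively (they are monic divisors of $x^n-1$ by construction).

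Next I would identify the corresponding check polynomials by means of the elementary identity
\[
\frac{x^n-1}{\gcd(f_1,\ldots,f_m)} = \mathrm{lcm}\!\left(\frac{x^n-1}{f_1},\ldots,\frac{x^n-1}{f_m}\right), \qquad \frac{x^n-1}{\mathrm{lcm}(f_1,\ldots,f_m)} = \gcd\!\left(\frac{x^n-1}{f_1},\ldots,\frac{x^n-1}{f_m}\right),
\]
valid for any divisors $f_1,\ldots,f_m$ of $x^n-1$ in the UFD $\mathbb{F}_{q^m}[x]$ (verified prime by prime: $v_p\!\bigl((x^n-1)/\gcd f_i\bigr) = v_p(x^n-1)-\min_i v_p(f_i) = \max_i v_p((x^n-1)/f_i)$, and analogously for the second identity, so no coprimality of $q$ and $n$ is needed). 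Applied with $f_i = \theta_{i-1}(g(x))$, so that $(x^n-1)/f_i = \theta_{i-1}(h(x))$, this yields $(x^n-1)/g^*(x) = h^0(x)$ and $(x^n-1)/g^0(x) = h^*(x)$, which are then the check polynomials of $C^*$ and $C^0$.

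Finally the four ``in particular'' equalities follow by computing the generator (or check) polynomial of both sides of Stichtenoth's identities $(C^\perp)^* = (C^0)^\perp$ and $(C^\perp)^0 = (C^*)^\perp$ in two ways using the first part of the proposition together with the well-known fact that, for a cyclic code with generator $g$ and check $h$, the dual $C^\perp$ has generator $h^\perp$ and check $g^\perp$. For instance the generator of $(C^\perp)^0$ is $(h^\perp)^0(x)$, while that of $(C^*)^\perp$ is $(h^0)^\perp(x)$, giving $(h^0)^\perp = (h^\perp)^0$; the remaining three identities are obtained by the same bookkeeping. The main obstacle is really just the first step, namely making sure that sum/intersection of ideals in $R$ match $\gcd$/$\mathrm{lcm}$ of their divisor generators even when $q,n$ are not coprime, which is why I would explicitly record the prime-by-prime valuation check rather than appeal to squarefreeness of $x^n-1$.
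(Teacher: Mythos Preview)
Your proof is correct and follows essentially the same route as the paper: identify $C^{[i]}(x)=(\theta_i(g(x)))$, use that sums/intersections of cyclic codes correspond to $\gcd$/$\mathrm{lcm}$ of generators (the paper cites \cite[Theorem 4.3.7]{pless}), obtain the check polynomials via the complementarity $g^*(x)h^0(x)=x^n-1$, and then read off the four $\perp$-identities from Stichtenoth's $(C^\perp)^*=(C^0)^\perp$, $(C^\perp)^0=(C^*)^\perp$. The only difference is that you spell out the prime-by-prime valuation argument for $(x^n-1)/\gcd = \mathrm{lcm}$ of cofactors, whereas the paper simply asserts that $g^*(x)h^0(x)=x^n-1$ follows from $\theta_i(g(x))\theta_i(h(x))=x^n-1$.
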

\begin{proof}
Taking ideals, it holds that $ C(x) = \sum_{i=0}^{m-1} C^{[i]}(x) $, and $ C^{[i]} $ has $ \theta_i(g(x)) $ as generator polynomial. It is well-known that the generator polynomial of the sum of cyclic codes is the greatest common divisor of their generator polynomials \cite[Theorem 4.3.7]{pless}.

Hence the polynomial $ g^*(x) $ is the generator polynomial of $ C^* $. Similarly $ g^0(x) $ is the generator polynomial of $ C^0 $, using now that the generator polynomial of the intersection of cyclic codes is the least common multiple of their generator polynomials \cite[Theorem 4.3.7]{pless}.

On the other hand, we have that $ \theta_i(g(x)) \theta_i(h(x)) = \theta_i(x^n-1) = x^n - 1 $, for $ i = 0,1,2, \ldots, m-1 $. Hence the greatest common divisor of the polynomials $ \theta_i(g(x)) $ and the least common multiple of the polynomials $ \theta_i(h(x)) $ satisfy the same. That is, $ g^*(x) h^0(x) = x^n - 1 $, and $ h^0(x) $ is the check polynomial of $ C^* $. Similarly for $ C^0 $.

Finally, since $ g^\perp(x) $ is the check polynomial of $ C^\perp $, it follows that $ (g^\perp)^*(x) $ is the check polynomial of $ (C^\perp)^0 $. On the other hand, since $ g^*(x) $ is the generator polynomial of $ C^* $, it holds that $ (g^*)^\perp(x) $ is the check polynomial of $ (C^*)^\perp $. Since $ (C^\perp)^0 = (C^*)^\perp $, it follows that $ (g^*)^\perp(x) = (g^\perp)^*(x) $. The remaining equalities are proven in the same way.
\end{proof}

On the other hand, we have the following relations of idempotent generators and cyclic complementaries. 

\begin{proposition}
Assume that $ g(x) $ and $ h(x) $ are coprime and $ e(x) $ is the idempotent generator of $ C $. Then $ C^* $ and $ C^0 $ have $ 1- \prod_{i=0}^{m-1} (1 - \theta_i(e(x))) $ and $ \prod_{i=0}^{m-1} \theta_i(e(x)) $ as idempotent generators, respectively. Moreover it holds that
$$ (C^c)^* = (C^0)^c \quad \textrm{and} \quad (C^c)^0 = (C^*)^c. $$
\end{proposition}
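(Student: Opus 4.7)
The plan is to pass everything through the idempotent-generator formalism. Under the coprimality hypothesis on $ g(x) $ and $ h(x) $, the polynomial $ x^n-1 $ is separable over $ \mathbb{F}_{q^m} $, a property depending only on $ n $ and $ q $; consequently \emph{every} cyclic code in $ \mathbb{F}_{q^m}^n $ admits a unique idempotent generator, and this is legitimate for all the codes appearing below. Since $ \theta_i $ is a ring automorphism of $ \mathbb{F}_{q^m}[x]/(x^n-1) $ sending $ C $ to $ C^{[i]} $, and ring automorphisms preserve both idempotence and the generated ideal, $ \theta_i(e(x)) $ is the idempotent generator of $ C^{[i]} $.

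I would then invoke the standard algebraic identities (see \cite[Chapter 4]{pless}) that, for cyclic codes $ A,B $ with idempotents $ e_A,e_B $, the codes $ A\cap B $ and $ A+B $ have idempotents $ e_Ae_B $ and $ 1-(1-e_A)(1-e_B) $ respectively, and that the cyclic complementary $ A^c $ has idempotent $ 1-e_A $. A routine induction extends these to any finite number of summands: the idempotent of $ \bigcap_{i=0}^{m-1}A_i $ is $ \prod_i e_{A_i} $ and that of $ \sum_{i=0}^{m-1}A_i $ is $ 1-\prod_i(1-e_{A_i}) $. Specialising $ A_i=C^{[i]} $ with idempotent $ \theta_i(e(x)) $ yields exactly the formulas for the idempotent generators of $ C^0 $ and $ C^* $ claimed in the statement.

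For the identities $ (C^c)^*=(C^0)^c $ and $ (C^c)^0=(C^*)^c $ I would argue once more by uniqueness of idempotent generators. The code $ C^c $ has idempotent $ 1-e(x) $ and $ \theta_i $ fixes $ 1 $, so applying the first part of the proposition to $ C^c $ shows that $ (C^c)^* $ has idempotent
$$ 1-\prod_{i=0}^{m-1}\bigl(1-\theta_i(1-e(x))\bigr) \;=\; 1-\prod_{i=0}^{m-1}\theta_i(e(x)), $$
which is precisely $ 1 $ minus the idempotent of $ C^0 $, i.e.\ the idempotent of $ (C^0)^c $. The identity $ (C^c)^0=(C^*)^c $ follows by the same computation with $ (-)^0 $ in place of $ (-)^* $, or alternatively by taking cyclic complements of the first identity. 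There is no real obstacle once the dictionary between cyclic codes and idempotents is in place: the proposition reduces to straightforward manipulations inside $ \mathbb{F}_{q^m}[x]/(x^n-1) $.
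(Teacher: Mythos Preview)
Your argument is correct in substance and reaches the result by a route that genuinely differs from the paper's. The paper obtains the idempotent of $C^0$ by the product formula (as you do), but then establishes $(C^c)^*=(C^0)^c$ and $(C^c)^0=(C^*)^c$ \emph{through generator and check polynomials}, invoking Proposition~\ref{gen and check of closure}: since $C^c$ has generator $h(x)$, its Galois closure $(C^c)^*$ has generator $h^*(x)$, which is the check polynomial of $C^0$ and hence the generator of $(C^0)^c$. Only after that does the paper extract the idempotent of $C^*$ from the identity $(C^c)^0=(C^*)^c$. You invert the logical order: you compute both idempotents directly---using the sum formula $1-\prod_i(1-e_i)$ for $C^*$, which the paper never invokes---and then read off the complement identities purely from idempotent calculus. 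Your approach is more self-contained within the idempotent formalism; the paper's ties the proposition back to the polynomial machinery developed just before.

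One point to fix: the claim that coprimality of $g(x)$ and $h(x)$ forces $x^n-1$ to be separable is false. For instance $g(x)=1$, $h(x)=x^n-1$ are always coprime; more generally, when $p^a\,\|\,n$ any $p^a$-th power of a divisor of $x^{n/p^a}-1$ gives a $g$ coprime to its cofactor. Fortunately your proof does not need separability. What you actually use is that each $C^{[i]}$ has idempotent generator $\theta_i(e(x))$ (because $\theta_i$ is a ring automorphism and $\gcd(\theta_i(g),\theta_i(h))=1$), that the Huffman--Pless formulas then furnish idempotent generators for $C^*$, $C^0$, $(C^c)^*$, $(C^c)^0$, and that a cyclic code possessing an idempotent generator $e$ has a unique cyclic complement, generated by $1-e$. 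All of this holds without separability of $x^n-1$, so simply drop that claim and justify the existence and uniqueness of the relevant idempotents directly.
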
 
\begin{proof}
First, the idempotent generator of the intersection of cyclic codes is the product of their idempotent generators \cite[Theorem 4.3.7]{pless}, hence $ \prod_{i=0}^{m-1} \theta_i(e(x)) $ is the idempotent generator of $ C^0 $.

On the other hand, $ C^c $ has $ h(x) $ as generator polynomial, thus $ (C^c)^* $ has $ h^*(x) $ as generator polynomial by the previous proposition. Moreover, $ C^0 $ has $ h^*(x) $ as check polynomial, also by the previous proposition. Therefore $ (C^c)^* = (C^0)^c $. Similarly we may prove that $ (C^c)^0 = (C^*)^c $.

Finally, It holds that $ \prod_{i=0}^{m-1} (1 - \theta_i(e(x))) $ is the idempotent generator of $ (C^c)^0 $ by the first part of this proof. Using that $ (C^c)^0 = (C^*)^c $, we see that $ 1- \prod_{i=0}^{m-1} (1 - \theta_i(e(x))) $ is the idempotent generator of $ C^* $.
\end{proof}

In addition, we may easily see that $ C^c $ and $ C^\perp $ are rank equivalent:

\begin{proposition}
Assume that $ g(x) $ and $ h(x) $ are coprime. Then $ C^c $ and $ C^\perp $ are rank equivalent.
\end{proposition}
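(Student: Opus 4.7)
The plan is to exhibit an explicit rank equivalence between $C^c$ and $C^\perp$ using the coordinate reversal map $\phi : \mathbb{F}_{q^m}^n \longrightarrow \mathbb{F}_{q^m}^n$ defined by $\phi(c_0, c_1, \ldots, c_{n-1}) = (c_{n-1}, c_{n-2}, \ldots, c_0)$. Since $\phi$ is given by a permutation matrix $A \in \mathbb{F}_q^{n \times n}$, applying Lemma \ref{rank equivalences} with $V = V^\prime = \mathbb{F}_{q^m}^n$ and $\beta = 1$ shows immediately that $\phi$ is a rank equivalence of $\mathbb{F}_{q^m}^n $ to itself. It remains only to verify that $\phi(C^c) = C^\perp$.

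To check this, I would work in the polynomial representation. The map $\phi$ corresponds to $c(x) \mapsto x^{n-1} c(x^{-1}) \pmod{x^n-1}$, since expanding $x^{n-1}(c_0 + c_1 x^{-1} + \cdots + c_{n-1} x^{-(n-1)})$ yields $c_0 x^{n-1} + c_1 x^{n-2} + \cdots + c_{n-1}$. A general element of $C^c$ has the form $h(x) u(x) $ with $\deg u(x) < n - k$, where $k = \deg h(x) $. Applying $\phi$ and factoring,
$$ x^{n-1} h(x^{-1}) u(x^{-1}) = \bigl( x^k h(x^{-1}) \bigr) \cdot \bigl( x^{n-1-k} u(x^{-1}) \bigr). $$
The first factor equals $h(0) h^\perp(x)$ by the definition in (\ref{dual}), and $h(0) \neq 0$ because $h(x) \mid x^n - 1$ (otherwise $x \mid x^n - 1$, a contradiction). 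As $u(x)$ ranges over polynomials of degree less than $n-k$, the polynomial $v(x) := x^{n-1-k} u(x^{-1})$ ranges over the same set. Therefore $\phi(C^c)$ equals the ideal generated by $h^\perp(x)$ in $\mathbb{F}_{q^m}[x] / (x^n-1)$, which is precisely $C^\perp$ since $h^\perp(x) $ is the generator polynomial of $C^\perp $.

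There is no real obstacle here: the whole argument reduces to one identity involving the reciprocal polynomial, and the hypothesis that $g(x)$ and $h(x)$ are coprime is used only implicitly, to guarantee that $C^c$ is defined at all. The proof does not rely on any deeper property of the complementary construction, and in particular does not require the idempotent or root-set descriptions from the preceding results of the section.
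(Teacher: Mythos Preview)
Your proof is correct and follows essentially the same approach as the paper: the paper simply cites \cite[Theorem 4.4.9]{pless} for the existence of a permutation of indices taking $C^c$ to $C^\perp$ and observes that any permutation is a rank equivalence, whereas you make the permutation explicit (coordinate reversal) and verify the image directly via the reciprocal-polynomial identity. Your self-contained computation is a faithful unpacking of what the cited theorem provides.
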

\begin{proof}
There exists a permutation of indices that maps $ C^c $ to $ C^\perp $ by \cite[Theorem 4.4.9]{pless}, which obviously defines a rank equivalences between them.
\end{proof}

We will now relate $ C $, $ C^* $ and $ C^0 $ by means of the defining root set of $ C $. We will relate $ l_R(C^\perp) $ with the parameter $ \eta_q(C) $ introduced in \cite{oggiercyclic}, which will allow us to easily derive the main results in that paper. 

If $ q $ and $ n $ are coprime, let $ m^\prime \geq m $ be such that $ \mathbb{F}_{q^{m^\prime}} $ is the splitting field of $ g(x) $. Let $ \alpha_1, \alpha_2, \ldots, \alpha_{n-k} \in \mathbb{F}_{q^{m^\prime}} $ be the simple roots of $ g(x) $, and assume that they are ordered in the following way: there exist $ 1 = m_0 < m_1 < m_2 < \ldots < m_t = n-k+1 $ such that $ \alpha_{m_i}, \alpha_{m_i+1}, \ldots, \alpha_{m_{i+1}-1} $ are roots of the minimal polynomial $ \mu_i(x) \in \mathbb{F}_q[x] $ of $ \alpha_{m_i} $ over $ \mathbb{F}_q $, for $ i = 0,1, \ldots, t-1 $.

\begin{definition}[\textbf{\cite[Definition 3, Definition 4]{oggiercyclic}}]
With notation as in the previous paragraph, we define 
$$ \mu_q(g(x)) = \prod_{i=0}^{t-1} \mu_r(x) \in \mathbb{F}_q[x] \quad \textrm{and} \quad \eta_q(C) = \deg(\mu_q(g(x))). $$
\end{definition} 

We have the following relations, which in particular compute the root set corresponding to $ C^* $ and $ C^0 $:

\begin{proposition} \label{coprime then}
If $ q $ and $ n $ are coprime, then 
\begin{enumerate}
\item
$ \mu_q(g(x)) = g^0(x) $.
\item
$ Z(g^0(x)) = \bigcup_{i=0}^{m-1} Z(g(x))^{[i]} $ and $ Z(g^*(x)) = \bigcap_{i=0}^{m-1} Z(g(x))^{[i]} $.
\item
$ \eta_q(C) = \deg(g^0(x)) = \dim((C^\perp)^*) $ and $ \deg(g^*(x)) = \dim((C^\perp)^0) $.
\end{enumerate}
Analogous identities hold replacing $ g(x) $, $ g^*(x) $ and $ g^0(x) $ by $ h(x) $, $ h^*(x) $ and $ h^0(x) $, respectively.
\end{proposition}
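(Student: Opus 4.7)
The plan is to prove all three items by reducing everything to a root-set computation, since the coprimality of $q$ and $n$ makes $x^n-1$ (and hence $g(x)$ and all its Frobenius conjugates) separable over its splitting field.

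First I would establish the basic identity $Z(\theta_i(f(x))) = Z(f(x))^{[i]}$ for any $f(x) \in \mathbb{F}_{q^m}[x]$: if $\alpha$ is a root of $f$, then applying $\theta_i$ coefficientwise and evaluating at $\alpha^{[i]}$ gives $\theta_i(f)(\alpha^{[i]}) = f(\alpha)^{[i]} = 0$, and conversely every root of $\theta_i(f)$ arises this way by symmetry. Since every $\theta_i(g(x))$ divides $\theta_i(x^n-1) = x^n - 1$ and the latter is separable, the gcd and lcm in (\ref{star}) and (\ref{zero}) correspond exactly to intersection and union of root sets. This immediately gives item 2.

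For item 1, I would observe two things. On the one hand, $\mu_q(g(x))$ is by construction the smallest monic polynomial in $\mathbb{F}_q[x]$ that is divisible by $g(x)$, obtained by completing each $\mathbb{F}_q$-Galois orbit of roots; its root set is therefore $\bigcup_{i \geq 0} Z(g(x))^{[i]}$. On the other hand, since $g(x) \in \mathbb{F}_{q^m}[x]$ we have $\theta_m(g(x)) = g(x)$, hence $Z(g(x))^{[m]} = Z(g(x))$, so this infinite union stabilizes at $\bigcup_{i=0}^{m-1} Z(g(x))^{[i]}$, which by item 2 is $Z(g^0(x))$. Since both $\mu_q(g(x))$ and $g^0(x)$ are monic, separable, lie in $\mathbb{F}_q[x]$, and share the same root set, they coincide.

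For item 3, by Proposition \ref{gen and check of closure} the polynomial $g^0(x)$ is the generator polynomial of $C^0$, so $\dim(C^0) = n - \deg(g^0(x))$. Combining this with the duality $(C^\perp)^* = (C^0)^\perp$ recalled from \cite[Lemma 2]{stichtenoth} yields $\dim((C^\perp)^*) = n - \dim(C^0) = \deg(g^0(x))$, which together with item 1 gives $\eta_q(C) = \deg(g^0(x)) = \dim((C^\perp)^*)$. The identity $\deg(g^*(x)) = \dim((C^\perp)^0)$ is proven identically, using $(C^\perp)^0 = (C^*)^\perp$ and that $g^*(x)$ generates $C^*$. The analogous statements with $h, h^*, h^0$ follow by repeating the argument, noting that $h(x)$ also divides $x^n-1$ and so enjoys the same separability properties.

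The arguments are all routine once the root-set dictionary is in place; the only step requiring a small amount of care is verifying that $\bigcup_{i \geq 0} Z(g(x))^{[i]}$ really coincides with the truncated union $\bigcup_{i=0}^{m-1} Z(g(x))^{[i]}$, which is where the hypothesis $g(x) \in \mathbb{F}_{q^m}[x]$ (and not merely in some larger extension) is used.
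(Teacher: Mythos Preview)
Your proof is correct and follows essentially the same root-set strategy as the paper. The only organizational difference is that you establish item 2 first (using directly that, for separable divisors of $x^n-1$, gcd and lcm correspond to intersection and union of root sets) and then deduce item 1 by comparing root sets, whereas the paper proves item 1 first via mutual divisibility and obtains the intersection formula in item 2 through a complement argument using $g^*(x)h^0(x)=x^n-1$; both routes amount to the same computation.
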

\begin{proof}
First, $ g(x) $ divides $ \mu_q(g(x)) $ in $ \mathbb{F}_{q^{m^\prime}}[x] $ by looking at their roots. By the same argument as in Lemma \ref{polynomials}, we see that $ g(x) $ divides $ \mu_q(g(x)) $ in $ \mathbb{F}_{q^m}[x] $.

Fix a positive integer $ r $. Since $ \theta_r $ is a ring isomorphism, we see that $ \theta_r(g(x)) $ also divides $ \theta_r(\mu_q(g(x))) = \mu_q(g(x)) $ in $ \mathbb{F}_{q^m}[x] $. Hence $ \mu_q(g(x)) $ is divisible by the least common multiple of the polynomials $ \theta_r(g(x)) $, $ r = 0,1,2, \ldots, m-1 $.

Finally, since the Galois group of the extension $ \mathbb{F}_q \subseteq \mathbb{F}_{q^{m^\prime}} $ is constituted by the maps $ \theta_r $, we see that the previous least common multiple vanishes at the roots of the polynomials $ \mu_i(x) $, for $ i=0,1, \ldots, t-1 $. Hence $ \mu_q(g(x)) = {\rm lcm}(g(x), \theta_1(g(x)), \ldots, \theta_{m-1}(g(x))) = g^0(x) $ and item 1 follows.

By the same discussion, since $ Z(\theta_i(g(x))) = Z(g(x))^{[i]} $, we have that $ Z(g^0(x)) = \bigcup_{i=0}^{m-1} Z(g(x))^{[i]} $. On the other hand, denoting $ Z = Z(x^n - 1) $ and using that $ g^*(x)h^0(x) = g(x)h(x) = x^n-1 $, we have that
$$ Z(g^*(x)) = Z \setminus Z(h^0(x)) = Z \setminus \left( \bigcup_{i=0}^{m-1} Z(h(x))^{[i]} \right) = \bigcap_{i=0}^{m-1} \left( Z \setminus Z(h(x)) \right) ^{[i]} = \bigcap_{i=0}^{m-1} Z(g(x))^{[i]}, $$
and item 2 follows. Item 3 follows from item 1 and Proposition \ref{gen and check of closure}.
\end{proof}

\begin{remark} \label{implying oggiercyclic}
Hence $ C^\perp $ is rank degenerate if, and only if, $ \eta_q(C) < n $, which by the duality theorem for generalized rank weights \cite[Theorem]{jerome} is equivalent to $ d_R(C) = 1 $ (see \cite{jerome} for more details). Hence \cite[Proposition 2]{oggiercyclic} and \cite[Proposition 3]{oggiercyclic} follow. We have actually proven that 
\begin{equation} \label{eta equality}
\eta_q(C) = l_R(C^\perp),
\end{equation}
which combined with the same duality theorem also implies \cite[Proposition 5]{oggiercyclic}. Moreover, together with Corollary \ref{singleton} we obtain \cite[Proposition 6]{oggiercyclic}. \\
\end{remark}

We may now state the main result of this section, which computes lengths of cyclic codes in terms of their intrinsic structure:

\begin{theorem} \label{rank degenerate cyclic}
It holds that
$$ l_R(C) = n - \deg({\rm gcd}(g(x), \theta_1(g(x)), \ldots, \theta_{m-1}(g(x)))) $$
$$ = \deg({\rm lcm}(h(x), \theta_1(h(x)), \ldots, \theta_{m-1}(h(x)))), $$
and if $ q $ and $ n $ are coprime, then
$$ l_R(C) = \eta_q(C^\perp) = n - \# \left( \bigcap_{i=0}^{m-1} Z(g(x))^{[i]} \right) = \# \left( \bigcup_{i=0}^{m-1} Z(h(x))^{[i]} \right). $$
On the other hand, for $ a \in \mathbb{F}_q^* $, it holds that
$$ l_{Sh,a,0}(C) = {\rm ord}_a(h^0(x)) = {\rm ord}_a(h(x)) = {\rm ord}_a(\mu_q(h(x))) $$
$$ = \min \{ e \mid \alpha^e = a^e, \forall \alpha \in Z(h(x)) \}. $$
\end{theorem}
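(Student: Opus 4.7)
The plan is to reduce the theorem to Theorem \ref{char different lengths} and then translate the appearing quantities into the concrete polynomial/root-set descriptions of $C^*$ established earlier in the section. I would organise the proof into two blocks, corresponding to the two chains of equalities in the statement.

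First, for the chain computing $l_R(C)$: Theorem \ref{char different lengths} item 1 gives $l_R(C) = \deg(h^0(x))$. By Proposition \ref{gen and check of closure} the polynomials $g^*(x)$ and $h^0(x)$ are the generator and check polynomials of $C^*$, so $g^*(x) h^0(x) = x^n - 1$ and hence $l_R(C) = n - \deg(g^*(x))$. Substituting definitions (\ref{star}) and (\ref{zero}) immediately yields the gcd- and lcm-expressions. Under the coprime assumption on $q$ and $n$, the polynomial $x^n-1$ has simple roots, so the same is true of its divisors $g^*(x)$ and $h^0(x)$; Proposition \ref{coprime then} item 2, applied to $g$ and (via the stated analogue) to $h$, then gives $Z(g^*(x)) = \bigcap_{i=0}^{m-1} Z(g(x))^{[i]}$ and $Z(h^0(x)) = \bigcup_{i=0}^{m-1} Z(h(x))^{[i]}$, and counting roots furnishes the two set-theoretic formulas. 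Finally, applying the identity (\ref{eta equality}) to $C^\perp$ in place of $C$ and using $(C^\perp)^\perp = C$ gives $\eta_q(C^\perp) = l_R(C)$.

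For the second chain, computing $l_{Sh,a,0}(C)$: the first equality is item 3 of Theorem \ref{char different lengths}. For $ {\rm ord}_a(h^0(x)) = {\rm ord}_a(h(x)) $, I would argue by divisibility in both directions. Since $h(x) \mid h^0(x)$, one direction is immediate. For the converse, the key observation is that $a \in \mathbb{F}_q^*$, so $x^e - a^e \in \mathbb{F}_q[x]$ and is therefore fixed by every $\theta_i$; hence $h(x) \mid x^e - a^e$ forces $\theta_i(h(x)) \mid x^e - a^e$ for each $i$, and consequently the lcm $h^0(x)$ divides $x^e - a^e$ as well. The equality ${\rm ord}_a(h(x)) = {\rm ord}_a(\mu_q(h(x)))$ follows by applying Proposition \ref{coprime then} item 1 to $h(x)$ (valid since $q$ and $n$ are coprime), which identifies $\mu_q(h(x)) = h^0(x)$. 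For the last expression, simplicity of the roots of $h(x)$ under the coprime assumption lets us test divisibility $h(x) \mid x^e - a^e$ root by root, giving exactly the condition $\alpha^e = a^e$ for every $\alpha \in Z(h(x))$.

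There is no serious obstacle: the only step that requires a brief explicit argument rather than a direct citation is the equality ${\rm ord}_a(h^0(x)) = {\rm ord}_a(h(x))$, and it reduces to the remark that $x^e - a^e$ is defined over $\mathbb{F}_q$ because $a$ is.
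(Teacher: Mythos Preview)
Your proposal is correct and follows essentially the same route as the paper: reduce to Theorem \ref{char different lengths} and then translate $h^0(x)$, $g^*(x)$ via Proposition \ref{gen and check of closure}, Proposition \ref{coprime then}, and Equation (\ref{eta equality}). The only place where you add substance beyond the paper's terse citations is the explicit divisibility argument for ${\rm ord}_a(h^0(x)) = {\rm ord}_a(h(x))$ (using that $x^e - a^e \in \mathbb{F}_q[x]$ is fixed by each $\theta_i$), which is correct and is exactly what the paper leaves to the reader.
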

\begin{proof}
The first two equalities follow from Theorem \ref{char different lengths}, item 1, and Proposition \ref{gen and check of closure}. If $ q $ and $ n $ are coprime, then the next three equalities follow from the same results as before together with Proposition \ref{coprime then} and Equation (\ref{eta equality}). Finally, the last four equalities follow from the same results as before together with Theorem \ref{char different lengths}, items 2 and 3.
\end{proof}

The following characterizations of rank degenerate cyclic codes follow:

\begin{corollary}
The following conditions are equivalent:
\begin{enumerate}
\item
$ C $ is rank degenerate. That is, $ l_R(C) < n $.
\item
$ {\rm gcd}(g(x), \theta_1(g(x)), \ldots, \theta_{m-1}(g(x))) \neq 1 $.
\item
$ {\rm lcm}(h(x), \theta_1(h(x)), \ldots, \theta_{m-1}(h(x))) \neq x^n - 1 $.
\item
(If $ g(x) $ and $ h(x) $ are coprime) $ \prod_{i=0}^{m-1}(1 - \theta_i(e(x)))) = 0 $ in $ \mathbb{F}_{q^m}[x]/(x^n-1) $, where $ e(x) $ is the idempotent generator of $ C $.
\item
(If $ q $ and $ n $ are coprime) $ \eta_q(C^\perp) < n $.
\item
(If $ q $ and $ n $ are coprime) $ \bigcap_{i=0}^{m-1} Z(g(x))^{[i]} \neq \emptyset $.
\item
(If $ q $ and $ n $ are coprime) $ \bigcup_{i=0}^{m-1} Z(h(x))^{[i]} \subsetneq Z(x^n - 1) $.
\item
$ g(x) $ is divisible by some non-constant polynomial $ f(x) \in \mathbb{F}_q[x] $ (in $ \mathbb{F}_{q^m}[x] $).
\item
$ h(x) $ divides some polynomial $ f(x) \in \mathbb{F}_q[x] $ (in $ \mathbb{F}_{q^m}[x] $) of degree less than $ n $.
\end{enumerate} 
\end{corollary}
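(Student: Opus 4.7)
The plan is to treat this as a routine harvest from the preceding machinery rather than as a new argument: almost every equivalence is essentially a translation of Theorem \ref{rank degenerate cyclic}, Proposition \ref{gen and check of closure}, equation (\ref{eta equality}), or the proposition computing idempotent generators of $ C^* $ and $ C^0 $. I would prove the equivalences by showing that each of items 2--9 is equivalent to item 1, rather than chaining them.

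First I would dispose of items 2 and 3 directly from Theorem \ref{rank degenerate cyclic}: the formula $ l_R(C) = n - \deg({\rm gcd}(g(x),\theta_1(g(x)),\ldots,\theta_{m-1}(g(x)))) $ shows that $ l_R(C) < n $ is equivalent to the gcd being non-constant, while the formula $ l_R(C) = \deg({\rm lcm}(h(x),\theta_1(h(x)),\ldots,\theta_{m-1}(h(x)))) $, combined with the fact that this lcm divides $ x^n - 1 $ (each factor does), shows that $ l_R(C) < n $ is equivalent to the lcm being a proper divisor of $ x^n - 1 $. Item 4 comes from the formula for the idempotent generator of $ C^* $ established in the previous section: $ C $ is rank degenerate iff $ C^* \subsetneq \mathbb{F}_{q^m}^n $, iff the idempotent $ 1-\prod_{i=0}^{m-1}(1-\theta_i(e(x))) $ of $ C^* $ is not $ 1 $, which is exactly the condition in item 4. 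Item 5 follows from (\ref{eta equality}) applied to $ C^\perp $, namely $ \eta_q(C^\perp) = l_R(C^{\perp\perp}) = l_R(C) $, and the bound $ \eta_q(C^\perp)\leq n $. Items 6 and 7 are immediate from Theorem \ref{rank degenerate cyclic}, comparing $ l_R(C) $ with $ n $ in terms of the intersection or union of Frobenius orbits of the root sets.

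What remains is items 8 and 9, and these are the only places where a small additional argument is needed rather than a direct quotation. For item 8, the key observation is that polynomials in $ \mathbb{F}_q[x] $ are exactly the polynomials fixed by each $ \theta_i $. Hence if $ f(x) \in \mathbb{F}_q[x] $ is non-constant and divides $ g(x) $ in $ \mathbb{F}_{q^m}[x] $, then $ f(x) = \theta_i(f(x)) $ divides $ \theta_i(g(x)) $ for every $ i $, so $ f(x) $ divides $ {\rm gcd}(g(x),\theta_1(g(x)),\ldots,\theta_{m-1}(g(x))) $, making it non-constant and giving item 2. Conversely, by the lemma at the start of the section this gcd itself lies in $ \mathbb{F}_q[x] $ and divides $ g(x) $, so if it is non-constant it serves as the witness $ f(x) $. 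The argument for item 9 is symmetric: if $ h(x) $ divides $ f(x) \in \mathbb{F}_q[x] $ of degree less than $ n $, then $ \theta_i(h(x)) $ divides $ f(x) $ for every $ i $, so the lcm of the $ \theta_i(h(x)) $ divides $ f(x) $ and therefore has degree less than $ n $, yielding item 3; conversely, that lcm is itself in $ \mathbb{F}_q[x] $ by the same lemma and is divisible by $ h(x) $.

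The main (and really only) obstacle is getting items 8 and 9 precisely right, since a priori divisibility in $ \mathbb{F}_{q^m}[x] $ by a polynomial with coefficients in $ \mathbb{F}_q $ is a slightly subtler condition than divisibility in $ \mathbb{F}_q[x] $; the resolution is that the gcd and lcm computed in $ \mathbb{F}_{q^m}[x] $ happen to have coefficients in $ \mathbb{F}_q $, so the two notions coincide for the relevant witnesses. Everything else is bookkeeping.
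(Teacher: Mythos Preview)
Your approach is sound and is exactly what the paper intends: the corollary is stated there without a separate proof, as a direct harvest from Theorem~\ref{rank degenerate cyclic}, equation~(\ref{eta equality}), and the surrounding propositions. Items 2, 3, 5, 6, 7 drop out verbatim, and your treatment of items 8 and 9 via the lemma that $f^*(x),f^0(x)\in\mathbb{F}_q[x]$ is precisely the small additional observation those items need.

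There is one slip, in item 4. You argue correctly that $C$ is rank degenerate iff $C^*\subsetneq\mathbb{F}_{q^m}^n$ iff the idempotent generator $1-\prod_{i=0}^{m-1}(1-\theta_i(e(x)))$ of $C^*$ is not equal to $1$, i.e.\ iff $\prod_{i=0}^{m-1}(1-\theta_i(e(x)))\neq 0$. But item 4 as printed asserts the product is \emph{equal} to $0$, the opposite condition. Your chain of equivalences is right; the closing phrase ``which is exactly the condition in item 4'' is not. A sanity check confirms your derivation: for the length-$2$ repetition code with $q$ odd one has $e(x)=(1-x)/2\in\mathbb{F}_q[x]$, so the product equals $(1-e(x))^m=1-e(x)=(1+x)/2\neq 0$, and this code is certainly rank degenerate. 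So the discrepancy is a sign error in the statement of item 4 rather than a flaw in your reasoning, but you should flag it explicitly instead of asserting agreement.
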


\section{Skew cyclic codes, linearized polynomials and root spaces} \label{linearized}

In this section we will fix a positive integer $ r $ and assume that $ m $ divides $ rn $, and will use the linearized-polynomial description of skew cyclic codes given in \cite{skewcyclic1, gabidulin, qcyclic, rootskew} to give similar characterizations of lengths and rank degenerateness as in the previous section for general skew cyclic codes. By the discussion after \cite[Remark 2]{rootskew}, assuming that $ m $ divides $ rn $ does not leave any skew cyclic code out of study regarding the lengths $ l_i(C) $.

Denote by $ \mathcal{L}_{q^r} \mathbb{F}_{q^m}[x] $ the ring of $ q^r $-linearized polynomials over $ \mathbb{F}_{q^m} $ (see \cite{gabidulin, orespecial, ore} or \cite[Chapter 3]{lidl}), that is, polynomials of the form 
$$ F(x) = F_0 x + F_1 x^{[r]} + F_2 x^{[2r]} + \cdots + F_d x^{[dr]}, $$
where $ F_0, F_1, F_2, \ldots, F_d \in \mathbb{F}_{q^m}[x] $, and where we consider composition of maps $ \otimes $ as product. We also define the $ q^r $-degree of $ F(x) $ as $ \deg_{q^r}(F(x)) = d $ if $ F_d \neq 0 $. 

Recall that $ q^r $-linearized polynomials over $ \mathbb{F}_{q^m} $ define $ \mathbb{F}_{q^r} $-linear maps between field extensions of $ \mathbb{F}_{q^r} $ and their compositions as such define again $ q^r $-linearized polynomials over $ \mathbb{F}_{q^m} $. This ring constitutes an Euclidean domain on the right and on the left \cite{gabidulin, orespecial, ore}, but we will always consider divisibility on the right. We will also use the term ``conventional'' to refer to the usual product and divisibility of polynomials.

Since $ m $ divides $ rn $, $ x^{[rn]} - x $ commutes with every other $ q^r $-linearized polynomial over $ \mathbb{F}_{q^m} $ and the left ideal $ (x^{[rn]} - x) $ is two-sided. Thus, we may consider the ring $ \mathcal{L}_{q^r} \mathbb{F}_{q^m}[x] / (x^{[rn]} - x) $, which is isomorphic to $ \mathbb{F}_{q^m}^n $ as a vector space.

Linear $ q^r $-cyclic codes correspond to left ideals in $ \mathcal{L}_{q^r} \mathbb{F}_{q^m}[x] / (x^{[rn]} - x) $ \cite{skewcyclic1, gabidulin, qcyclic}. Fix one $ C \subseteq \mathbb{F}_{q^m}^n $. It has unique generator polynomial $ G(x) $ and check polynomial $ H(x) $ with the same properties as in the usual case \cite{skewcyclic1, qcyclic, rootskew}: $ G(x) $ is of minimal degree and monic, and $ x^{[rn]} - x = G(x) \otimes H(x) = H(x) \otimes G(x) $.

For a given $ F(x) \in \mathcal{L}_{q^r} \mathbb{F}_{q^m}[x] $, we will also write $ F = F(x) + (x^{[rn]} - x) $, the residue class of $ F(x) $ modulo $ x^{[rn]} - x $. Recall that, since $ q^r $-linearized polynomials induce $ \mathbb{F}_{q^r} $-linear maps, their root sets are $ \mathbb{F}_{q^r} $-linear vector spaces. We may denote by $ Z(F) $ the $ \mathbb{F}_{q^r} $-linear space of zeroes in $ \mathbb{F}_{q^{rn}} $ of $ F(x) \in \mathcal{L}_{q^r} \mathbb{F}_{q^m}[x] $. This definition is consistent, since two $ q^r $-polynomials $ F_1(x) $ and $ F_2(x) $ have the same roots in $ \mathbb{F}_{q^r} $ if $ F_1(x) - F_2(x) \in (x^{[rn]} - x) $.

On the other hand, the $ s $-th Frobenius map $ \theta_s $ defines also a ring automorphism $ \theta_s :\mathcal{L}_{q^r} \mathbb{F}_{q^m}[x] \longrightarrow \mathcal{L}_{q^r} \mathbb{F}_{q^m}[x] $ using the same formula as in the conventional case (\ref{Frobenius ring auto}) and induces a ring automorphism of $ \mathcal{L}_{q^r} \mathbb{F}_{q^m}[x] / (x^{[rn]} - x) $, since $ \theta_r(x^{[rn]} - x) = x^{[rn]} - x $. \\

In this section we will consider the $ q^r $-cyclic structure of linear Galois closed spaces. However, describing generator and check polynomials of $ C^\perp $, $ C^* $ and $ C^0 $ is not as straightforward as in the conventional case. Given a $ q^r $-polynomial $ F(x) = F_0x + F_1 x^{[r]} + \cdots + F_{d} x^{[rd]} \in \mathcal{L}_{q^r} \mathbb{F}_{q^m}[x] $ that divides $ x^{[rn]} - x $, with $ F_d \neq 0 $, we define:
\begin{equation} \label{dual lin 1}
F^\perp(x) = \left(\frac{F_d}{F_0^{[dr]}} \right) x + \left(\frac{F_{d-1}^{[r]}}{F_0^{[dr]}} \right) x^{[r]} + \cdots + \left(\frac{F_0^{[dr]}}{F_0^{[dr]}} \right) x^{[dr]},
\end{equation}
\begin{equation} \label{dual lin 2}
F^\top(x) = \left(\frac{F_d}{F_0} \right)^{[(n-d)r]} x + \left(\frac{F_{d-1}}{F_0} \right)^{[(n-d+1)r]} x^{[r]} + \cdots + \left(\frac{F_0}{F_0} \right)^{[nr]} x^{[dr]},
\end{equation}
\begin{equation} \label{star lin 1}
F^*(x) = {\rm gcd}(F(x), \theta_1(F(x)), \ldots, \theta_{m-1}(F(x))),
\end{equation}
\begin{equation} \label{zero lin 1}
F^0(x) = {\rm lcm}(F(x), \theta_1(F(x)), \ldots, \theta_{m-1}(F(x))),
\end{equation}
\begin{equation} \label{star lin 2}
F_*(x) = {\rm gcd}(F(x)^\perp, \theta_1(F(x))^\perp, \ldots, \theta_{m-1}(F(x))^\perp)^\top,
\end{equation}
\begin{equation} \label{zero lin 2}
F_0(x) = {\rm lcm}(F(x)^\perp, \theta_1(F(x))^\perp, \ldots, \theta_{m-1}(F(x))^\perp)^\top.
\end{equation}

As in the previous section, we have the following:

\begin{lemma}
For any $ q^r $-polynomial $ F(x) \in \mathcal{L}_{q^r} \mathbb{F}_{q^m}[x] $, it holds that $ F^*(x) $, $ F^0(x) $, $ F_*(x) $, $ F_0(x) \in \mathcal{L}_{q^r} \mathbb{F}_{q}[x] $. 
\end{lemma}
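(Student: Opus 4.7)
The plan is to mirror the argument used for the conventional analogue earlier in the paper. The key observation is that $ \theta_1 $, extended to $ \mathcal{L}_{q^r} \mathbb{F}_{q^m}[x] $ by the formula $ (\ref{Frobenius ring auto}) $ applied coefficient-wise, is a ring automorphism of the (right) Euclidean domain $ \mathcal{L}_{q^r} \mathbb{F}_{q^m}[x] $, as stated in the preamble of the section. Moreover, a $ q^r $-linearized polynomial $ F(x) = \sum_i F_i x^{[ri]} $ is fixed by $ \theta_1 $ if and only if every coefficient $ F_i $ lies in $ \mathbb{F}_q $; in other words, the subring fixed by $ \theta_1 $ is exactly $ \mathcal{L}_{q^r} \mathbb{F}_q[x] $. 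So the lemma reduces to checking $ \theta_1 $-invariance of $ F^*(x), F^0(x), F_*(x), F_0(x) $.

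First I would handle $ F^*(x) $ and $ F^0(x) $. Since we work with right divisibility and normalize (say, to monic representatives), the gcd and lcm are uniquely determined by the set of arguments. The ring automorphism $ \theta_1 $ sends right divisors to right divisors, so it commutes with gcd and lcm once applied to the whole tuple of arguments. But the tuple $ \{F(x), \theta_1(F(x)), \ldots, \theta_{m-1}(F(x))\} $ is sent by $ \theta_1 $ to $ \{\theta_1(F(x)), \theta_2(F(x)), \ldots, \theta_m(F(x))\} $, and $ \theta_m = \mathrm{id} $, so the set is merely permuted. Hence $ \theta_1(F^*(x)) = F^*(x) $ and $ \theta_1(F^0(x)) = F^0(x) $, which puts both polynomials in $ \mathcal{L}_{q^r} \mathbb{F}_q[x] $.

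For $ F_*(x) $ and $ F_0(x) $ the extra ingredient I need is that $ \theta_1 $ commutes with the two operations $ \perp $ and $ \top $ defined by $ (\ref{dual lin 1}) $ and $ (\ref{dual lin 2}) $. This is a direct verification from the explicit formulas: the coefficients of $ F^\perp $ and $ F^\top $ are built by dividing iterated Frobenius images of the coefficients of $ F $, and the Frobenius powers $ [1] $, $ [r] $, $ [ir] $ all commute with one another and with the field-theoretic division $ a/b = ab^{-1} $ in $ \mathbb{F}_{q^m}^* $. Granting this, the permutation argument of the previous paragraph applies verbatim to the tuple $ \{\theta_i(F(x))^\perp\}_{i=0}^{m-1} $, since $ \theta_1(\theta_i(F(x))^\perp) = \theta_{i+1}(F(x))^\perp $ cycles the tuple. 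Thus $ \theta_1 $ fixes the gcd and lcm of this tuple, and because $ \theta_1 $ also commutes with the final application of $ \top $, we conclude $ \theta_1(F_*(x)) = F_*(x) $ and $ \theta_1(F_0(x)) = F_0(x) $, as required.

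The only genuinely non-routine step is the identity $ \theta_1 \circ \perp = \perp \circ \theta_1 $ (and its $ \top $-analogue), but this is a short coefficient-by-coefficient calculation with no subtlety, so I expect the entire proof to be compact.
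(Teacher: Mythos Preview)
Your proposal is correct and follows essentially the same route as the paper: the paper's proof likewise argues that $\theta_1$ permutes the set $\{F(x),\theta_1(F(x)),\ldots,\theta_{m-1}(F(x))\}$ and, being a ring automorphism, fixes its gcd and lcm, and then handles $F_*(x)$ and $F_0(x)$ by invoking the identities $\theta_1(F^\perp(x))=\theta_1(F(x))^\perp$ and $\theta_1(F^\top(x))=\theta_1(F(x))^\top$ exactly as you do.
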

\begin{proof}
Since $ \theta_1 $ leaves the set $ \{ F(x), \theta_1(F(x)), \ldots, \theta_{m-1}(F(x)) \} $ invariant and is a ring automorphism, it holds that $ \theta_1(F^*(x)) = F^*(x) $ and $ \theta_1(F^0(x)) = F^0(x) $. Observing that $ \theta_1(F^\perp(x)) = \theta_1(F(x))^\perp $ and $ \theta_1(F^\top(x)) = \theta_1(F(x))^\top $, we see that $ \theta_1(F_*(x)) = F_*(x) $ and $ \theta_1(F_0(x)) = F_0(x) $. Hence the result follows.
\end{proof}

\begin{remark}
Observe that taking $ r = m $, we obtain a $ q^m $-linearized description of cyclic codes and $ n $ may be arbitrary. Observe that $ \mathcal{L}_{q^m} \mathbb{F}_{q^m}[x] $ is commutative and naturally isomorphic to $ \mathbb{F}_{q^m}[x] $ by the map given in \cite[Definition 3.58]{lidl}:
\begin{equation} \label{conventional equi to linearized}
L(f_0 + f_1x + \cdots + f_d x^d) = f_0 x + f_1 x^{[m]} + \cdots + f_d x^{[md]}.
\end{equation}
In particular, $ L(x^n-1) = x^{[mn]} - x $. Moreover, in such case $ F^*(x) = F_*(x) $, $ F^0(x) = F_0(x) $ and $ F^\perp(x) = F^\top(x) $, and coincide by the previous ring isomorphism to the definitions in (\ref{star}), (\ref{zero}) and (\ref{dual}), respectively.

Hence the results in this section give those in the previous one that have to do with divisibility of generator and check polynomials. However, the root description will be essentially different. As we will see, it will not be necessary to assume that $ q $ and $ n $ are coprime, and finding roots of linearized polynomials can always be done efficiently (see \cite[Chapter 3]{lidl}). 
\end{remark}

Before going on, we will establish a result analogous to Proposition \ref{polynomials}. In the linearized case, if $ G(x) $ and $ H(x) $ are coprime on both sides, then there exist an idempotent generator $ E(x) $ of $ C $ by \cite[Theorem 2]{rootskew}, and the linear skew cyclic code with generator and check polynomials $ H(x) $ and $ G(x) $, respectively, is a complementary space of $ C $ by \cite[Proposition 5]{rootskew}. We denote it by $ C^c $. It also has an idempotent generator given by $ x - E(x) $ \cite[Proposition 5]{rootskew}.

\begin{proposition}
The following are equivalent:
\begin{enumerate}
\item
$ C $ is Galois closed.
\item
$ G(x) \in \mathcal{L}_{q^r} \mathbb{F}_q[x] $.
\item
$ H(x) \in \mathcal{L}_{q^r} \mathbb{F}_q[x] $.
\item
$ Z(G) \subseteq \mathbb{F}_{q^{rn}} $ is an ($ \mathbb{F}_{q^r} $-linear) Galois closed space over $ \mathbb{F}_q $. That is, $ Z(G)^q = Z(G) $ (also called $ q $-modulus in \cite[Chapter 3]{lidl}).
\item
(If $ G(x) $ and $ H(x) $ are coprime on both sides) $ E(x) \in \mathbb{F}_q[x] $.
\item
(If $ G(x) $ and $ H(x) $ are coprime on both sides) $ C^c $ is Galois closed.
\end{enumerate}
\end{proposition}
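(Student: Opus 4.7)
The plan is to mimic the proof of Lemma \ref{polynomials}, treating each equivalence with item 1 separately. The central tool is that $ \theta_1 $ restricts to a ring automorphism of $ \mathcal{L}_{q^r} \mathbb{F}_{q^m}[x] / (x^{[rn]} - x) $ (as recorded before the proposition) which preserves monicity, $ q^r $-degree and idempotency, and whose set of fixed elements is exactly the image of $ \mathcal{L}_{q^r} \mathbb{F}_q[x] $. Since $ \theta_1 $ has order dividing $ m $, Galois closure of $ C $ is equivalent to $ \theta_1(C) = C $.

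For (1)$\Leftrightarrow$(2), applying $ \theta_1 $ to the generator $ G(x) $ yields a monic $ q^r $-linearized polynomial of the same minimal $ q^r $-degree that generates the left ideal associated with $ \theta_1(C) $; uniqueness of the generator polynomial then gives $ \theta_1(C) = C $ iff $ \theta_1(G(x)) = G(x) $, i.e.\ iff $ G(x) \in \mathcal{L}_{q^r} \mathbb{F}_q[x] $. The argument for (1)$\Leftrightarrow$(3) is analogous after applying $ \theta_1 $ to $ x^{[rn]} - x = G(x) \otimes H(x) $ and invoking uniqueness of the check polynomial. The argument for (1)$\Leftrightarrow$(5) is the same, since under the coprimeness hypothesis $ \theta_1(E(x)) $ is again idempotent and generates $ \theta_1(C) $, so the uniqueness of the idempotent generator from \cite[Theorem 2]{rootskew} applies. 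Finally (1)$\Leftrightarrow$(6) follows by applying (1)$\Leftrightarrow$(2) and (1)$\Leftrightarrow$(3) to $ C^c $, whose generator and check polynomials are $ H(x) $ and $ G(x) $ respectively by \cite[Proposition 5]{rootskew}.

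For (1)$\Leftrightarrow$(4), the key identity is $ Z(\theta_1(G)) = Z(G)^q $: writing $ G(x) = \sum_i G_i x^{[ri]} $, if $ G(\alpha) = 0 $ then raising to the $ q $-th power gives $ \sum_i G_i^q \alpha^{q \cdot [ri]} = \theta_1(G)(\alpha^q) = 0 $, and the reverse inclusion comes from the same argument applied to $ \theta_1^{-1} $. Combined with the fact that a monic $ q^r $-linearized right-divisor of $ x^{[rn]} - x $ is determined by its $ \mathbb{F}_{q^r} $-linear root space in $ \mathbb{F}_{q^{rn}} $ (its $ q^r $-degree equals the $ \mathbb{F}_{q^r} $-dimension of that space and it splits completely there, cf.\ \cite[Chapter 3]{lidl}), we obtain $ G = \theta_1(G) $ iff $ Z(G) = Z(G)^q $, and (1)$\Leftrightarrow$(4) follows from (1)$\Leftrightarrow$(2).

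The only subtle point, and the expected main obstacle, is the bijection used in the last step: one must verify that the lattice of monic right-divisors of $ x^{[rn]} - x $ in the non-commutative ring $ \mathcal{L}_{q^r} \mathbb{F}_{q^m}[x] $ corresponds bijectively to the lattice of $ \mathbb{F}_{q^r} $-linear subspaces of $ \mathbb{F}_{q^{rn}} $. This is standard linearized-polynomial theory, and notably works without any coprimality assumption (in contrast with Lemma \ref{polynomials}) because $ x^{[rn]} - x $ is always separable. Everything else reduces to applying the automorphism $ \theta_1 $ to a polynomial determined by a uniqueness property.
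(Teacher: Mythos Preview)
Your proposal is correct and follows exactly the approach the paper takes: the paper's own proof consists of the single sentence ``Analogous to that of Proposition~\ref{polynomials}'' (i.e.\ Lemma~\ref{polynomials}), and you have carefully spelled out that analogy, including the one genuinely new ingredient---the bijection between monic right-divisors of $x^{[rn]}-x$ and $\mathbb{F}_{q^r}$-subspaces of $\mathbb{F}_{q^{rn}}$---needed to make (1)$\Leftrightarrow$(4) go through without a coprimality hypothesis.
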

\begin{proof}
Analogous to that of Proposition \ref{polynomials}. \\
\end{proof}

It is proven in \cite{skewcyclic2, gabidulin, qcyclic} that $ H^\perp(x) $ is the generator polynomial of $ C^\perp $. We now find its check polynomial:

\begin{lemma} \label{lin check dual comp}
The check polynomial of $ C^\perp $ is $ G^\top(x) $.
\end{lemma}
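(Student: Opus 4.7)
Since $H^\perp(x)$ is the generator polynomial of $C^\perp$ (as cited from \cite{skewcyclic2, gabidulin, qcyclic}), the check polynomial of $C^\perp$ is the unique monic $q^r$-polynomial $K(x)$ of $q^r$-degree $n - \deg_{q^r}(H^\perp) = n - k = \deg_{q^r}(G)$ satisfying $H^\perp(x) \otimes K(x) = x^{[rn]} - x$, where $k = \deg_{q^r}(H)$. Now $G^\top(x)$ has the same $q^r$-degree as $G$ and is monic (its leading coefficient is $(G_0/G_0)^{[nr]} = 1$ by (\ref{dual lin 2})). Hence the lemma reduces to proving the identity
\[
H^\perp(x) \otimes G^\top(x) \;=\; x^{[rn]} - x.
\]

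My plan is a direct coefficient comparison. Writing $G(x) = \sum_{j=0}^{n-k} G_j x^{[jr]}$ and $H(x) = \sum_{i=0}^{k} H_i x^{[ir]}$, expansion of the composition $H^\perp \otimes G^\top$ from the formulas (\ref{dual lin 1}) and (\ref{dual lin 2}), followed by the reindexing $i \mapsto k-i$ and $j \mapsto n-k-j$, yields that the coefficient of $x^{[lr]}$ in $H^\perp \otimes G^\top$ equals
\[
\frac{1}{H_0^{[kr]}\,G_0^{[(k+l)r]}} \sum_{\substack{i+j=n-l \\ 0 \le i \le k,\ 0 \le j \le n-k}} H_i^{[(k-i)r]}\,G_j^{[(k+l)r]}.
\]
The task is then to show that this matches $-1$ at $l=0$, vanishes for $0 < l < n$, and equals $1$ at $l = n$.

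For $0 < l < n$, I would invoke the other order of the hypothesis, $G(x) \otimes H(x) = x^{[rn]} - x$: equating coefficients of $x^{[(n-l)r]}$ gives $\sum_{i+j=n-l} G_j H_i^{[jr]} = 0$. Applying the Frobenius $\theta_{(k+l)r}$ (a ring automorphism of $\mathbb{F}_{q^m}$) transforms this into $\sum_{i+j=n-l} G_j^{[(k+l)r]} H_i^{[(j+k+l)r]} = 0$. Substituting $j = n-l-i$, the $H_i$-exponent rewrites as $(n+k-i)r$, and the crucial reduction $H_i^{[rn+s]} = H_i^{[s]}$ — valid because $H_i \in \mathbb{F}_{q^m}$ and $m \mid rn$ — collapses this to $H_i^{[(k-i)r]}$, matching the target sum term-by-term. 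The boundary cases $l = 0$ and $l = n$ reduce to a single summand each; using monicity (so $H_k = G_{n-k} = 1$), the relation $G_0 H_0 = -1$ (from the constant coefficient of $G \otimes H$), and $(-1)^{[kr]} = -1$ in $\mathbb{F}_{q^m}$, these produce $-1$ and $1$ respectively, completing the identity. The only genuine subtlety is the Frobenius reduction, which is precisely where the standing hypothesis $m \mid rn$ enters in an essential way; the rest is routine bookkeeping.
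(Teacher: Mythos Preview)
Your argument is correct and self-contained, but it follows a different route from the paper. The paper does not compute $H^\perp \otimes G^\top$ directly; instead it invokes the result from \cite{skewcyclic2} that $C^\perp$ admits a parity check matrix whose rows are the successive $q^r$-shifts of the coefficient vector of $G(x)$, and then appeals to the uniqueness of this canonical form (\cite[Theorem~1]{rootskew}) to read off the coefficients of the check polynomial of $C^\perp$, obtaining $\widetilde{G}_i^{[(n-k+i)r]} = G_{n-k-i}/G_0$; a single application of $\theta_{(k+i)r}$ together with $m \mid rn$ then yields $\widetilde{G}_i = (G_{n-k-i}/G_0)^{[(k+i)r]}$, which is exactly $(G^\top)_i$. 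Your approach trades this structural shortcut for an explicit coefficient computation: you verify $H^\perp \otimes G^\top = x^{[rn]}-x$ term by term, reducing via a Frobenius twist to the known factorization $G \otimes H = x^{[rn]}-x$. The paper's proof is shorter and leans on existing machinery about parity check matrices; yours is more elementary in that it needs only the definition of the check polynomial and the Euclidean property, at the cost of more bookkeeping. Both arguments hinge on the same essential reduction $a^{[rn+s]} = a^{[s]}$ for $a \in \mathbb{F}_{q^m}$, which is where $m \mid rn$ enters.
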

\begin{proof}
Let $ \widetilde{G}(x) = \widetilde{G}_0 x + \widetilde{G}_1 x^{[r]} + \cdots + \widetilde{G}_{n-k} x^{[(n-k)r]} $ be the check polynomial of $ C^\perp $. It is shown in \cite{skewcyclic2} that $ C^\perp $ has a parity check matrix of the form
\begin{displaymath}
\left(
\begin{array}{ccccccc}
G_0 & G_1 & \ldots & G_{n-k} & 0 & \ldots & 0 \\
0 & G_0^{[r]} & \ldots & G_{n-k-1}^{[r]} & G_{n-k}^{[r]} & \ldots & 0 \\
\vdots & \vdots & \ddots & \vdots & \vdots & \ddots & \vdots \\
0 & 0 & \ldots & G_0^{[(k-1)r]} & G_1^{[(k-1)r]} & \ldots & G_{n-k}^{[(k-1)r]} \\
\end{array} \right).
\end{displaymath}
By \cite[Theorem 1, items 4 and 6]{rootskew}, there is a unique parity check matrix of that form and hence it holds that $ \widetilde{G}_i^{[(n-k+i)r]} = G_{n-k-i}/G_0 $. Raising this equality to the power $ [(k+i)r] $ we obtain $ \widetilde{G}_i = \widetilde{G}_i^{[nr]} = (G_{n-k-i}/G_0)^{[(k+i)r]} $, for $ i = 0,1,2, \ldots, n-k $, since $ m $ divides $ rn $, and we are done.
\end{proof}

On the other hand, we have the following:

\begin{lemma} \label{perp top}
For a $ q^r $-polynomial $ F(x) = F_0x + F_1 x^{[r]} + \cdots + F_{d} x^{[rd]} \in \mathcal{L}_{q^r} \mathbb{F}_{q^m}[x] $ that divides $ x^{[rn]} - x $, with $ F_d \neq 0 $, it holds that 
$$ F^{\perp \top}(x) = F^{\top \perp}(x) = F(x) / F_d, $$
$$ (F_*)^\perp(x) = (F^\perp)^*(x) \quad \textrm{and} \quad (F_0)^\perp(x) = (F^\perp)^0(x), $$
and analogously replacing $ \perp $ by $ \top $ in the last two equalities.
\end{lemma}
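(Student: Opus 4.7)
The plan is direct coefficient-wise verification from the definitions (\ref{dual lin 1})--(\ref{zero lin 2}), using the hypothesis $m\mid rn$ (which gives $a^{[rn]}=a$ for every $a\in\mathbb{F}_{q^m}$) and the fact that the Frobenius $\theta_s$ is a ring automorphism of $\mathcal{L}_{q^r}\mathbb{F}_{q^m}[x]$ which commutes with both $\perp$ and $\top$ (since each is defined purely through coefficient operations and Frobenius twists).

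First I would establish $F^{\perp\top}=F/F_d$ by writing the coefficients of $F^\perp$ from (\ref{dual lin 1}) as $G_j=F_{d-j}^{[jr]}/F_0^{[dr]}$, and then substituting these into (\ref{dual lin 2}) to read off the coefficient of $x^{[jr]}$ in $F^{\perp\top}$, namely $(G_{d-j}/G_0)^{[(n-d+j)r]}$. After combining Frobenius exponents and simplifying the resulting $F_0$-powers via $a^{[rn]}=a$, this collapses to $F_j/F_d$. The reverse composition $F^{\top\perp}=F/F_d$ follows by the same book-keeping in the opposite order: expand $F^\top$ via (\ref{dual lin 2}), feed into (\ref{dual lin 1}), and simplify.

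Next, for the equalities $(F_*)^\perp=(F^\perp)^*$ and $(F_0)^\perp=(F^\perp)^0$: definitions (\ref{star lin 2}) and (\ref{zero lin 2}) read $F_*=((F^\perp)^*)^\top$ and $F_0=((F^\perp)^0)^\top$. In the Euclidean domain $\mathcal{L}_{q^r}\mathbb{F}_{q^m}[x]$ we may take gcd and lcm to be monic, so $(F^\perp)^*$ and $(F^\perp)^0$ have leading coefficient $1$. The first identity applied to a monic polynomial $P$ yields $P^{\top\perp}=P$, so applying $\perp$ to the defining formulas for $F_*$ and $F_0$ immediately produces the two claimed equalities.

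For the analogous identities with $\top$, namely $(F_*)^\top=(F^\top)^*$ and $(F_0)^\top=(F^\top)^0$, the strategy is to derive the companion characterizations $F_*=((F^\top)^*)^\perp$ and $F_0=((F^\top)^0)^\perp$. Once these are available, the first identity collapses the outer $\perp\top$ and the claimed equalities drop out. To see the companion formulas hold, one exploits that $\perp$ and $\top$ are mutually inverse on monic polynomials (by the first identity) together with the fact that $\theta_s$ commutes with both dualities, so that the monic $\gcd$ (resp.\ $\operatorname{lcm}$) of $\{\theta_i(F)^\perp\}$ corresponds under the $\perp\leftrightarrow\top$ bijection to the monic $\gcd$ (resp.\ $\operatorname{lcm}$) of $\{\theta_i(F)^\top\}$. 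The step I expect to be the main obstacle is precisely this symmetric compatibility --- the equality $((F^\perp)^*)^\top=((F^\top)^*)^\perp$ and its lcm analogue --- since it requires genuine interaction between Frobenius twisting and the two dualities, rather than pure coefficient bookkeeping. All remaining steps reduce to tracking Frobenius exponents modulo $m$ and exploiting the normalization $F\mapsto F/F_d$.
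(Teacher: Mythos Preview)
Your proposal follows the paper's own proof exactly: the paper also dispatches $F^{\perp\top}=F^{\top\perp}=F/F_d$ as a direct coefficient computation, then obtains $(F_*)^\perp=(F^\perp)^*$ and $(F_0)^\perp=(F^\perp)^0$ by combining the commutation $\theta_i(F^\perp)=\theta_i(F)^\perp$ with those first identities (collapsing the outer $\top\!\perp$ on a monic $\gcd$/lcm), and handles the $\top$ analogue with the single word ``analogously.'' Your explicit flagging of the $\top$ case as the one step requiring genuine care is thus more scrupulous than the paper itself, but the route is the same.
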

\begin{proof}
The first two equalities are straightforward computations. For the last two equalities, it is enough to observe again that $ \theta_i(F^\perp(x)) = \theta_i(F(x))^\perp $ and use the previous two equalities. Analogously replacing $ \perp $ by $ \top $.
\end{proof}

We will need the following result, which is \cite[Theorem 4]{rootskew}:

\begin{lemma}[\textbf{\cite[Theorem 4]{rootskew}}] \label{theorem 4}
Assume that $ C_1, C_2 \subseteq \mathbb{F}_{q^m}^n $ are linear $ q^r $-cyclic codes with generator polynomials $ G_1(x) $ and $ G_2(x) $, respectively. Then
\begin{enumerate}
\item
$ C_1 \cap C_2 $ is the $ q^r $-cyclic code with generator polynomial $ M(x) = {\rm lcm}(G_1(x), G_2(x)) $ and $ Z(M) = Z(G_1) + Z(G_2) $.
\item
$ C_1 + C_2 $ is the $ q^r $-cyclic code with generator polynomial $ D(x) = {\rm gcd}(G_1(x), G_2(x)) $ and $ Z(D) = Z(G_1) \cap Z(G_2) $.
\end{enumerate}
\end{lemma}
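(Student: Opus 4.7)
The plan is to work throughout in the Euclidean domain $\mathcal{L}_{q^r}\mathbb{F}_{q^m}[x]$ (with respect to right division, as declared in the section), identifying linear $q^r$-cyclic codes $C \subseteq \mathbb{F}_{q^m}^n$ with left ideals of $\mathcal{L}_{q^r}\mathbb{F}_{q^m}[x]/(x^{[rn]}-x)$, so that $C \subseteq C'$ is equivalent to ``the generator of $C'$ right-divides the generator of $C$''. The twin ingredients will be this dictionary and a Bezout identity: since the ring is left Euclidean, for any two polynomials we have $U(x), V(x)$ with
\[
 D(x) = U(x) \otimes G_1(x) + V(x) \otimes G_2(x),
\]
where $D(x) = \gcd(G_1(x), G_2(x))$. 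Note that since both $G_1$ and $G_2$ right-divide $x^{[rn]}-x$, so do $D$ and $M$, so both generate bona fide $q^r$-cyclic codes.

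For the generator statements, I would first observe that $F \in C_1 \cap C_2$ iff $G_1$ and $G_2$ both right-divide $F$, iff their right-least common right multiple $M$ right-divides $F$; hence $C_1 \cap C_2$ is the left ideal generated by $M(x)$, proving item~1's polynomial claim. Dually, the Bezout identity shows that $D$ belongs to the left ideal generated by $G_1, G_2$, while $G_i = Q_i \otimes D$ shows that this left ideal is contained in the one generated by $D$; hence $C_1 + C_2$ has $D(x)$ as generator, which is item~2's polynomial claim.

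For the root spaces, the inclusion $Z(D) \subseteq Z(G_i)$ follows from $G_i(x) = Q_i(x) \otimes D(x)$, since then $G_i(\alpha) = Q_i(D(\alpha)) = 0$ whenever $D(\alpha) = 0$; conversely, Bezout gives $D(\alpha) = U(G_1(\alpha)) + V(G_2(\alpha)) = 0$ for every $\alpha \in Z(G_1) \cap Z(G_2)$, so $Z(D) = Z(G_1) \cap Z(G_2)$. Analogously $Z(G_i) \subseteq Z(M)$ because $G_i$ right-divides $M$, and since $Z(M)$ is an $\mathbb{F}_{q^r}$-subspace one obtains $Z(G_1) + Z(G_2) \subseteq Z(M)$. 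For the reverse inclusion I would use a dimension count: for every $q^r$-polynomial $F$ that right-divides the separable polynomial $x^{[rn]}-x$, one has $\dim_{\mathbb{F}_{q^r}} Z(F) = \deg_{q^r} F$; combining this with $\deg_{q^r} M + \deg_{q^r} D = \deg_{q^r} G_1 + \deg_{q^r} G_2$ and the Grassmann formula
\[
 \dim(Z(G_1) + Z(G_2)) + \dim(Z(G_1) \cap Z(G_2)) = \dim Z(G_1) + \dim Z(G_2),
\]
substituting the already proved $Z(D) = Z(G_1) \cap Z(G_2)$ forces $\dim Z(M) = \dim (Z(G_1) + Z(G_2))$, hence equality of the two spaces.

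The step I expect to carry the real weight is the dimension identity $\dim_{\mathbb{F}_{q^r}} Z(F) = \deg_{q^r} F$ for right divisors $F$ of $x^{[rn]}-x$: this requires that such an $F$ split completely over $\mathbb{F}_{q^{rn}}$ with simple (hence $\mathbb{F}_{q^r}$-many) roots, which is exactly why the hypothesis $m \mid rn$ was imposed at the start of the section so that $x^{[rn]}-x$ is separable and central in the ambient quotient. Once this structural fact is in place, the remainder is a clean transcription of the conventional $\gcd$/$\mathrm{lcm}$ argument into the linearized (skew) setting via the Bezout identity.
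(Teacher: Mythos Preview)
The paper does not prove this lemma; it is quoted from \cite[Theorem~4]{rootskew} and used as a black box. Your argument is correct and self-contained: the left-ideal/generator dictionary together with the B\'ezout identity handles the polynomial claims, and the dimension count via $\dim_{\mathbb{F}_{q^r}} Z(F) = \deg_{q^r}(F)$ (valid because any right divisor $F$ of $x^{[rn]}-x$ has nonzero coefficient of $x$, hence simple roots, all of which lie in $\mathbb{F}_{q^{rn}}$) combined with the degree identity $\deg_{q^r} M + \deg_{q^r} D = \deg_{q^r} G_1 + \deg_{q^r} G_2$ and Grassmann's formula cleanly closes the root-space statements.
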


Finally, we may compute the generator and check polynomials of $ C^* $ and $ C^0 $, seen as $ q^r $-cyclic codes:

\begin{proposition} \label{gen skew closure}
The generator and check polynomials of $ C^* $ are $ G^*(x) $ and $ H_0(x) $, respectively, and the generator and check poylnomials of $ C^0 $ are $ G^0(x) $ and $ H_*(x) $, respectively.
\end{proposition}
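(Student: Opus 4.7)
The plan is to mirror the strategy of Proposition \ref{gen and check of closure} but translated to the linearized ring, replacing the classical product/coprime factorization $g(x)h(x) = x^n - 1$ (which no longer makes sense as a clean identity on lcms/gcds in the non-commutative setting) by the dual-code trick together with Lemma \ref{lin check dual comp} and Lemma \ref{perp top}.

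First, I would check that $C^{[i]}$ is a $q^r$-cyclic code with generator polynomial $\theta_i(G(x))$. Since $\theta_i$ commutes with $\sigma_{r,n}$, the code $C^{[i]}$ is $q^r$-cyclic. Under the identification $(c_0,\ldots,c_{n-1}) \leftrightarrow c_0 x + c_1 x^{[r]} + \cdots + c_{n-1} x^{[(n-1)r]}$, componentwise application of $\theta_i$ corresponds to applying $\theta_i$ to the coefficients of the polynomial representation, and because $\theta_i$ is a ring automorphism of $\mathcal{L}_{q^r}\mathbb{F}_{q^m}[x]/(x^{[rn]}-x)$, the left ideal $C^{[i]}$ is generated by $\theta_i(G(x))$, with check polynomial $\theta_i(H(x))$.

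Second, I would iterate Lemma \ref{theorem 4}. Writing $C^* = C + C^{[1]} + \cdots + C^{[m-1]}$, item 2 of that lemma gives inductively that the generator polynomial of $C^*$ is $\gcd(G(x), \theta_1(G(x)), \ldots, \theta_{m-1}(G(x))) = G^*(x)$. Similarly, from $C^0 = C \cap C^{[1]} \cap \cdots \cap C^{[m-1]}$ and item 1 of Lemma \ref{theorem 4}, the generator polynomial of $C^0$ is $G^0(x)$.

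Third, to recover the check polynomials I would use the duality identities $(C^*)^\perp = (C^\perp)^0$ and $(C^0)^\perp = (C^\perp)^*$. Since $C^\perp$ has generator polynomial $H^\perp(x)$, applying the result just proved to $C^\perp$ shows that $(C^\perp)^0$ has generator polynomial $(H^\perp)^0(x)$ and $(C^\perp)^*$ has generator polynomial $(H^\perp)^*(x)$. If $Q^*(x)$ and $Q^0(x)$ denote the check polynomials of $C^*$ and $C^0$ respectively, then by Lemma \ref{lin check dual comp} the generator polynomial of $(C^*)^\perp$ is $(Q^*)^\perp(x)$ and of $(C^0)^\perp$ is $(Q^0)^\perp(x)$. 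Hence $(Q^*)^\perp(x) = (H^\perp)^0(x)$ and $(Q^0)^\perp(x) = (H^\perp)^*(x)$. Applying $\top$ to both sides and invoking Lemma \ref{perp top}, together with the observation $\theta_i(H)^\perp = \theta_i(H^\perp)$ used in the proof of the lemma on $\mathbb{F}_q$-rationality, we get $Q^*(x) = ((H^\perp)^0)^\top(x) = H_0(x)$ and $Q^0(x) = ((H^\perp)^*)^\top(x) = H_*(x)$, up to a leading-coefficient scalar.

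The main delicacy is precisely that last normalization: I would need to check that check polynomials are "monic" in the sense that $H_{n-k}=1$ (which follows from $H \otimes G = x^{[rn]}-x$ together with $G$ being monic) and that the operations $F \mapsto F^\top$ and $F \mapsto F^0, F^*$ preserve monic leading coefficient, so that the scalar ambiguity in $F^{\perp\top} = F/F_d$ is resolved and the equalities hold on the nose rather than only up to $\mathbb{F}_{q^m}^*$. Modulo these routine bookkeeping checks, the proof is complete.
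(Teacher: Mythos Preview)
Your proposal is correct and follows essentially the same approach as the paper: both use Lemma \ref{theorem 4} to identify the generator polynomials of $C^*$ and $C^0$ as $G^*(x)$ and $G^0(x)$, and both recover the check polynomials via duality by passing through $(C_1+C_2)^\perp = C_1^\perp \cap C_2^\perp$ (equivalently $(C^*)^\perp = (C^\perp)^0$) together with Lemma \ref{lin check dual comp} to get $H_0(x)$ and $H_*(x)$. The only cosmetic difference is that the paper first proves the two-code formula ``check polynomial of $C_1+C_2$ equals ${\rm lcm}(H_1^\perp,H_2^\perp)^\top$'' and then iterates, whereas you iterate first and dualize once at the end; your explicit attention to the monic normalization is a point the paper leaves implicit.
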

\begin{proof}
By the previous lemma, if $ C_1, C_2 \subseteq \mathbb{F}_{q^m}^n $ are $ q^r $-cyclic codes with generator polynomials $ G_1(x), G_2(x) $, respectively, and check polynomials $ H_1(x), H_2(x) $, respectively, it holds that $ C_1 + C_2 $ and $ (C_1+C_2)^\perp = C_1^\perp \cap C_2^\perp $ have generator polynomials $ {\rm gcd}(G_1(x),G_2(x)) $ and $ {\rm lcm}(H_1^\perp(x), H_2^\perp(x)) $ (on the right), respectively. By the previous lemma and Lemma \ref{lin check dual comp}, the check polynomial of $ C_1 + C_2 $ is then $ {\rm lcm}(H_1^\perp(x), H_2^\perp(x))^\top $.

We obtain the result for $ C^* $ by applying this iteratedly to $ C, \theta_1(C), \theta_2(C), \ldots, \theta_{m-1}(C) $, observing that the generator and check polynomials of $ \theta_i(C) $ are $ \theta_i(G(x)) $ and $ \theta_i(H(x)) $, respectively,  for $ i = 0,1,2, \ldots, m-1 $. Similarly for $ C^0 $. \\
\end{proof}

We know from Lemma \ref{galois cyclic} that $ C^* $ and $ C^0 $ are skew cyclic of all orders. In the previous sections we used their cyclic (or $ q^0 $-cyclic) nature and their conventional generator and check polynomials. We may relate them with the generator and check polynomials obtained in the previous proposition.

For that purpose, we define the operator $ L : \mathbb{F}_{q^m}[x] \longrightarrow \mathcal{L}_{q^r} \mathbb{F}_{q^m}[x] $ by
\begin{equation}
L(f_0 + f_1x + \cdots + f_d x^d) = f_0 x + f_1 x^{[r]} + \cdots + f_d x^{[rd]},
\end{equation}
which coincides with the map in (\ref{conventional equi to linearized}) for $ r = m $.

\begin{proposition} \label{associate gen}
Let the notation be as in the previous proposition, and let $ g^*(x) $, $ g^0(x) $ be the generator (conventional) polynomials of $ C^* $ and $ C^0 $, respectively, and let $ h^0(x) $ and $ h^*(x) $ be their check (conventional) polynomials, respectively. Then
$$ G^*(x) = L(g^*(x)), \quad H_0(x) = L(h^0(x)), $$
$$ G^0(x) = L(g^0(x)), \quad \textrm{and} \quad H_*(x) = L(h^*(x)). $$
\end{proposition}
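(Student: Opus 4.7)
My plan is to exploit the fact that both quotient rings $\mathbb{F}_{q^m}[x]/(x^n-1)$ and $\mathcal{L}_{q^r}\mathbb{F}_{q^m}[x]/(x^{[rn]}-x)$ are identified with the same vector space $\mathbb{F}_{q^m}^n$ by reading off coefficient vectors, and that the map $L$ is precisely the vector space isomorphism intertwining these two identifications. Consequently, a polynomial $f(x)\in\mathbb{F}_{q^m}[x]$ of degree at most $n-1$ represents the same codeword of $\mathbb{F}_{q^m}^n$ as the $q^r$-linearized polynomial $L(f(x))$ does. I will apply this to $C^*$ and $C^0$, which are cyclic by Lemma \ref{galois cyclic} and whose conventional generator and check polynomials lie in $\mathbb{F}_q[x]$ by Lemma \ref{polynomials}.

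The key technical ingredient I will establish is a multiplicativity statement for $L$: for any $a(x)\in\mathbb{F}_{q^m}[x]$ and any $b(x)\in\mathbb{F}_q[x]$,
\[ L(a(x)b(x))=L(a(x))\otimes L(b(x))=L(b(x))\otimes L(a(x)). \]
This reduces to a routine coefficient computation using only that $\theta_i$ fixes $\mathbb{F}_q$, so that $b_j^{[i]}=b_j$ whenever $b_j$ is a coefficient of $b(x)$. In particular $L(x^n-1)=x^{[rn]}-x$, and $L$ restricted to $\mathbb{F}_q[x]$ is a ring homomorphism onto its image.

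With these tools in hand, I first show $G^*(x)=L(g^*(x))$. Since $g^*(x)\in\mathbb{F}_q[x]$ is the conventional generator polynomial of $C^*$, the vector $(g^*_0,\ldots,g^*_{n-k},0,\ldots,0)$, where $k=\dim(C^*)$, belongs to $C^*$, so $L(g^*(x))$ represents a codeword of $C^*$ and is monic of $q^r$-degree $n-k$. The standard basis $G^*(x),x^{[r]}\otimes G^*(x),\ldots,x^{[(k-1)r]}\otimes G^*(x)$ of $C^*$ consists of $q^r$-polynomials of strictly increasing $q^r$-degrees $n-k,n-k+1,\ldots,n-1$, so the only monic element of $C^*$ of $q^r$-degree exactly $n-k$ is $G^*(x)$ itself, giving $G^*(x)=L(g^*(x))$. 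Then applying the multiplicativity lemma to the conventional identity $g^*(x)h^0(x)=x^n-1$ yields $G^*(x)\otimes L(h^0(x))=L(x^n-1)=x^{[rn]}-x$, so by uniqueness of the two-sided complement of $G^*(x)$ in $x^{[rn]}-x$, one has $H_0(x)=L(h^0(x))$. The identities for $C^0$ are proved in exactly the same way, swapping the roles of $g^*(x),h^0(x)$ with $g^0(x),h^*(x)$.

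I expect the only real obstacle to be bookkeeping: carefully verifying the vector space identification between the conventional and linearized quotients, and confirming that the $q^r$-cyclic generator polynomial of a $k$-dimensional code has $q^r$-degree exactly $n-k$ and is the unique monic element of minimal $q^r$-degree in the code. Both facts are standard for $q^r$-cyclic codes but must be invoked precisely for the minimality argument to go through.
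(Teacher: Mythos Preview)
Your approach is correct and is essentially the same idea as the paper's proof, just unpacked in more detail. The paper argues in one line via the uniqueness of the standard generator and parity check matrices attached to the (conventional, resp.\ $q^r$-linearized) generator and check polynomials: since $C^*$ and $C^0$ are Galois closed, these polynomials have coefficients in $\mathbb{F}_q$, the cyclic and $q^r$-cyclic shifts of their coefficient vectors coincide, and hence the two standard matrices agree, forcing $G^*(x)=L(g^*(x))$, etc. Your minimality argument (``the only monic element of $q^r$-degree exactly $n-k$ in $C^*$ is $G^*(x)$'') is exactly the codeword-level version of this uniqueness, and your multiplicativity lemma for $L$ then recovers the check polynomials from $g^*(x)h^0(x)=x^n-1$.

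One small correction: your multiplicativity claim $L(a(x)b(x))=L(b(x))\otimes L(a(x))$ for $a(x)\in\mathbb{F}_{q^m}[x]$ and $b(x)\in\mathbb{F}_q[x]$ is false in general (the coefficient of $x^{[rk]}$ on the right is $\sum_{i+j=k} b_j a_i^{[rj]}$, not $\sum_{i+j=k} b_j a_i$). Only the direction $L(a(x)b(x))=L(a(x))\otimes L(b(x))$ holds with that hypothesis. This does not affect your proof, since in your actual application both $g^*(x)$ and $h^0(x)$ lie in $\mathbb{F}_q[x]$, where $L$ is a genuine ring homomorphism and both identities hold; but you should restate the lemma accordingly.
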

\begin{proof}
It follows from the uniqueness of the generator and parity check matrices for cyclic and $ q^r $-cyclic codes given by their generator and check polynomials. See \cite[Theorem 4.2.1 and Theorem 4.2.7]{pless} for the cyclic case, and \cite{skewcyclic2} and \cite[Theorem 1]{rootskew} for the $ q^r $-cyclic case.
\end{proof}

On the other hand, we have the following relations between the root spaces of the generator and check polynomials of $ C $, $ C^* $ and $ C^0 $, as in Proposition \ref{coprime then}.

\begin{proposition}
It holds that
\begin{enumerate}
\item
$ Z(G^*) = \bigcap_{i=0}^{m-1} Z(G)^{[i]} $ and $ Z(G^0) = \sum_{i=0}^{m-1} Z(G)^{[i]} $.
\item
$ \dim_{\mathbb{F}_{q^r}}(Z(H_*)) = \dim_{\mathbb{F}_{q^r}}( \bigcap_{i=0}^{m-1} Z(H^\perp)^{[i]}) $.
\item
$ \dim_{\mathbb{F}_{q^r}}(Z(H_0)) = \dim_{\mathbb{F}_{q^r}}(\sum_{i=0}^{m-1} Z(H^\perp)^{[i]}) $.
\end{enumerate}
\end{proposition}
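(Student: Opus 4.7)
My plan is to reduce everything to an iterated application of Lemma \ref{theorem 4}, together with two standard ingredients. First, for any $q^r$-linearized polynomial $F$, one has $Z(\theta_i(F)) = Z(F)^{[i]}$, which follows from the identity $\theta_i(F)(a^{[i]}) = F(a)^{[i]}$. Second, for any right-divisor $F$ of $x^{[rn]}-x$, the $q^r$-degree of $F$ equals $\dim_{\mathbb{F}_{q^r}}(Z(F))$; this is because $x^{[rn]}-x$ is separable as a conventional polynomial (its derivative is $-1$), so $F$ has $q^{r\deg_{q^r}(F)}$ distinct roots in $\mathbb{F}_{q^{rn}}$, which form an $\mathbb{F}_{q^r}$-subspace.

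For item 1, since $\theta_i$ is a ring automorphism fixing $x^{[rn]}-x$, each $\theta_i(G)$ right-divides $x^{[rn]}-x$ and hence is the generator polynomial of some $q^r$-cyclic code. Iterating Lemma \ref{theorem 4} $m-1$ times then yields $Z(G^*) = \bigcap_{i=0}^{m-1} Z(\theta_i(G)) = \bigcap_{i=0}^{m-1} Z(G)^{[i]}$, and symmetrically $Z(G^0) = \sum_{i=0}^{m-1} Z(G)^{[i]}$ using the lcm/sum half of Lemma \ref{theorem 4}.

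For items 2 and 3, the key observation is that $\perp$ commutes with $\theta_i$: a direct computation from (\ref{dual lin 1}) shows that both $\theta_i(F^\perp)$ and $\theta_i(F)^\perp$ have coefficient of $x^{[jr]}$ equal to $F_{d-j}^{[jr+i]}/F_0^{[dr+i]}$. Consequently $H_* = (H^\perp)^{*\top}$ and $H_0 = (H^\perp)^{0\top}$. Since $\top$ preserves $q^r$-degree (its leading coefficient is always $1$ by (\ref{dual lin 2})), we have $\deg_{q^r}(H_*) = \deg_{q^r}((H^\perp)^*)$, and similarly for $H_0$. Both $H_*$ (the check polynomial of $C^0$ by Proposition \ref{gen skew closure}) and $(H^\perp)^*$ (a right-divisor of $H^\perp$, and hence of $x^{[rn]}-x$) divide $x^{[rn]}-x$, so by the second ingredient above, their $q^r$-degrees coincide with the $\mathbb{F}_{q^r}$-dimensions of their root spaces. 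Iterated Lemma \ref{theorem 4} applied to $(H^\perp)^* = \gcd(H^\perp, \theta_1(H^\perp), \ldots, \theta_{m-1}(H^\perp))$ then yields $Z((H^\perp)^*) = \bigcap_{i=0}^{m-1} Z(H^\perp)^{[i]}$, and chaining the dimension equalities gives item 2. Item 3 follows identically, replacing gcd/intersection by lcm/sum; that the lcm of right-divisors of $x^{[rn]}-x$ still right-divides $x^{[rn]}-x$ holds because $x^{[rn]}-x$ is a common left-multiple of the inputs.

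The main obstacle is not any single step but the bookkeeping between $\perp$, $\top$, and $\theta_i$ needed to establish $H_* = (H^\perp)^{*\top}$ and $H_0 = (H^\perp)^{0\top}$, together with the fact that $\top$ preserves $q^r$-degree. Once these identities are in hand, the non-commutativity of $\mathcal{L}_{q^r}\mathbb{F}_{q^m}[x]$ causes no further difficulty, since we only invoke the universal properties of gcd/lcm as generators of sums/intersections of left ideals, exactly as packaged in Lemma \ref{theorem 4}.
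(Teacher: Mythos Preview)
Your proof is correct and follows essentially the same route as the paper's. The only cosmetic difference is that you pass from $H_*$ to $(H^\perp)^*$ via the identity $H_* = ((H^\perp)^*)^\top$ together with the fact that $\top$ preserves $q^r$-degree, whereas the paper passes from $H_*$ to $(H_*)^\perp$ (using that $\perp$ preserves $q^r$-degree) and then invokes Lemma~\ref{perp top} to identify $(H_*)^\perp = (H^\perp)^*$; these are the same computation read in opposite directions.
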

\begin{proof}
The first item follows from Lemma \ref{theorem 4} and the fact that $ Z(\theta_i(G)) = Z(G)^{[i]} $, for $ i = 0,1,2, \ldots, m-1 $.

On the other hand, since $ H_*(x) $ divides $ x^{[rn]} - x $ on the right, it also divides it conventionally, and hence it has simple roots. Hence it holds that $ \dim_{\mathbb{F}_{q^r}}(Z(H_*)) = \deg_{q^r}(H_*(x)) $ and similarly for $ (H_*)^\perp(x) $. Thus
$$ \dim_{\mathbb{F}_{q^r}}(Z(H_*)) = \deg_{q^r}(H_*(x)) = \deg_{q^r}((H_*)^\perp(x)) = \dim_{\mathbb{F}_{q^r}}(Z((H_*)^\perp)). $$
Again by Lemma \ref{theorem 4} and Lemma \ref{perp top}, we have that
$$ Z((H_*)^\perp) = \bigcap_{i=0}^{m-1} Z(H^\perp)^{[i]}, $$
using again the fact that $ Z(\theta_i(H^\perp)) = Z(H^\perp)^{[i]} $, for $ i = 0,1,2, \ldots, m-1 $. Therefore item 2 follows. Item 3 is proven in a similar way.
\end{proof}

We may now state a similar result to Theorem \ref{rank degenerate cyclic}:

\begin{theorem} \label{rank degenerate skew cyclic}
It holds that
$$ l_R(C) = n - \deg_{q^r}({\rm gcd}(G(x), \theta_1(G(x)), \ldots, \theta_{m-1}(G(x)))) $$
$$ = \deg_{q^r}({\rm lcm}(H^\perp(x), \theta_1(H^\perp(x)), \ldots, \theta_{m-1}(H^\perp(x)))) $$
$$ = n - \dim_{\mathbb{F}_{q^r}} \left( \bigcap_{i=0}^{m-1} Z(G)^{[i]} \right) = \dim_{\mathbb{F}_{q^r}} \left( \sum_{i=0}^{m-1} Z(H^\perp)^{[i]} \right). $$
\end{theorem}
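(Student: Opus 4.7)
The plan is to chain together four of the results already proved in the excerpt and read off the four equalities one by one. The anchor is Equation~(\ref{rank length last grw}), which gives $l_R(C)=\dim(C^*)$, so I reduce everything to computing $\dim(C^*)$ in two ways: via the $q^r$-degree of the generator polynomial of $C^*$ (which yields $n$ minus that degree) and via the $q^r$-degree of its check polynomial (which yields the degree directly). By Proposition~\ref{gen skew closure}, these polynomials are $G^*(x)$ and $H_0(x)$, and by their defining formulas~(\ref{star lin 1}) and~(\ref{zero lin 2}) they are precisely
$$G^*(x)={\rm gcd}(G(x),\theta_1(G(x)),\ldots,\theta_{m-1}(G(x)))$$
and
$$H_0(x)={\rm lcm}(H^\perp(x),\theta_1(H^\perp(x)),\ldots,\theta_{m-1}(H^\perp(x)))^\top,$$
where I have used that $\theta_i(H^\perp)=\theta_i(H)^\perp$ (from Lemma~\ref{perp top}) to pull the $\perp$ inside the $\theta_i$'s.

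For the first two equalities, I would simply observe that the $\top$ operation visibly preserves the $q^r$-degree (compare the highest term in formula~(\ref{dual lin 2})), so $\deg_{q^r}(H_0(x))$ equals the $q^r$-degree of the lcm on its own, and $\dim(C^*)=n-\deg_{q^r}(G^*(x))=\deg_{q^r}(H_0(x))$ gives the first two stated expressions for $l_R(C)$.

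For the latter two equalities, the key is to convert $q^r$-degrees into $\mathbb{F}_{q^r}$-dimensions of root spaces. Since $G^*(x)$ divides $G(x)$, which divides $x^{[rn]}-x$ on the right (and hence conventionally, since $x^{[rn]}-x$ is a central element), $G^*(x)$ itself divides $x^{[rn]}-x$, and therefore splits with simple roots in $\mathbb{F}_{q^{rn}}$. This yields $\dim_{\mathbb{F}_{q^r}}(Z(G^*))=\deg_{q^r}(G^*(x))$, and the identity $Z(G^*)=\bigcap_{i=0}^{m-1}Z(G)^{[i]}$ from the preceding proposition gives the third equality. The same argument applied to $H_0(x)$, combined with the third item of that proposition, produces the fourth equality.

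The only point that requires some care (and is thus the main technical obstacle) is the verification that $\top$ and $\perp$ preserve the $q^r$-degree and commute with $\theta_i$ in the precise way needed, and that $H_0(x)$ really divides $x^{[rn]}-x$ (so its $q^r$-degree equals the $\mathbb{F}_{q^r}$-dimension of its zero space); these are essentially bookkeeping facts already collected in Lemma~\ref{perp top} and in the paragraph preceding definition~(\ref{dual lin 1}), but one must invoke them in the correct order to avoid a circular use of Proposition~\ref{gen skew closure}.
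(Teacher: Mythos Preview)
Your proposal is correct and follows essentially the same approach as the paper. The only organizational difference is that the paper routes the first two equalities through the conventional-polynomial description of $C^*$ (invoking Theorem~\ref{char different lengths}, item~1, to get $l_R(C)=\deg(h^0(x))$, and then Proposition~\ref{associate gen} to convert to $\deg_{q^r}(H_0(x))$), whereas you stay entirely in the linearized framework via Equation~(\ref{rank length last grw}) and Proposition~\ref{gen skew closure}; your path is slightly more direct, and for the last two equalities the two arguments coincide.
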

\begin{proof}
The first two equalities follow from Theorem \ref{char different lengths}, item 1, and Proposition \ref{associate gen}. The next two equalities follow from the previous proposition and the fact that $ \dim_{\mathbb{F}_{q^r}}(Z(G^*)) = \deg_{q^r}(G^*(x)) $ and $ \dim_{\mathbb{F}_{q^r}}(Z(H_0)) = \deg_{q^r}(H_0(x)) $, since they have simple roots. 
\end{proof}

Now we obtain the following characterizations of rank degenerate skew cyclic codes:

\begin{corollary} \label{rank degenerate skew}
The following conditions are equivalent:
\begin{enumerate}
\item
$ C $ is rank degenerate. That is, $ l_R(C) < n $.
\item
$ {\rm gcd}(G(x), \theta_1(G(x)), \ldots, \theta_{m-1}(G(x))) \neq x $.
\item
$ {\rm lcm}(H^\perp(x), \theta_1(H^\perp(x)), \ldots, \theta_{m-1}(H^\perp(x))) \neq x^{[rn]} - x $.
\item
$ \bigcap_{i=0}^{m-1} Z(G)^{[i]} \neq \{ \mathbf{0} \} $.
\item
$ \sum_{i=0}^{m-1} Z(H^\perp)^{[i]} \neq \mathbb{F}_{q^{rn}} $.
\item
$ G(x) $ is divisible on the right by some polynomial $ F(x) \in \mathcal{L}_{q^r} \mathbb{F}_{q}[x] $ in $ \mathcal{L}_{q^r} \mathbb{F}_{q^m}[x] $ with $ \deg_{q^r}(F(x)) > 0 $.
\item
$ H(x) $ divides on the right some polynomial $ F(x) \in \mathcal{L}_{q^r} \mathbb{F}_{q}[x] $ in $ \mathcal{L}_{q^r} \mathbb{F}_{q^m}[x] $ with $ \deg_{q^r}(F(x)) < n $.
\end{enumerate} 
\end{corollary}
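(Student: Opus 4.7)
The plan is to treat the seven conditions in three natural groups: (1)--(5) follow immediately from Theorem \ref{rank degenerate skew cyclic}; the polynomial reformulations (6) and (7) are obtained by upgrading (2) and (3), respectively, using the fact (from the lemma at the start of this section) that the relevant $\gcd$ and $\operatorname{lcm}$ already lie in $\mathcal{L}_{q^r}\mathbb{F}_q[x]$. The condition $l_R(C)<n$ is the common pivot throughout.

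For (1) $\Leftrightarrow$ (2) $\Leftrightarrow$ (3) $\Leftrightarrow$ (4) $\Leftrightarrow$ (5): Theorem \ref{rank degenerate skew cyclic} gives four distinct expressions for $l_R(C)$. The inequality $l_R(C)<n$ is, in turn, (i) $\deg_{q^r}(\gcd_i\theta_i(G))>0$, equivalent to that gcd not being $x$ (which is what (2) says, since every common right-divisor of the $\theta_i(G)$ has $x$ as a right divisor); (ii) $\deg_{q^r}(\operatorname{lcm}_i\theta_i(H^\perp))<n$, equivalent to that lcm not being $x^{[rn]}-x$ (condition (3)); and (iii), (iv) the corresponding dimension statements on root spaces (conditions (4) and (5)). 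I would simply state these four direct translations together in one block.

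For (2) $\Leftrightarrow$ (6) I would argue as follows. The polynomial $G^*(x)=\gcd(G,\theta_1(G),\dots,\theta_{m-1}(G))$ lies in $\mathcal{L}_{q^r}\mathbb{F}_q[x]$, so if $G^*\neq x$, one may take $F=G^*$, proving (2) $\Rightarrow$ (6). Conversely, any $F\in\mathcal{L}_{q^r}\mathbb{F}_q[x]$ of positive $q^r$-degree that right-divides $G$ also right-divides $\theta_i(G)=\theta_i(Q)\otimes F$ for all $i$ (by $\theta_i$-invariance of $F$), hence right-divides $G^*$, forcing $\deg_{q^r}(G^*)\geq\deg_{q^r}(F)>0$. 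The equivalence (3) $\Leftrightarrow$ (7) is proved by the analogous argument applied to $H^0(x)=\operatorname{lcm}(H,\theta_1(H),\dots,\theta_{m-1}(H))\in\mathcal{L}_{q^r}\mathbb{F}_q[x]$: $H$ right-divides some $F\in\mathcal{L}_{q^r}\mathbb{F}_q[x]$ of $q^r$-degree $<n$ if and only if $H^0$ itself has $q^r$-degree $<n$, i.e.\ $H^0\neq x^{[rn]}-x$.

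The subtle step, and the one I would flag as the real obstacle, is that (7) is naturally equivalent to $\deg_{q^r}(H^0)<n$ where $H^0$ is formed from $H$, whereas (3) (and Theorem \ref{rank degenerate skew cyclic}) involves the lcm of the $\theta_i(H^\perp)$. To bridge this, I would apply Theorem \ref{rank degenerate skew cyclic} to the auxiliary $q^r$-cyclic code $D^\perp$, where $D$ is the linear $q^r$-cyclic code with generator polynomial $H$ and check polynomial $G$ (so that $D^\perp$ has generator $G^\perp$ and check $H^\top$ by Lemma \ref{lin check dual comp}); using Lemma \ref{perp top} to rewrite $\theta_i((H^\top)^\perp)$ as $\theta_i(H)$ up to invertible scalars (which do not affect the $q^r$-degree of an lcm), one obtains $l_R(D^\perp)=\deg_{q^r}(\operatorname{lcm}_i\theta_i(H))=\deg_{q^r}(H^0)$. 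Finally, $D$ is a coordinate-reversal of $C^\perp$ (just as in the cyclic case treated in Section \ref{conventional}), so $D^\perp$ is rank equivalent to $C$ and $l_R(D^\perp)=l_R(C)$. Together these yield $\deg_{q^r}(H^0)<n \Leftrightarrow l_R(C)<n$, which closes the chain (7) $\Leftrightarrow$ (1) and completes the proof.
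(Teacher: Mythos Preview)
The paper gives no proof of this corollary at all; it is stated immediately after Theorem~\ref{rank degenerate skew cyclic} as a list of direct reformulations. Your treatment of (1)--(5) and of (2)$\Leftrightarrow$(6) is correct and matches that spirit exactly: these really are one-line translations of the four formulas in the theorem together with the fact that $G^*(x)\in\mathcal{L}_{q^r}\mathbb{F}_q[x]$.

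Where you go beyond the paper is in (7), and you are right to flag it: condition (3) is phrased in terms of $\operatorname{lcm}_i\theta_i(H^\perp)$, whereas the natural rewriting of (7) (via $F\in\mathcal{L}_{q^r}\mathbb{F}_q[x]$ and Frobenius invariance) leads to $\operatorname{lcm}_i\theta_i(H)=H^0$. In the commutative cyclic case of Section~\ref{conventional} this distinction evaporates, but here it does not, and the paper does not comment on it. Your proposed bridge, however, has a gap. The step ``$D$ is a coordinate-reversal of $C^\perp$, so $D^\perp$ is rank equivalent to $C$'' is asserted by analogy with the cyclic proposition in Section~\ref{conventional}, but that proposition is proved there via \cite[Theorem 4.4.9]{pless}, which is specific to (commutative) cyclic codes. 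In the $q^r$-cyclic setting, plain coordinate reversal does \emph{not} send the code with generator $G$ to the code with generator $G^\perp$: compare the rows $\sigma_{r,n}^i(G_0,\dots,G_{n-k},0,\dots,0)$ of a generator matrix of $C$ with those of $D^\perp$ built from $G^\perp$; the Frobenius twists $G_j^{[ir]}$ versus $G_{n-k-j}^{[(i+j)r]}$ do not match under any reversal-type permutation. So the rank equivalence you invoke between $D^\perp$ and $C$ is not available from anything in the paper, and your chain (7)$\Leftrightarrow$(1) is not closed. What does follow easily from $C\subseteq C^*$ (using $G=Q\otimes G^*$ and cancelling $G^*$ in $H\otimes Q\otimes G^*=H_0\otimes G^*$) is that $H$ \emph{left}-divides $H_0\in\mathcal{L}_{q^r}\mathbb{F}_q[x]$; turning this into the right-divisibility asserted in (7) is exactly the missing step, and neither your reversal argument nor the paper supplies it.
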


\section{Attaining the rank length by pseudo-skew cyclic codes}

So far we have tried to find the linear skew cyclic code of smallest length that is rank equivalent to a given one $ C $. We have given upper bounds on that length and seen that a general lower bound is $ l_R(C) $, although it is not clear that this length can be attained by a linear skew cyclic code that is rank equivalent to $ C $.

On the other hand, in this section we will see that the length $ l_R(C) $ is always attained by some linear pseudo-skew cyclic code. As skew cyclic codes, pseudo-skew cyclic codes were introduced in \cite{gabidulin} for $ r=1 $ and $ n=m $, and then independently in \cite{qcyclic} for $ r=1 $ and in \cite{skewcyclic2} for general parameters. They are not invariant by $ q^r $-shifting operators, but have similar conventional-polynomial and linearized-polynomial representations.

We start by defining the well-known pseudo-cyclic codes:

\begin{definition}
Let $ f(x) \in \mathbb{F}_{q^m}[x] $ be of degree $ n $. For a linear code $ C \subseteq \mathbb{F}_{q^m}^n $, we define $ C_{f(x)}(x) $ as the image of $ C $ in $ \mathbb{F}_{q^m}[x] / (f(x)) $ by the linear vector space isomorphism $ \mathbb{F}_{q^m}^n \longrightarrow \mathbb{F}_{q^m}[x] / (f(x)) $ given by
$$ (c_0,c_1, \ldots, c_{n-1}) \mapsto c_0 + c_1x + \cdots + c_{n-1} x^{n-1}. $$
Then we say that $ C $ is pseudo-cyclic if $ C_{f(x)}(x) $ is an ideal in $ \mathbb{F}_{q^m}[x] / (f(x)) $, for some $ f(x) \in \mathbb{F}_{q^m}[x] $ of degree $ n $.
\end{definition}

Fix now a linear cyclic code $ C \subseteq \mathbb{F}_{q^m}^n $, and let the notation be as in Section \ref{conventional}. We have the following:

\begin{theorem} \label{ideal equivalence}
The map $ \phi : \mathbb{F}_{q^m}[x] / (h^0(x)) \longrightarrow (g^*(x)) / (x^n-1) $ given by
$$ \phi(f(x)) = f(x)g^*(x) $$
is well-defined, maps ideals to ideals and constitutes a rank equivalence when seeing its domain and codomain as linear Galois closed spaces.
\end{theorem}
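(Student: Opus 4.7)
The plan is to break the statement into three parts and reduce everything to the key identity $ g^*(x) h^0(x) = x^n - 1 $ (from Proposition \ref{gen and check of closure}) together with the fact that $ g^*(x) \in \mathbb{F}_q[x] $ (Lemma \ref{polynomials}, applicable because $ C^* $ is Galois closed).

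First, for well-definedness and bijectivity: if $ f_1(x) \equiv f_2(x) \pmod{h^0(x)} $, then writing $ f_1(x) - f_2(x) = q(x) h^0(x) $ gives $ (f_1(x) - f_2(x)) g^*(x) = q(x) (x^n - 1) \equiv 0 \pmod{x^n - 1} $, so $ \phi $ descends to a well-defined $ \mathbb{F}_{q^m} $-linear map, and running the same computation in reverse gives injectivity. Both domain and codomain are $ \mathbb{F}_{q^m} $-vector spaces of dimension $ \deg(h^0(x)) $ (for the codomain, the ideal $ (g^*(x))/(x^n-1) $ has dimension $ n - \deg(g^*(x)) = \deg(h^0(x)) $), so $ \phi $ is a vector space isomorphism.

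Second, to see that $ \phi $ maps ideals to ideals it suffices to check that $ \phi $ intertwines multiplication by $ x $, since ideals on either side are exactly those $ \mathbb{F}_{q^m} $-subspaces closed under the corresponding shift. This is immediate from $ \phi(x f(x)) = x f(x) g^*(x) = x \cdot \phi(f(x)) $, with the same compatibility between reductions modulo $ h^0(x) $ and modulo $ x^n - 1 $ as in the first step.

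Third, for the rank equivalence: identify $ \mathbb{F}_{q^m}[x]/(h^0(x)) $ with the full space $ \mathbb{F}_{q^m}^{\deg(h^0(x))} $ via the natural coefficient basis (trivially Galois closed as an ambient space), and $ (g^*(x))/(x^n-1) $ with the Galois closed subspace $ C^* \subseteq \mathbb{F}_{q^m}^n $ (Galois closedness of $ C^* $ is built into its definition; its skew cyclic structure is Lemma \ref{galois cyclic}). Because $ g^*(x) \in \mathbb{F}_q[x] $, in coordinates $ \phi $ is multiplication by the $ \deg(h^0(x)) \times n $ matrix $ A $ over $ \mathbb{F}_q $ whose $ i $-th row is the coefficient vector of $ x^i g^*(x) $ for $ i = 0, 1, \ldots, \deg(h^0(x)) - 1 $ (no reduction modulo $ x^n - 1 $ is needed since these products have degree at most $ n-1 $). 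Taking $ \beta = 1 $ in Lemma \ref{rank equivalences} then certifies $ \phi $ as a rank equivalence. The only point requiring care is the compatibility between the two quotient rings, and this is entirely controlled by the factorization $ g^*(x) h^0(x) = x^n - 1 $; no real obstacle is anticipated.
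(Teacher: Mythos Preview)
Your proposal is correct and follows essentially the same approach as the paper: both arguments hinge on the factorization $g^*(x)h^0(x) = x^n - 1$ for well-definedness and injectivity, on the $\mathbb{F}_{q^m}[x]$-module compatibility $\phi(p(x)f(x)) = p(x)\phi(f(x))$ (which you phrase as intertwining multiplication by $x$) for the ideal-to-ideal claim, and on $g^*(x) \in \mathbb{F}_q[x]$ together with Lemma \ref{rank equivalences} for the rank equivalence. Your version is slightly more explicit in writing down the matrix $A$ and the dimension count, but there is no substantive difference.
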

\begin{proof}
First of all, it is well-defined since $ h^0(x) g^*(x) = x^n - 1 $. It is linear since it preserves additions and $ \phi(p(x)f(x)) = p(x) \phi(f(x)) $, for all $ p(x), f(x) \in \mathbb{F}_{q^m}[x] $. For the same reason it maps ideals to ideals.

On the other hand, if $ f(x)g^*(x) = 0 $ in the quotient $ (g^*(x)) / (x^n-1) $, then $ f(x)g^*(x) = p(x)(x^n-1) $ for some polynomial $ p(x) \in \mathbb{F}_{q^m}[x] $, which implies that $ f(x) = p(x)h^0(x) $. Therefore, $ \phi $ is one to one. Since it is obviously onto, we conclude that it is a vector space isomorphism.

Finally, since $ g^*(x) \in \mathbb{F}_q[x] $, we see that $ \phi $ maps polynomials of degree less that $ k $ with coefficients in $ \mathbb{F}_q $ to polynomials with coefficients in $ \mathbb{F}_q $, and hence it is a rank equivalence by Lemma \ref{rank equivalences}.
\end{proof}

Therefore the following consequence follows immediately:

\begin{corollary}
The length $ l_R(C) $ is attained by a linear pseudo-cyclic code that is an ideal in the quotient ring $ \mathbb{F}_{q^m}[x] / (h^0(x)) $. 
\end{corollary}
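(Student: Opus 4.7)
The plan is that this corollary is essentially an immediate harvest of Theorem \ref{ideal equivalence}, so there is nothing substantial to prove beyond unwinding the setup. Since the cyclic code $ C $ satisfies $ C \subseteq C^* $, the corresponding ideal $ C(x) $ in $ \mathbb{F}_{q^m}[x]/(x^n-1) $ is contained in $ (g^*(x))/(x^n-1) $. My strategy is simply to take the preimage
$$ C^\prime = \phi^{-1}(C(x)) \subseteq \mathbb{F}_{q^m}[x]/(h^0(x)) $$
under the rank equivalence $ \phi $ supplied by Theorem \ref{ideal equivalence} and to verify that this $ C^\prime $ does the job.

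The first step is to check that $ C^\prime $ is an ideal of $ \mathbb{F}_{q^m}[x]/(h^0(x)) $. This uses the formula $ \phi(p(x)f(x)) = p(x) f(x) g^*(x) = p(x) \phi(f(x)) $ from the proof of Theorem \ref{ideal equivalence}: if $ f(x) \in C^\prime $, i.e.\ $ \phi(f(x)) \in C(x) $, and $ p(x) \in \mathbb{F}_{q^m}[x]/(h^0(x)) $, then $ \phi(p(x)f(x)) = p(x)\phi(f(x)) \in C(x) $ because $ C(x) $ is an ideal, so $ p(x)f(x) \in C^\prime $. Hence $ C^\prime $ is an ideal in the quotient ring $ \mathbb{F}_{q^m}[x]/(h^0(x)) $ and, via the coefficient-vector identification, is a linear pseudo-cyclic code of length $ \deg(h^0(x)) $.

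The second step is to identify the length. By Theorem \ref{char different lengths}, item 1, we have $ l_R(C) = \deg(h^0(x)) $, so $ C^\prime $ has precisely length $ l_R(C) $. The third step is to check rank equivalence: Theorem \ref{ideal equivalence} tells us that $ \phi $ is a rank equivalence between the two ambient linear Galois closed spaces $ \mathbb{F}_{q^m}[x]/(h^0(x)) $ and $ (g^*(x))/(x^n-1) \cong C^* $, and by construction it restricts to a bijection $ C^\prime \to C(x) $. By the definition of rank equivalence between codes (via containing Galois closed spaces with a rank equivalence between them), $ C^\prime $ is rank equivalent to $ C $, and this completes the proof.

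Since every ingredient has already been established, I expect no genuine obstacle; the only very mild thing to be careful about is that $ C $ itself is not asserted to be Galois closed, so the rank equivalence between $ C $ and $ C^\prime $ is witnessed by the rank equivalence of the surrounding Galois closed spaces rather than of $ C $ and $ C^\prime $ themselves, exactly as allowed by the definition introduced in Section 2.
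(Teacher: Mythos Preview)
Your proposal is correct and is precisely the unwinding the paper intends: the paper gives no separate proof of the corollary, stating only that it ``follows immediately'' from Theorem \ref{ideal equivalence}, and your three steps (pulling $C(x)$ back through $\phi$, reading off the length via Theorem \ref{char different lengths}, and invoking the definition of rank equivalence through the ambient Galois closed spaces) are exactly that immediate deduction.
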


\begin{remark}
If $ h^0(x) = x^e -1 $, for some positive integer $ e $, then the pseudo-cyclic code in the previous corollary is actually cyclic. This also follows from Remark \ref{attaining}. \\
\end{remark}

In a completely analogous way, we may state similar results for the general skew cyclic case. However, in this case we may only construct quotient rings with two-sided ideals, since the ring of linearized polynomials is not commutative. For that purpose, we consider the the center of $ \mathcal{L}_{q^r} \mathbb{F}_{q^m}[x] $, denoted by $ \mathbb{C}(\mathcal{L}_{q^r} \mathbb{F}_{q^m}[x]) $ and defined as the set of $ q^r $-polynomials over $ \mathbb{F}_{q^m} $ that commute with every other $ q^r $-polynomial over $ \mathbb{F}_{q^m} $. It is well-known that
$$ \mathbb{C}(\mathcal{L}_{q^r} \mathbb{F}_{q^m}[x]) = \mathcal{L}_{q^l} \mathbb{F}_{q^d}[x], $$
where $ l = {\rm lcm}(m,r) $ and $ d = {\rm gcd}(m,r) $. We may now proceed exactly as in the conventional case:

\begin{definition}
Let $ F(x) \in \mathbb{C}(\mathcal{L}_{q^r} \mathbb{F}_{q^m}[x]) $ such that $ \deg_{q^r}(F(x)) = n $. For a linear code $ C \subseteq \mathbb{F}_{q^m}^n $, we define $ C_{F(x)}(x) $ as the image of $ C $ in $ \mathcal{L}_{q^r} \mathbb{F}_{q^m}[x] / (F(x)) $ by the linear vector space isomorphism $ \mathbb{F}_{q^m}^n \longrightarrow \mathcal{L}_{q^r} \mathbb{F}_{q^m}[x] / (F(x)) $ given by
$$ (c_0,c_1, \ldots, c_{n-1}) \mapsto c_0x + c_1x^{[r]} + \cdots + c_{n-1} x^{[(n-1)r]}. $$
Then we say that $ C $ is pseudo-skew cyclic (of order $ r $) if $ C_{F(x)}(x) $ is a left ideal in $ \mathcal{L}_{q^r} \mathbb{F}_{q^m}[x] / (F(x)) $, for some $ F(x) \in \mathbb{C}(\mathcal{L}_{q^r} \mathbb{F}_{q^m}[x]) $ such that $ \deg_{q^r}(F(x)) = n $.
\end{definition}

Fix now a linear $ q^r $-cyclic code $ C \subseteq \mathbb{F}_{q^m}^n $, and let the notation be as in Section \ref{linearized}. We have the following:

\begin{theorem}
Assume that $ H_0(x) $ is central. Then the map $ \phi : \mathcal{L}_{q^r} \mathbb{F}_{q^m}[x] / (H_0(x)) \longrightarrow (G^*(x)) / (x^{[rn]} - x) $ given by
$$ \phi(F(x)) = F(x) \otimes G^*(x) $$
is well-defined, maps left ideals to left ideals and constitutes a rank equivalence when seeing its domain and codomain as linear Galois closed spaces.
\end{theorem}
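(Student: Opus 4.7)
The plan is to follow closely the proof of Theorem \ref{ideal equivalence}, replacing conventional multiplication by the composition product $\otimes$ in the Ore ring $\mathcal{L}_{q^r} \mathbb{F}_{q^m}[x]$, and using the identities $H_0(x) \otimes G^*(x) = G^*(x) \otimes H_0(x) = x^{[rn]} - x$ from Proposition \ref{gen skew closure}. The centrality hypothesis on $H_0(x)$ makes $(H_0(x))$ a two-sided ideal, so that $\mathcal{L}_{q^r} \mathbb{F}_{q^m}[x] / (H_0(x))$ inherits a genuine ring structure and the notion of \emph{left ideal} in it is sensible.

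First I would check well-definedness, linearity, and the left-ideal-preserving property together. Well-definedness is immediate: if $F(x) = P(x) \otimes H_0(x)$, then $F(x) \otimes G^*(x) = P(x) \otimes (x^{[rn]} - x) \equiv 0$ modulo $x^{[rn]} - x$. Linearity holds because composition is additive and $\mathbb{F}_{q^m}$-linear in its first argument, so $(a F_1 + b F_2) \otimes G^* = a(F_1 \otimes G^*) + b(F_2 \otimes G^*)$ for all $a, b \in \mathbb{F}_{q^m}$. The left-ideal property follows from associativity, since $P \otimes (F \otimes G^*) = (P \otimes F) \otimes G^* = \phi(P \otimes F)$.

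Next I would establish bijectivity and the rank property. For injectivity, $F \otimes G^* \equiv 0 \pmod{x^{[rn]} - x}$ gives $F \otimes G^* = P \otimes H_0 \otimes G^*$ for some $P$, and because $\mathcal{L}_{q^r} \mathbb{F}_{q^m}[x]$ is a non-commutative Euclidean domain and $G^*(x) \neq 0$, right-cancellation yields $F = P \otimes H_0$, whence $F \equiv 0$ in the domain. For surjectivity, any element of $(G^*(x))/(x^{[rn]} - x)$ has the form $Q \otimes G^*$; performing right-Euclidean division $Q = P \otimes H_0 + R$ with $\deg_{q^r}(R) < \deg_{q^r}(H_0)$, we obtain $Q \otimes G^* \equiv R \otimes G^* = \phi(R)$. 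For the rank-equivalence conclusion I would apply Lemma \ref{rank equivalences} with $\beta = 1$: since $G^*(x)$ and $H_0(x)$ both lie in $\mathcal{L}_{q^r} \mathbb{F}_q[x]$, the reductions modulo $H_0$ of $x, x^{[r]}, \ldots, x^{[(e-1)r]}$ (where $e = \deg_{q^r}(H_0)$) form an $\mathbb{F}_q$-rational basis of the domain, and composition with $G^*$ on the right sends each of them into $\mathcal{L}_{q^r} \mathbb{F}_q[x]$; hence the matrix representing $\phi$ has entries in $\mathbb{F}_q$, which is exactly the criterion of Lemma \ref{rank equivalences}.

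The main obstacle is purely bookkeeping in the non-commutative Ore setting: keeping track of which side divisions and cancellations occur on, and recognising that centrality of $H_0$ is precisely what is needed so that the two-sided ideal $(H_0)$ turns the quotient into a ring. All of the essential algebra mirrors the commutative Theorem \ref{ideal equivalence} once the identity $H_0(x) \otimes G^*(x) = x^{[rn]} - x$ is in hand and right-Euclidean division in $\mathcal{L}_{q^r} \mathbb{F}_{q^m}[x]$ is invoked in place of ordinary polynomial division.
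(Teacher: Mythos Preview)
Your proposal is correct and follows exactly the route the paper takes: the paper's proof simply reads ``Analogous to that of Theorem \ref{ideal equivalence}, taking into account the non-commutativity of the symbolic product of linearized polynomials,'' and what you have written is precisely that analogy spelled out in full, with the right-Euclidean structure of $\mathcal{L}_{q^r}\mathbb{F}_{q^m}[x]$ doing the work of ordinary polynomial division and the identity $H_0\otimes G^* = G^*\otimes H_0 = x^{[rn]}-x$ replacing $h^0 g^* = x^n-1$.
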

\begin{proof}
Analogous to that of Theorem \ref{ideal equivalence}, taking into account the non-commutativity of the symbolic product of linearized polynomials.
\end{proof}

Hence the following consequence follows immediately:

\begin{corollary}
If $ H_0(x) $ is central, then the length $ l_R(C) $ is attained by a linear pseudo-skew cyclic code that is a left ideal in the quotient ring $ \mathcal{L}_{q^r} \mathbb{F}_{q^m}[x] / (H_0(x)) $. 
\end{corollary}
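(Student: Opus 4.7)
The plan is to recognize that the preceding theorem already does almost all of the work; one only needs to identify the preimage of $ C $ under the rank equivalence and check that it is a left ideal.

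First, I would note that by Proposition \ref{gen skew closure}, $ G^*(x) $ is the generator polynomial of $ C^* $ and $ H_0(x) $ is its check polynomial, so the right-hand side $ (G^*(x))/(x^{[rn]}-x) $ appearing in the preceding theorem is precisely (the polynomial description of) $ C^* $. Moreover, the left-hand side $ \mathcal{L}_{q^r}\mathbb{F}_{q^m}[x]/(H_0(x)) $ is well-defined as a ring since $ H_0(x) $ is central, and as an $ \mathbb{F}_{q^m} $-vector space it has dimension $ \deg_{q^r}(H_0(x)) $, i.e., it is the ambient space for codes of length $ \deg_{q^r}(H_0(x)) $. Since $ H_0(x) $ is the check polynomial of $ C^* $, one has $ \deg_{q^r}(H_0(x)) = \dim(C^*) $, which by (\ref{rank length last grw}) and Proposition \ref{inequalities} equals $ l_R(C^*) = l_R(C) $.

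Next, since $ C \subseteq C^* $, I would consider $ \phi^{-1}(C) \subseteq \mathcal{L}_{q^r}\mathbb{F}_{q^m}[x]/(H_0(x)) $. It is $ \mathbb{F}_{q^m} $-linear and rank equivalent to $ C $ via the restriction of $ \phi $, since $ \phi $ is a rank equivalence of the Galois closed spaces containing $ \phi^{-1}(C) $ and $ C $ respectively. It remains to verify that $ \phi^{-1}(C) $ is a left ideal in the quotient ring: for $ F(x) \in \phi^{-1}(C) $ and any $ P(x) \in \mathcal{L}_{q^r}\mathbb{F}_{q^m}[x] $, associativity of the symbolic product gives
$$ \phi(P(x) \otimes F(x)) = P(x) \otimes F(x) \otimes G^*(x) = P(x) \otimes \phi(F(x)) \in C, $$
where the last containment uses that $ C $ is a left ideal in $ \mathcal{L}_{q^r}\mathbb{F}_{q^m}[x]/(x^{[rn]} - x) $ because $ C $ is $ q^r $-cyclic. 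Hence $ \phi^{-1}(C) $ is a left ideal, i.e., a linear pseudo-skew cyclic code in the sense of the definition above.

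Therefore $ \phi^{-1}(C) $ is a linear pseudo-skew cyclic code of length $ l_R(C) $ that is rank equivalent to $ C $, which proves the corollary. The only substantive step is the left-ideal check, which is essentially formal once $ \phi $ is viewed as right-composition with $ G^*(x) $; I do not anticipate a genuine obstacle, and the content of the statement lies in the preceding theorem together with the dimension computation for $ \deg_{q^r}(H_0(x)) $.
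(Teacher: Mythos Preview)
Your proposal is correct and is precisely the unpacking of what the paper means by ``the following consequence follows immediately'': the paper gives no explicit argument for the corollary, and your identification of $(G^*(x))/(x^{[rn]}-x)$ with $C^*$, the dimension count $\deg_{q^r}(H_0(x))=\dim(C^*)=l_R(C)$, and the verification that $\phi^{-1}(C)$ is a left ideal are exactly the details one must supply. The one point worth noting is that the theorem as stated only asserts that $\phi$ sends left ideals to left ideals, not that $\phi^{-1}$ does; your direct check via $\phi(P\otimes F)=P\otimes\phi(F)$ closes this gap cleanly and is the right way to do it.
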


\section*{Acknowledgement}

The author wishes to thank Olav Geil and Diego Ruano for fruitful discussions and careful reading of the manuscript. The author also gratefully acknowledges the support from The Danish Council for Independent Research (Grant No. DFF-4002-00367).

\bibliographystyle{plain}

\end{document}